\newenvironment{claimx}{\medskip\noindent\textbf{Claim.~}}{}
\newcommand{\remove}[1]{}
\newtheorem{observation}{Observation}
\newcommand{\NP}{$\mathcal{NP}$\xspace}
\renewcommand{\NP}{$\mathcal{NP}$}
\newcommand{\NPC}{\mbox{\NP-complete}\xspace}
\newcommand{\NPH}{\mbox{\NP-hard}\xspace}
\newcommand{\NorthEast}{NE\xspace}
\newcommand{\NorthWest}{NW\xspace}
\newcommand{\SouthEast}{SE\xspace}
\newcommand{\SouthWest}{SW\xspace}
\definecolor{blue}{rgb}{0.274,0.392,0.666}
\definecolor{red}{rgb}{0.627,0.117,0.156}
\definecolor{green}{rgb}{0,0.588,0.509}
\newcommand{\qc}{q-con\-strained\xspace}
\newcommand{\qconstrained}{quadrant-con\-strained\xspace}
\newcommand{\scalednwarrow}{\ensuremath{\text{\scalebox{0.5}{$\nwarrow$}}}}
\newcommand{\scalednearrow}{\ensuremath{\text{\scalebox{0.5}{$\nearrow$}}}}
\newcommand{\scaledswarrow}{\ensuremath{\text{\scalebox{0.5}{$\swarrow$}}}}
\newcommand{\scaledsearrow}{\ensuremath{\text{\scalebox{0.5}{$\searrow$}}}}
\renewcommand{\angle}[1]{\ensuremath{\langle #1 \rangle}\xspace}
\newcommand{\scaledsymbol}[1]{\ensuremath{\text{\scalebox{0.7}{$#1$}}}}
\newcommand{\Scalednwarrow}{\scaledsymbol{\nwarrow}}
\newcommand{\Scalednearrow}{\scaledsymbol{\nearrow}}
\newcommand{\Scaledswarrow}{\scaledsymbol{\swarrow}}
\newcommand{\Scaledsearrow}{\scaledsymbol{\searrow}}
\newcommand{\sets}{\{\ensuremath{\Scalednearrow, \Scalednwarrow, \Scaledswarrow, \Scaledsearrow}\}}
\def\Nur(#1){\ensuremath{\overset{\scalednearrow}{#1}}}
\def\Nul(#1){\ensuremath{\overset{\scalednwarrow}{#1}}}
\def\Ndr(#1){\ensuremath{\overset{\scaledsearrow}{#1}}}
\def\Ndl(#1){\ensuremath{\overset{\scaledswarrow}{#1}}}
\def\Npar(#1,#2){\ensuremath{\overset{\scaledsymbol{#2}}{#1}}}
\def\capNur(#1){\ensuremath{\overset{\protect\scalednearrow}{#1}}}
\def\capNul(#1){\ensuremath{\overset{\protect\scalednwarrow}{#1}}}
\def\capNdr(#1){\ensuremath{\overset{\protect\scaledsearrow}{#1}}}
\def\capNdl(#1){\ensuremath{\overset{\protect\scaledswarrow}{#1}}}
\def\capNpar(#1,#2){\ensuremath{\overset{\protect\scaledsymbol{#2}}{#1}}}
\newcommand{\ur}{\ensuremath{\Scalednearrow}}
\newcommand{\ul}{\ensuremath{\Scalednwarrow}}
\newcommand{\dr}{\ensuremath{\Scaledsearrow}}
\newcommand{\dl}{\ensuremath{\Scaledswarrow}}
\newcommand{\capur}{\protect\ur}
\newcommand{\WN}{\ensuremath{\mathrm{N}}}
\newcommand{\WE}{\ensuremath{\mathrm{E}}}
\newcommand{\WS}{\ensuremath{\mathrm{S}}}
\newcommand{\WW}{\ensuremath{\mathrm{W}}}
\begin{document}

\title{{Windrose Planarity:}\\
{Embedding Graphs with Direction-Constrained Edges}%
\thanks{This work was initiated at the Bertinoro
Workshop on Graph Drawing 2015, which was supported by the European Science
Foundation as part of the EuroGIGA collaborative research program (Graphs in
Geometry and Algorithms). A preliminary version of this work
was presented at the Symposium on Discrete Algorithms (SODA 2016) in
San Francisco~\cite{paddkrr-wp-16}.
Angelini was partially supported by DFG grant Ka812/17-1.
Da Lozzo and Di Battista were partially supported 
  by MIUR Project ``MODE'' under PRIN 20157EFM5C and 
  by H2020-MSCA-RISE project 734922 – ``CONNECT''.
This work was also supported in part by the MIUR-DAAD Joint Mobility Program: N$^\circ$ 34120 and N$^\circ$ 57397196.\xspace}
}

\author{Patrizio Angelini
\thanks{Wilhelm-Schickard-Institut f\"ur Informatik, Universit\"at T\"ubingen, Germany, \href{mailto:angelini@informatik.uni-tuebingen.de}{\em angelini@informatik.uni-tuebingen.de}.
}
\and 
Giordano {Da Lozzo}
\thanks{Department of Engineering, Roma Tre University, Italy. 
\href{mailto:dalozzo@dia.uniroma3.it}{\em dalozzo@dia.uniroma3.it}, 
\href{mailto:gdb@dia.uniroma3.it}{\em gdb@dia.uniroma3.it}, and
\href{mailto:didonato@dia.uniroma3.it}{\em didonato@dia.uniroma3.it}.
}
\and
Giuseppe {Di Battista}
\footnotemark[3]
\and
Valentino {Di Donato}
\footnotemark[3]
\and
Philipp Kindermann
\thanks{David R. Cheriton School of Computer Science, University of Waterloo, Canada.
\href{mailto:pkinderm@uwaterloo.ca}{\em pkinderm@uwaterloo.ca}.
}
\and
G\"unter Rote\thanks{Institut f\"ur Informatik, Freie Universit\"at Berlin, Germany.
\href{mailto:rote@inf.fu-berlin.de}{\em rote@inf.fu-berlin.de}.
}
\and
Ignaz Rutter
\thanks{Department of Computer Science and Mathematics, University of Passau, Germany.
\href{mailto:rutter@fim.uni-passau.de}{\em rutter@fim.uni-passau.de}.
}
}

\date{}
\maketitle

\begin{abstract} 
      Given a planar graph $G$ and a partition of the neighbors of each vertex $v$ in four sets $\Nur(v)$, $\Nul(v)$, $\Ndl(v)$, and $\Ndr(v)$, the problem {\sc Windrose Planarity} asks to decide whether $G$ admits a \emph{windrose-planar drawing}, that is, a planar drawing in which 
      \begin{inparaenum}[(i)]
      \item each neighbor $u \in \Nur(v)$ is above and to the right of $v$, 
      \item each neighbor $u \in \Nul(v)$ is above and to the left of $v$, 
      \item each neighbor $u \in \Ndl(v)$ is below and to the left of $v$, 
      \item each neighbor $u \in \Ndr(v)$ is below and to the right of $v$, and
      \item edges are represented by curves that are monotone with respect to each axis.
      \end{inparaenum}
      By exploiting both the horizontal and the vertical relationship among vertices, windrose-planar drawings allow to simultaneously visualize two partial orders defined by means of the edges of the graph.

      Although the problem is \NPH in the general case, we give a
      polynomial-time algorithm for testing whether there exists a
      windrose-planar drawing that respects a given combinatorial embedding.
This algorithm is based on a
      characterization of the plane triangulations admitting a
      windrose-planar drawing. Furthermore, for any embedded graph
      with $n$ vertices that has a windrose-planar drawing, we can construct one with at most one bend per edge and with at most $2n-5$ bends in total, which lies on the $3n \times 3n$ grid. The latter result contrasts with the fact that straight-line windrose-planar drawings may require exponential area.
\end{abstract}

\maketitle

\section{Introduction} \label{se:introduction}

Planarity is among the most studied topics in Graph Algorithms and Graph Theory. A great body of literature is devoted to the study of constrained notions of planarity. Classical examples are clustered planarity~\cite{addfpr-rccp-15,cdfk-atcpefcg-14,FengCE95}, in which vertices are constrained into prescribed regions of the plane called clusters, 
level planarity~\cite{addfr-tibp-tcs-15,jlm-lptlt-98}, in which vertices are assigned to horizontal lines,
strip planarity~\cite{addf-gd-13}, in which vertices have to lie inside parallel strips of the plane, and upward planarity. 
A directed acyclic graph is \emph{upward-planar} if it admits a planar drawing in which, for each directed edge $(u,v)$, vertex $u$ lies below $v$ and $(u,v)$ is represented by a $y$-monotone curve. 
Intuitively, edges ``flow'' from South to North. 
While testing upward planarity is in general \NPH~\cite{gt-ccurp-01}, the case in which a combinatorial embedding of the graph is prescribed can be tested in polynomial time~\cite{bdlm-udtg-94}.

\begin{figure}[t]
      \centering
      \subfigure[]{\includegraphics[page=1]{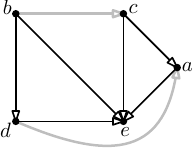}\label{fig:example-1}}
      \hfil
      \subfigure[]{\includegraphics[page=2]{img/example}\label{fig:example-2}}
      \caption{(a) A graph with specified quadrant directions for each edge (black edges are directed bottom-right, gray edges are directed top-right),
      (b) a windrose-planar drawing.}
      \label{fig:example}
\end{figure}

We introduce and study \emph{windrose planarity}, a notion of planarity that naturally generalizes upward planarity.
While upward 
planarity considers only two cardinal points (North and South), windrose planarity 
distinguishes four directions (\NorthEast, \NorthWest, \SouthWest, and 
\SouthEast), which we denote by the symbols $\Scalednearrow, \Scalednwarrow, 
\Scaledswarrow$, and~$\Scaledsearrow$, respectively.
More formally, let $G$ be a graph and suppose that, for each vertex~$v$ of~$G$,
its neighbors are partitioned into four (possibly empty) sets $\Nur(v)$, 
$\Nul(v)$, $\Ndl(v)$, and~$\Ndr(v)$. A drawing of $G$ is \emph{windrose-planar} 
if it is planar, edges are drawn as curves that are both $x$- and $y$-monotone, and, for each vertex $v$, the vertices in $\Nur(v)$, $\Nul(v)$, $
\Ndl(v)$, and~$\Ndr(v)$ lie \NorthEast, \NorthWest, \SouthWest, and \SouthEast of $v$, respectively; see Fig.~\ref{fig:example} for an illustration. The problem {\sc Windrose Planarity} takes as input a graph $G$ with a partition of its vertices into sets $\Nur(v)$, $\Nul(v)$, $\Ndl(v)$, and~$\Ndr(v)$, and asks to test the existence of a windrose-planar drawing.
Note that, in order to obtain a windrose-planar drawing of the graph in Fig.~\ref{fig:example}, the embedding has to be changed. However, even windrose-planar drawings with the same embedding might be very different; see Fig.~\ref{fig:windrose-ambiguous} for two windrose-planar drawings of the same cycle.
While upward-planar drawings represent \emph{one} partial order, windrose-planar drawings can be used to represent \emph{two} independent partial orders, given through the same edges.

The study of this problem is also motivated by its close relation with
{\sc Bi-Monotonicity}, introduced by Fulek, Pelsmajer, Schaefer and
\v{S}tefankovi\v{c}~\cite{fpss-htmd-11} and recently proved \NPC by Klemz and 
Rote~\cite{kr-olp-17}. This problem is similar to {\sc Windrose Planarity}, but 
the relative positions are specified for all pairs of vertices, not only for 
adjacent vertices.  A similar notion for grid drawings has been
studied previously under the name of Manhattan-geodesic
drawings~\cite{dgklr-hoap-12,kkrw-mgepg-09}.
Di Giacomo, Didimo, Kaufmann, Liotta, and Montecchiani~\cite{ddklm-urpd-14} 
studied the problem of constructing a planar
drawing of a directed graph where each edge can be drawn either
$x$-monotone or $y$-monotone. While {\sc Windrose Planarity} is a
generalization of upward planarity, this problem is a relaxation of
it. Di Battista,
Kim, Liotta, Lubiw, and Whitesides~\cite{dkllw-tsoctd-12} investigated the problem of computing a crossing-free 3D orthogonal drawing of a cycle whose edges have each been assigned a desired direction (East, West, North, South, Up, or Down).

Our contributions are as follows. Given that {\sc Windrose Planarity}
generalizes {\sc Upward Planarity}, we get as an immediate consequence that
testing {\sc Windrose Planarity} is \NPH
and that straight-line windrose-planar drawings
may require exponential area (Section~\ref{se:upward}). Hence, we study plane graphs, that is, planar graphs with a fixed combinatorial embedding.
Our main contribution is to provide a polynomial-time testing algorithm for this case (Sections~\ref{se:triangulated} and~\ref{sec:necessary}). We remark that, since every triconnected planar graph has a unique combinatorial embedding (up to a choice of the outer face), our result implies a polynomial-time testing algorithm also for these graphs.

The algorithm is based on the following main concepts. Let $G$ be a plane graph. First, we introduce (Section~\ref{sec:angle-categories}) a \emph{labeling} with angle categories $0^\circ$, $90^\circ$, $180^\circ$, $270^\circ$, and $360^\circ$ of the pairs of edges that are consecutive on a face of $G$; a labeling is \emph{angular} if labels around a vertex sum up to $360^\circ$ and for each internal (external) facial cycle of length $k$ the sum of the labels respects the formula for the angle sum of a $k$-gon, i.e., $k \cdot 180^\circ - 360^\circ$ ($k \cdot 180^\circ + 360^\circ$).
Second, we show that if $G$ is a triangulation, then the constraints on the relative positions of the adjacent vertices of $G$ naturally determine a unique labeling. We prove that if such a labeling is angular, then $G$ is windrose-planar and admits a $1$-bend windrose-planar drawing on the $3n \times 3n$ grid.
The proof is based on an augmentation technique that transforms the triangulation in such a way that its angular labeling only has $0^\circ$ and $90^\circ$ labels.
Third, we deal with general plane graphs. In this case, it is no
longer true that the constraints on the relative positions of the
adjacent vertices of~$G$ determine a unique labeling. We show how to
find an angular labeling if it exists, by solving a flow problem in a
planar network.
Fourth, we show that a plane graph with an angular labeling can be augmented to a triangulation with an angular labeling. Finally, we directly get a characterization of the plane graphs admitting a windrose-planar drawing based on the above arguments.

We also investigate the question whether
a windrose-planar graph admits a straight-line windrose-planar
drawing; observe that this is always true for upward-planar
graphs~\cite{dt-aprad-88}. Even though we do not answer this question in
its entirety, we present three interesting related results.
 First, we give an algorithm to construct 
 windrose-planar drawings of windrose-planar graphs with at most one bend per edge, and whose vertices and bends lie on a polynomial-size grid
 (Section~\ref{se:triangulated}); we remark that straight-line drawings of windrose-planar graphs may require exponential area (Section~\ref{se:preliminaries} and~\ref{se:straight-line}).
Second, we provide an algorithm to compute straight-line windrose-planar drawings for a notable class of graphs (Section~\ref{se:straight-line}). Third, we answer in the negative an open question by Fulek et al.~\cite{fpss-htmd-11} about the straight-line realizability of bi-monotone drawings (Section~\ref{se:straight-line}). 

\section{Preliminaries} \label{se:preliminaries}

A planar \emph{drawing} of a graph~$G$ maps the vertices of~$G$ to 
distinct points in the plane and the edges 
of~$G$ to simple interior-disjoint Jordan curves between their endpoints. 
A planar drawing~$\Gamma$ induces a \emph{combinatorial embedding}, or \emph{planar embedding}, which is the class of topologically equivalent drawings. In particular, an 
embedding specifies the regions of the plane, called \emph{faces}, whose boundary 
consists of a cyclic sequence of edges. The unbounded face is called the 
\emph{outer face}, the other faces are called \emph{internal faces}.  Vertices and edges incident to the outer face are called \emph{external} vertices and edges, respectively.  The other vertices and edges are \emph{internal}.
A \emph{plane graph} is a graph together with a combinatorial embedding, which also prescribes an outer face.
A vertex of a plane graph $G$ is \emph{external} if it is incident to the outer face of $G$ and \emph{internal} otherwise. A plane graph $G$ is \emph{(internally) triangulated} if all (internal) faces are $3$-cycles. A \emph{triangulation} is a triangulated plane graph.
In this paper, we only consider \emph{simple} graphs, that is, graphs without self-loops or multi-edges.

\begin{figure}[t]
\centering
\includegraphics[page=1]{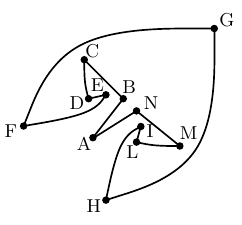}\label{fig:windrose-ambiguous-a}
\hfil
\includegraphics[page=2]{img/windrose-ambiguous}\label{fig:windrose-ambiguous-b}
\caption{Two windrose-planar drawings of the same cycle.  Note that for each vertex the set of neighbors in each of the four quadrants is the same in the two drawings.}
\label{fig:windrose-ambiguous}
\end{figure}

A \emph{separating $k$-set} of a graph is a set of $k$ vertices whose removal
increases the number of connected components.  A vertex constituting a
separating $1$-set is also called \emph{cutvertex}.  A graph is
$k$-connected if it has at least $k+1$ vertices and does not have any separating $(k-1)$-set.  Graphs
that are $2$-connected and $3$-connected are also called \emph{biconnected} and
\emph{triconnected}, respectively.  A \emph{block} of a graph $G$ is a maximal
subgraph that is biconnected.

A \emph{$k$-subdivision} $G'$ of a (plane) graph $G$ is a (plane)
graph obtained from $G$ by replacing each edge~$(u,v)$ with a path
between $u$ and $v$ containing at most $k$ intermediate vertices.
If the path replacing $(u,v)$ has at least one intermediate vertex, we call the 
edges of the path \emph{subdivision edges} and its intermediate vertices 
\emph{subdivision vertices}. An \emph{augmentation} $G'$ of a (plane) graph $G$ 
is a (plane) graph obtained from $G$ by adding some dummy vertices and edges.

\emph{Planar $3$-trees}, also known as \emph{stacked triangulations} or \emph{Apollonian graphs}, are special types of planar triangulations which can be generated from a triangle by a sequential addition of vertices of degree $3$ inside faces. 

A directed graph is \emph{acyclic} if it does not contain any directed cycle. 

An $x$-\emph{monotone} ($y$-\emph{monotone}) curve  is 
a curve that intersects every vertical line (every horizontal line) in at most one point.
An $xy$-\emph{monotone} curve is both $x$-monotone and $y$-monotone.

We say that a drawing is a \emph{$k$-bend drawing on a grid} if each
edge is represented by a polyline with at most $k$ bends, and all
vertices and bends of the edges lie on the points of an integer
grid. 
The \emph{area} of a drawing is the area of its bounding box, assuming that the minimum distance between any vertex or bend point and any other vertex or bend point is at least~$1$. In particular, since in any grid drawing vertices and bends lie on the integer grid, the area of a grid drawing
corresponds to the number of grid cells in its bounding box. 

The \emph{polar angle} of a vector~$\vec v$ is the clockwise angle from~$\vec v$ 
to the $x$-axis $(1,0)^\top$. The polar angle of a line segment~$s$ at one of
its end points~$u$ is the polar angle of the vector defined by~$s$ starting at~$u$.
In a polyline drawing of a graph, the polar angle of an edge~$e$ at a vertex~$v$ 
is the polar angle of the line segment of~$e$ incident to~$v$. The
\emph{geometric angle} between two edges~$e$ and~$e'$ at~$v$ is the 
counterclockwise angle between the line segment of~$e$ incident to~$v$ and the line segment of~$e'$
incident to~$v$ at~$v$.

\subsection{Windrose planarity}

A \emph{\qconstrained graph} (\emph{\qc graph}) is a pair $(G,Q)$
where $G$ is an undirected planar graph and $Q$ contains \emph{q-constraints}, that is,
a partition of the neighbors of each vertex $v$ into four sets $\Nur(v)$, $\Nul(v)$, $\Ndl(v)$, and~$\Ndr(v)$.  
We define the $\ur$-, $\ul$-, $\dl$-, and $\dr$-\emph{quadrant} of a
vertex $v$ in a drawing as the first, second, third, and fourth
quadrant around the point where $v$ lies, respectively.
A \emph{windrose-planar} drawing of a \qc graph $(G,Q)$ is a planar drawing of~$G$ such
that each edge $(u,v)$ is drawn as an $xy$-monotone curve and~$u$
lies in the $\circ$-quadrant of $v$, if $u \in \Npar (v,\circ)$.
We say that $(G,Q)$ is a \emph{windrose-planar} graph if it admits a windrose-planar drawing.
We assume throughout that the relative assignment of adjacent vertices
is \emph{consistent} in the following sense: for each edge $(u,v)$, we have $v
\in \Nul(u) \Leftrightarrow u \in \Ndr(v)$, and $v \in \Nur(u)
\Leftrightarrow  u \in \Ndl(v)$.
Whenever we add an edge $(u,v)$ to a \qc graph and assign $v$
to one of the four quadrants of $u$, we will implicitly assume that $u$ is added
to the appropriate quadrant of $v$ to maintain consistency.
The problem {\sc Windrose Planarity} asks whether a given \qc graph is windrose-planar; see Fig.~\ref{fig:bi-monotone-a} for a negative example.
Clearly, a disconnected \qc graph is windrose-planar if and only if all its connected components are. Hence, in the remainder of the paper we assume the graphs to be connected.

\begin{observation}\label{obs:planarity}
In any embedding corresponding to a windrose-planar drawing of a \qc
graph $(G,Q)$, for each vertex~$u$ of~$G$, the neighbors of~$u$ appear around~$u$ in this clockwise order starting at $0^\circ$:  first the vertices in~$\Ndr(u)$, then the vertices
in~$\Ndl(u)$, then the vertices in~$\Nul(u)$, and then the vertices 
in~$\Nur(u)$.
\end{observation}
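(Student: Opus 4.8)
The plan is to fix a windrose-planar drawing of $(G,Q)$ and a vertex $u$, place $u$ at the origin, and analyze the angular positions at which the incident edges leave $u$, showing that these positions are grouped by quadrant in the claimed clockwise cyclic order. First I would establish the following localization fact: for every neighbor $v$ with $v \in \Npar(u,\circ)$, the \emph{entire} edge $(u,v)$ is contained in the closed quadrant $\circ$ of $u$. This is immediate from $xy$-monotonicity together with the strict position of $v$: if, say, $v\in\Nur(u)$, then $x_v>x_u$ and $y_v>y_u$, and since the $x$-coordinate along the curve is monotone between $x_u$ and $x_v$ it never drops below $x_u$, and likewise the $y$-coordinate never drops below $y_u$; hence the edge stays in the closed north-east quarter-plane $\{x\ge x_u,\ y\ge y_u\}$. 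The three other cases are symmetric.

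Next I would pass to a sufficiently small circle $C$ centered at $u$ and consider, for each incident edge $e$, the first point at which $e$ meets $C$; the clockwise order of these points on $C$ realizes the rotation of $G$ at $u$. By the localization fact, the first-meeting point of an edge toward a neighbor in $\Nur(u)$ has angular coordinate in $[0^\circ,90^\circ]$, and analogously the sectors associated with $\Nul(u)$, $\Ndl(u)$, and $\Ndr(u)$ are $[90^\circ,180^\circ]$, $[180^\circ,270^\circ]$, and $[270^\circ,360^\circ]$. Listed in clockwise (decreasing-angle) cyclic order, these four closed sectors appear as $\Nul(u)$, $\Nur(u)$, $\Ndr(u)$, $\Ndl(u)$, and consecutive sectors overlap only in a single axis direction. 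This already yields the desired grouping for every edge that leaves $u$ strictly inside a quadrant.

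The one point requiring care---and the main obstacle---is an edge whose first-meeting point lies exactly on an axis (for instance, a north-east edge that leaves $u$ due north or due east, both of which are compatible with $xy$-monotonicity). Here I would invoke planarity: since the edges are interior-disjoint and pairwise share only the endpoint $u$, distinct edges meet $C$ in distinct points, so at most one edge can depart along any fixed axis direction. Consequently a boundary edge is the unique edge at that axis direction and can be assigned consistently to the appropriate end of its quadrant's group without disturbing the relative order of any other pair; e.g.\ a north-east edge leaving due north sits immediately clockwise-after all edges to $\Nul(u)$ and clockwise-before every strictly-interior edge to $\Nur(u)$, exactly as required. Combining the localization fact, the clockwise arrangement of the four sectors, and this boundary argument shows that around $u$ the neighbors occur in clockwise order: first the vertices in $\Nul(u)$, then those in $\Nur(u)$, then those in $\Ndr(u)$, and finally those in $\Ndl(u)$.
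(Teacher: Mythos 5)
The paper states this as an Observation without giving any proof, so there is nothing to compare against; your argument is the natural formalization and it is correct. The localization step (that $xy$-monotonicity confines each edge $(u,v)$ to the closed quadrant of $u$ containing $v$, so the first intersections with a small circle around $u$ fall into four sectors arranged clockwise as NW, NE, SE, SW) is exactly the right key fact, and your careful handling of axis-aligned departures is in fact superfluous under the paper's definition of $x$-monotone (a curve meeting every vertical line in at most one point), which rules out vertical and horizontal segments and hence forces each edge into the \emph{open} quadrant away from $u$.
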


Let $(G,Q)$ be a \qc plane graph with planar embedding $\mathcal{E}$.
The \emph{leftmost} (the \emph{rightmost}) \emph{neighbor} of a vertex $v$ of $G$ 
in $\Npar(v,\circ)$, with $\circ \in \sets$, is the neighbor 
$u \in \Npar(v,\circ)$, such that there exists no vertex 
$u' \in \Npar(v,\circ)$ that precedes (follows) $u$ in the clockwise order 
described in Observation~\ref{obs:planarity} of 
the neighbors around $v$ in $\mathcal{E}$. 
Note that such a neighbor might not exist.
We denote by~$G^\uparrow$ the graph obtained from $G$ by directing its 
edges from~$u$ to~$v$ if $v \in \Nul(u) \cup \Nur(u)$. Similarly, we denote
by~$G^\rightarrow$ the graph obtained from~$G$ by directing its edges 
from~$u$ to~$v$ if $v \in \Nur(u) \cup \Ndr(u)$. 

\subsection{Relationship with Upward Planarity}\label{se:upward}

{\sc Windrose Planarity} has a close relationship with {\sc Upward Planarity}~\cite{bdlm-udtg-94}, which is defined as follows. Let $D$ be a directed graph. An \emph{upward-planar} drawing of $D$ is a planar drawing in which each directed edge $(u,v)$ is drawn as a $y$-monotone curve such that vertex~$u$ lies below vertex $v$. The {\sc Upward Planarity} problem asks whether $D$ admits an upward-planar drawing.
The first relationship concerns graphs $G^\uparrow$ and $G^\rightarrow$. We say that a \qc graph $(G,Q)$ is \emph{bi-acyclic} if $G^\uparrow$ and $G^\rightarrow$ are acyclic. 
Recall that upward-planar graphs are acyclic, hence any \qc windrose-planar graph is bi-acyclic. It is easy to see that the upward planarity of $G^\uparrow$ and $G^\rightarrow$ is a necessary condition for the windrose planarity of $(G,Q)$.
On the other hand, this condition is not sufficient, as shown in Fig.~\ref{fig:bi-monotone}.
\begin{figure}[t]
   \centering
   \subfigure[]{
   \includegraphics[page=1]{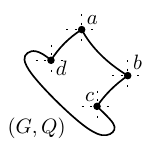}\label{fig:bi-monotone-a}
   }\hfil
   \subfigure[]{
   \includegraphics[page=2]{img/bi-monotone}\label{fig:bi-monotone-b}
   }\hfil
   \subfigure[]{
   \includegraphics[page=3]{img/bi-monotone}\label{fig:bi-monotone-c}
   }
   \caption{(a) Illustration of a \qc plane graph $(G,Q)$ that is not windrose-planar, indicating the desired quadrant for each edge.
  (b--c)~Upward-planar drawings of the plane graphs $G^\uparrow$ and $G^\rightarrow$.}
  \label{fig:bi-monotone}
\end{figure}
An even stronger relationship is that {\sc Windrose Planarity} is a generalization of {\sc Upward Planarity}. Namely, an instance of this latter problem can be translated into one of {\sc Windrose Planarity} by just placing
the outgoing neighbors and the incoming neighbors of a vertex $v$ into sets $\Nur(v)$ and $\Ndl(v)$, respectively. The two instances are then equivalent: If we have an upward-planar
drawing, we can assume that the edges are straight-line segments~\cite{dt-aprad-88}; then, we can make all slopes larger than $1$ in absolute value by scaling the $x$-axis; and finally perform a rotation by~$45^\circ$ to make the drawing windrose-planar. The other direction is trivial, as any windrose-planar drawing of $(G,Q)$ is also an upward-planar drawing of~$G^\uparrow$.
Since {\sc Upward Planarity} is \NPC~\cite{gt-ccurp-01} and since there are directed graphs requiring exponential area in any straight-line upward-planar drawing~\cite{dtt-arsdud-92}, the following negative results are immediate consequences.

\begin{theorem}
\label{th: np-complete}
  The problem {\sc Windrose Planarity} is \NPC.
\end{theorem}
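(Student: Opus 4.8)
The plan is to establish the two halves of \NPCN separately: membership in \NP and \NPHN. The hardness is essentially the immediate consequence already flagged in the discussion above, so I would make the reduction from upward planarity fully explicit and verify its correctness carefully. The more delicate point, and the one I expect to be the main obstacle, will be exhibiting a polynomially bounded certificate for membership in \NP, precisely because windrose-planar drawings may inherently require exponential area and so vertex coordinates cannot serve as a certificate directly.

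For hardness, I would reduce from upward planarity, which is \NPC~\cite{gt-ccurp-01}. Given a directed graph $D$, I build the \qc graph $(G,Q)$ whose underlying undirected graph is that of $D$ and whose q-constraints place, for every vertex $v$, each out-neighbor of $v$ into $\Nur(v)$ and each in-neighbor into $\Ndl(v)$; by construction $G^\uparrow=D$. I then check both directions of the equivalence. If $D$ is upward-planar, I invoke the straight-line upward-planarity theorem~\cite{dt-aprad-88} to obtain a straight-line upward-planar drawing, scale the $x$-axis so that every edge slope exceeds $1$ in absolute value, and finally rotate by $\pi/4$; the resulting drawing satisfies the quadrant and $xy$-monotonicity constraints, hence is windrose-planar for $(G,Q)$. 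Conversely, any windrose-planar drawing of $(G,Q)$ draws each edge $(u,v)$ with $v$ in the $\ur$-quadrant of $u$, i.e.\ as a $y$-monotone curve with $u$ below $v$, so it is directly an upward-planar drawing of $G^\uparrow=D$. As the reduction is computable in polynomial time, \NPHN follows.

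For membership in \NP, I would avoid guessing a geometric realization and instead use a purely combinatorial certificate: a planar embedding $\mathcal{E}$ of $G$ together with a choice of outer face. Such a certificate has size $O(n)$, and by the polynomial-time fixed-embedding testing algorithm that is the main algorithmic contribution of this paper, one can decide whether a given embedding admits a windrose-planar drawing. Thus $(G,Q)$ is windrose-planar if and only if some embedding passes this test, which a nondeterministic machine can guess and then verify in polynomial time; this mirrors exactly how membership in \NP is obtained for upward planarity via the fixed-embedding result of~\cite{bdlm-udtg-94}. Combining the two halves yields that {\sc Windrose Planarity} is \NPC.
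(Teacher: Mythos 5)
Your proof is correct and follows essentially the same route as the paper: the paper establishes hardness via exactly the reduction you describe (out-neighbors into $\Nur(v)$, in-neighbors into $\Ndl(v)$, with the scale-and-rotate argument for one direction and the trivial observation for the other) and states the theorem as an immediate consequence. Your explicit \NP-membership argument via guessing an embedding and invoking the fixed-embedding test is a sound way to supply the half the paper leaves implicit.
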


\begin{theorem}
\label{th:area-lowerbound}
There exists an infinite family of \qc graphs $(G_n,Q_n)$ on~$n$ vertices such that any straight-line windrose-planar drawing of~$(G_n,Q_n)$ has area $\Omega(2^{n/2})$, under any \mbox{resolution rule.}
\end{theorem}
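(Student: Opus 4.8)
The plan is to reduce the statement to the exponential-area lower bound that is already known for upward-planar straight-line drawings, exploiting the translation from upward planarity to {\sc Windrose Planarity} described above. Let $\{D_n\}$ be the infinite family of upward-planar digraphs on $n$ vertices for which every upward-planar straight-line drawing requires area $\Omega(2^{n/2})$~\cite{dtt-arsdud-92}. From each $D_n$ I construct a \qc graph $(G_n,Q_n)$ exactly as in that translation: $G_n$ is the underlying undirected graph of $D_n$, and for every vertex $v$ I place its outgoing neighbors into $\Nur(v)$ and its incoming neighbors into $\Ndl(v)$, leaving the other two quadrants empty. This assignment is consistent, and since $D_n$ is upward-planar, $(G_n,Q_n)$ is windrose-planar; hence it is a legitimate member of the family whose drawings we want to bound.

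The core observation is that, because $Q_n$ uses only the $\Scalednearrow$- and $\Scaledswarrow$-quadrants, any straight-line windrose-planar drawing $\Gamma$ of $(G_n,Q_n)$ is, without any modification, an upward-planar straight-line drawing of $D_n = G_n^\uparrow$. Indeed, each edge $(u,v)$ with $v\in\Nur(u)$ is drawn as a straight segment with $v$ above and to the right of $u$, hence as a $y$-monotone curve with $u$ below $v$, which is precisely the upward condition for the directed edge $u\to v$. Thus $\Gamma$, regarded as a windrose drawing and regarded as an upward drawing, consists of the very same points and the very same segments, with no coordinate transformation in between. I would then argue by contradiction: a straight-line windrose-planar drawing of $(G_n,Q_n)$ of sub-exponential area would be a straight-line upward drawing of $D_n$ of sub-exponential area, contradicting~\cite{dtt-arsdud-92}.

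The only delicate point, and the one I would treat carefully, is matching the area measures so that the exponential gap is genuinely preserved ``under any resolution rule.'' This is where the triviality of the backward direction pays off: unlike the \emph{existence} equivalence, which used an axis scaling together with a rotation by $\pi/4$, the correspondence here is realized by the identity map on the plane. Consequently any resolution measure that depends only on the geometry of the placed vertices and edges---the ratio of the longest to the shortest edge, the minimum vertex distance, or the bounding-box area on a grid---takes exactly the same value on $\Gamma$ and on its upward reinterpretation. The hard part is therefore not an obstacle at all once this identity is noted: every resolution rule is preserved without even a constant-factor loss, so the $\Omega(2^{n/2})$ bound transfers verbatim, establishing the theorem.
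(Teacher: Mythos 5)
Your proposal is correct and is precisely the argument the paper intends: the paper derives Theorem~\ref{th:area-lowerbound} as an immediate consequence of the upward-planarity area lower bound of~\cite{dtt-arsdud-92} together with the observation that, under the translation placing outgoing neighbors in $\Nur(v)$ and incoming ones in $\Ndl(v)$, any windrose-planar drawing of $(G_n,Q_n)$ is already an upward-planar drawing of $G_n^\uparrow$. Your added care about the identity map preserving every resolution measure exactly is a welcome explicit justification of what the paper leaves implicit.
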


In Section~\ref{se:straight-line}, we strengthen Theorem~\ref{th:area-lowerbound} by exploiting all four sets of neighbors to give a \linebreak \qc $n$-vertex graph requiring $\Omega(4^{n/3})$ area in any straight-line windrose-planar drawing.

\section{Q-Constraints and angle categories}\label{sec:angle-categories}

In this section we develop an alternative angle-based description of \qc
graphs that is sometimes more useful than the quadrant-based view.

Consider a plane graph $G$. We call \emph{angle} an incidence between a vertex $v$ and a face $f$ in $G$. 
Every angle is bounded by two edges~$e$ and $e'$ incident to $v$, 
where $e$ precedes $e'$ in the clockwise circular order of the edges incident to $v$; 
we denote the angle by $\angle{e,e'}$ and say that $\angle{e,e'} \in f$. 
Since~$G$ is simple, if $e=e'$, then $v$ has degree~$1$. In this case, vertex $v$ has just one angle $\angle{e,e}$ that is bounded by the same edge from both sides. 
If a vertex $v$ has only one angle in $f$ (e.g., if the graph is $2$-connected),
then we also denote the angle by $\angle{v,f}$.

We consider five different \emph{angle categories}: $0^\circ$, $90^\circ$,
$180^\circ$, $270^\circ$, and $360^\circ$.  A \emph{labeling} $A$ assigns
to each angle of a plane graph one of these angle categories; see Fig.\ref{fig:angular-drawing-a}. 
A \emph{labeled graph} $(G,A)$ is a pair where $G$ is a plane graph and $A$ is a labeling of its angles.
An \emph{angular drawing} of $(G,A)$ is an $xy$-monotone drawing of $G$ such that 
\begin{inparaenum}[(i)]
\item every edge
starts and ends with a line segment whose slope is close to $1$
or~$-1$,
i.e., the polar angle 
 differs by at most $1^\circ$ from 
 $\pm 45^\circ$ or $\pm 135^\circ$, and
\item the geometric angle $\alpha$ between two consecutive 
 edges~$e$
  and~$e'$ incident to a vertex is close to the angle category of the corresponding angle
  $\angle{e,e'}$, i.e.,
  $|\alpha-A(\angle{e,e'})| < 2^\circ$.
\end{inparaenum}
See Fig.~\ref{fig:angular-drawing-b}.
Note that the choice of differing by~$1^\circ$ is arbitrary, we could define angular drawings
with any $\varepsilon^\circ$ for $0<\varepsilon<22.5$.

\begin{figure}[b!]
\centering
  \subfigure[]{
    \includegraphics[page=1]{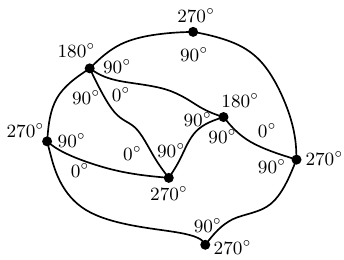}\label{fig:angular-drawing-a}
  }\hfil
  \subfigure[]{
    \includegraphics[page=2]{img/angular-drawing}\label{fig:angular-drawing-b}
  }
  \caption{(a) Windrose-planar drawing $\Gamma$ of a \qc plane graph $(G,Q)$. Angle categories of the labeling $A$ corresponding to~$\Gamma$ are shown. (b)~Angular drawing $\Gamma'$ of $(G,A)$; note that $\Gamma'$ is also a windrose-planar drawing of $(G,Q)$.} 
  \label{fig:angular-drawing}
\end{figure}

\newcommand{\lemmalabelingnecessary}{
Let $(G,A)$ be a labeled graph.  If $(G,A)$ admits an angular drawing, then $A$ satisfies the following conditions.
  \begin{compactenum}[(a)]
  \item \emph{Vertex Condition:} The sum of the incident angle categories is $360^\circ$ for every vertex.
  \item \emph{Cycle Condition:} For every internal face $f$ of length $k$, 
the sum of the
    angles is $k \cdot 180^\circ - 360^\circ$\textup; for
 the
    outer face $f$, the sum of its angle categories is $k \cdot 180^\circ +
    360^\circ$.
\end{compactenum}
}

\begin{lemma}
  \label{lem:angular-labeling-necessary}
  \lemmalabelingnecessary
\end{lemma}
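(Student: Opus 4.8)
The plan is to prove both conditions by analyzing the geometry of an angular drawing and relating the angle categories to the actual geometric angles they approximate.

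First I would establish the \emph{vertex condition}. Fix a vertex $v$ and consider its incident angles in the angular drawing. The geometric angles around $v$ partition the full turn, so they sum to exactly $360^\circ$. By property (ii) of the definition of an angular drawing, each geometric angle $\alpha$ satisfies $|\alpha - A(\angle{e,e'})| < 2^\circ$, so the sum of the angle categories differs from $360^\circ$ by strictly less than $2^\circ \cdot \deg(v)$. The key observation is that each angle category is a multiple of $90^\circ$, hence their sum is also a multiple of $90^\circ$; since it lies within a bounded distance of $360^\circ$, it must equal $360^\circ$ exactly, provided the degree is small enough that $2^\circ\cdot\deg(v) \le 45^\circ$. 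To handle vertices of arbitrary degree, I would not sum the raw error bounds but instead argue more carefully: the sum of the geometric angles is \emph{exactly} $360^\circ$, and I want the sum of categories to equal this. Here the cleaner route is to exploit property (i): the first and last segments at each angle have slopes close to $\pm45^\circ$ or $\pm135^\circ$, so each geometric angle is itself within $2^\circ$ of a multiple of $90^\circ$, and in fact the \emph{category} is precisely the nearest multiple of $90^\circ$ to that geometric angle. Rounding each geometric angle to its nearest multiple of $90^\circ$ and summing must give a multiple of $90^\circ$ that rounds $360^\circ$, which is $360^\circ$ itself. This rounding argument avoids the degree-dependent accumulation.

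Next I would prove the \emph{cycle condition}, which follows from the same rounding principle combined with the classical formula for the sum of interior (or exterior) angles of a polygon. For a face $f$ of length $k$, trace its facial walk in the drawing; this is a closed polygonal curve. The sum of the \emph{geometric} interior angles of a simple $k$-gon is $(k-2)\cdot 180^\circ = k\cdot 180^\circ - 360^\circ$ for an internal face (traversed so the face lies to one side), and the sum of the corresponding angles when the unbounded region is the interior gives $k\cdot 180^\circ + 360^\circ$ for the outer face, the sign difference reflecting the reversed turning number of the boundary. Since each angle category is the nearest multiple of $90^\circ$ to its geometric angle, summing the categories over the face rounds the exact geometric sum; because the target values $k\cdot 180^\circ \mp 360^\circ$ are themselves multiples of $90^\circ$ (indeed of $180^\circ$), the rounded sum must hit the target exactly.

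\textbf{The main obstacle} I anticipate is making the rounding argument fully rigorous when the face boundary is not a simple polygon but a closed walk that may repeat vertices (as happens with cutvertices or bridges), and more subtly, ensuring that the sum of geometric angles truly equals the stated polygon formula for $xy$-monotone polyline boundaries rather than only for straight-sided polygons. The angles $\angle{e,e'}$ are measured between the first/last segments of the incident edges, not between straight chords, so I would need to verify that the \emph{turning-angle} (exterior-angle) version of the polygon theorem applies: the total turning along a simple closed curve is $\pm 360^\circ$, and at each vertex the turn equals $180^\circ$ minus the interior angle, which is exactly what property (i) guarantees is well-approximated by $180^\circ$ minus the category. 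Summing the turning relation over the $k$ corners and equating to $\pm 360^\circ$ yields the cycle condition directly, and this formulation sidesteps the need for a straight-sided polygon. I expect the degenerate-face and turning-number bookkeeping to be where the care is needed; the two congruence-plus-boundedness arguments for the vertex and cycle conditions are otherwise routine.
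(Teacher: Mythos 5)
Your overall strategy is the right one, but the step you lean on to close both conditions --- ``rounding each geometric angle to its nearest multiple of $90^\circ$ and summing must give a multiple of $90^\circ$ that rounds $360^\circ$'' --- is not a valid inference, and your claim that it ``avoids the degree-dependent accumulation'' is exactly where the gap sits. If each geometric angle $\alpha_i$ equals its category $A_i$ plus an error $\epsilon_i$ with $|\epsilon_i|<2^\circ$, then $\sum A_i$ is a multiple of $90^\circ$ differing from the exact geometric total by $\sum\epsilon_i$, and boundedness of the individual errors only forces $\sum A_i$ to equal the target when $2^\circ\cdot\deg(v)<90^\circ$, i.e.\ for degree at most $45$ (and likewise for faces of length at most $45$). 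Nothing in your argument rules out the errors all having the same sign, so for high-degree vertices and long faces the proof does not close. The missing idea is that the errors are not merely small but cancel \emph{exactly}: writing the direction of each edge-segment at $v$ as its reference diagonal ($\pm45^\circ$ or $\pm135^\circ$) plus a deviation $\delta$, the error in the angle $\angle{e,e'}$ is the \emph{difference} of the deviations of $e$ and $e'$, so summing around a vertex (or along a facial walk) the deviations telescope and the error sum is identically zero, independent of degree. This is precisely the ``correction term'' device the paper's proof introduces, and it is what turns your congruence-plus-boundedness sketch into an actual proof.

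A secondary, smaller issue: for the Cycle condition you correctly note that edges are polylines, but the turning-angle formulation you propose must also account for the turning at the interior bends of each edge, not only at the $k$ corners of the face. The paper handles this by subdividing at every bend, observing that each subdivision adds one vertex carrying a $180^\circ$ angle and increases the face length by one, so the identity $\sum A = k\cdot180^\circ\mp360^\circ$ is invariant under the operation; you would need an equivalent bookkeeping step before invoking the polygon (or turning-number) formula.
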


\begin{proof}
  Let $\Gamma$ be an angular drawing of $(G,A)$, which we assume to be a polygonal drawing (i.e., each edge is a polyline) such that the two edge-segments incident to each bend lie inside two opposite quadrants with respect to the bend. This assumption is without loss of generality, since any angular drawing is $xy$-monotone and thus a polyline drawing with this property can be always constructed by adding a sufficiently large number of bends along the $xy$-monotone curves representing the edges.
  
  \begin{figure}[t]
    \centering
    \includegraphics[page=1]{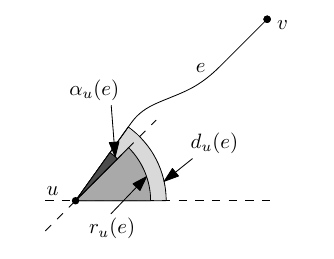}
    \caption{Illustrations for the definitions of direction, reference direction, and correction term.}
    \label{fig:correction-term}
  \end{figure}

  Let $e=(u,v)$ be an edge of $G$; refer to Fig.~\ref{fig:correction-term}. The \emph{direction}~$d_u(e)$ of $e$ at $u$ is the polar angle of $e$ at $u$, where $d_u(e) \in (-180^\circ,+180^\circ)$.
  The \emph{reference direction}~$r_u(e)$ of $e$ at $u$ is the polar angle of the diagonal line in the center of the angular range corresponding to the quadrant of $u$ containing $v$,  where $r_u(e) \in \{\pm 45^\circ, \pm 135^\circ\}$. 
  The \emph{correction term}~$\alpha_u(e)$ of $e$ at $u$ equals $d_u(e)-r_u(e)$. Note that $d_u(e)=180^\circ+d_v(e)$ and $r_u(e)=180^\circ+r_v(e)$, hence~$\alpha_u(e)=\alpha_v(e)$ and we simply write $\alpha(e)$.

  Consider an angle $\angle{e,e'}$ at a vertex $v$. We denote by $\beta(\angle{e,e'})$ the geometric angle of $e$ and $e'$ at $v$. By the definition of the correction terms we have the following: $\beta(\angle{e,e'}) = A(\angle{e,e'})-\alpha(e)+\alpha(e')$.
  
  Consider the edges $e_1,e_2,\dots,e_k$ in this circular order around vertex $v$, and let $e_{k+1}=e_{1}$. Then, $360^\circ=\sum_{i=1}^k{\beta(\angle{e_i,e_{i+1}})}=\sum_{i=1}^k{A(\angle{e_i,e_{i+1}})}$ since the correction terms cancel. Hence, the Vertex Condition holds. 

Consider the edges $e_1,e_2,\dots,e_k$ in this circular order along face $f$, and let $e_{k+1}=e_{1}$. We add a subdivision vertex at each bend of an edge incident to $f$. Note that this operation increases $k$; however, since
both the two edge-segments incident to each bend lie inside two opposite quadrants with respect to the bend, by assumption,
the internal angle of $f$ at each subdivision vertex has category $180^\circ$, and so the total effect neutralizes.  
Thus, we now have a $k$-gon (possibly with a larger $k$ than before) whose 
edges $e_i$ are straight-line segments. By the formula for the angle sum of a 
$k$-gon, we have $k \cdot 180^\circ - 360^\circ
=\sum_{i=1}^k{\beta(\angle{e_i,e_{i+1}})}
=\sum_{i=1}^k{A(\angle{e_i,e_{i+1}})}$, where the latter equality holds 
since the correction terms cancel out. The proof for the outer face is 
analogous. Hence, the Cycle Condition holds.
\end{proof}

A labeling satisfying the conditions of
Lemma~\ref{lem:angular-labeling-necessary} is called an \emph{angular labeling}.  
We now establish a connection between angular labelings and q-constraints.
Given a labeled graph $(G,A)$, the relative quadrants of the adjacent vertices with respect to each other in any angular drawing $\Gamma$ of $(G,A)$ are unique up to a rotation of $\Gamma$ by a multiple of $90^\circ$.
Hence, starting from~$A$, we can uniquely define q-constraints $Q_A$ for $G$ that preserve the circular order of their quadrants.
Namely, consider an angle $\angle{e,e'}$ bounded by edges $e=(v,u)$ and $e'=(v,w)$.
Assume, without loss of generality up to a rotation by a multiple of~$90^\circ$, that $u \in \Nur(v)$. 
If either $A(\angle{e,e'})=0^\circ$ or $A(\angle{e,e'})=360^\circ$, then $u$ and $w$ lie in the same quadrant of $v$ in any angular drawing of $G$ respecting~$A$; we represent this fact by setting $w \in \Nur(v)$; 
if $A(\angle{e,e'})=90^\circ$, then $w \in \Ndr(v)$; 
if $A(\angle{e,e'})=180^\circ$, then $w \in \Ndl(v)$; 
and if $A(\angle{e,e'})=270^\circ$, then $w \in \Nul(v)$. 
We formalize these concepts in the following observation.

\begin{observation}\label{obs:labeling-to-constraints}
An angular labeling $A$ defines a unique (up to cyclically shifting the quadrants) set $Q_A$ of q-constraints. Any angular drawing of~$(G,A)$ is a windrose-planar drawing of $(G,Q_A)$ (after a possible rotation by a multiple of $90^\circ$).
\end{observation}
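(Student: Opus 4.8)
The plan is to realize $Q_A$ as the unique solution (up to a global shift) of a system of difference constraints over the cyclic group $\mathbb{Z}/4\mathbb{Z}$ of the four quadrants, and to recognize the Vertex and Cycle conditions as exactly the integrability conditions that make this system consistent. First I would introduce, for every incidence of a vertex $v$ with an incident edge $e$, an unknown $q(v,e)\in\mathbb{Z}/4\mathbb{Z}$ recording which quadrant of $v$ contains the other endpoint of $e$, identifying $0,1,2,3$ with $\ur,\dr,\dl,\ul$ in clockwise order. I would then impose two families of constraints: around a vertex, for the angle $\angle{e,e'}$ I require $q(v,e')-q(v,e)\equiv A(\angle{e,e'})/90^\circ$, since each $90^\circ$ of the label advances one quadrant clockwise (with $0^\circ$ and $360^\circ$ both meaning ``same quadrant''); and across an edge $e=(u,v)$ I require $q(v,e)-q(u,e)\equiv 2$, encoding the antipodal relation $\ur\leftrightarrow\dl$, $\ul\leftrightarrow\dr$ that $xy$-monotonicity forces on the two ends of an edge. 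Reading off the quadrants from any solution yields q-constraints, and the across-edge constraint is precisely the consistency required of a \qc graph.

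The crux is to show that this system is solvable and that its solution is unique up to adding a global constant. A difference system over an abelian group is consistent exactly when the constraints sum to the identity around every cycle of the constraint graph, and on a connected constraint graph the solution is then unique up to a global shift. I would check the cycle sums on a generating set, which in a connected plane graph is furnished by the vertex stars and the facial cycles. Around a vertex the advances sum to $\sum A(\angle{e_i,e_{i+1}})/90^\circ = 360^\circ/90^\circ = 4 \equiv 0$ by the Vertex condition. Around a facial cycle of length $k$ the across-edge constraints contribute $2k$ while the incident angles contribute $\sum A/90^\circ$, equal to $2k-4$ for an internal face and $2k+4$ for the outer face by the Cycle condition; in either case the total is divisible by $4$ and hence vanishes in $\mathbb{Z}/4\mathbb{Z}$. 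The global shift that remains is exactly the claimed cyclic shift of the quadrants, i.e.\ a rotation by a multiple of $90^\circ$. The hard part will be this gluing step: justifying that vertex stars and faces generate the cycle space of the constraint graph, and checking that, with the correct signs along a face traversal, the two conditions make the cycle sums multiples of $4$.

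For the second statement, I would take any angular drawing $\Gamma$ of $(G,A)$. It is by definition an $xy$-monotone drawing of the plane graph $G$, so it is planar and draws every edge as an $xy$-monotone curve, meeting all conditions of a windrose-planar drawing except possibly the quadrant placement. I would argue that the placement is also correct: by definition the first segment of each edge $e=(v,u)$ at $v$ has polar angle within $1^\circ$ of one of $\pm45^\circ,\pm135^\circ$, so it enters a single open quadrant of $v$, and since $e$ is $xy$-monotone its endpoint $u$ lies in that same quadrant. Because each geometric angle differs from its category by less than $2^\circ$, the quadrants read off from the slopes advance around every vertex exactly as the around-vertex constraints prescribe and respect the antipodal relation across every edge; they therefore solve the same difference system. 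By the uniqueness established above they coincide with $Q_A$ up to a global $90^\circ$ shift, so after a suitable rotation by a multiple of $90^\circ$ the drawing $\Gamma$ places every neighbor in its $Q_A$-quadrant and is a windrose-planar drawing of $(G,Q_A)$.
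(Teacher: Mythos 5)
Your proposal is correct and rests on the same construction as the paper: quadrants are propagated clockwise around each vertex by $A(\angle{e,e'})/90^\circ$ steps and antipodally across each edge, which is exactly how the paper defines $Q_A$ in the paragraph preceding the observation. The paper states the result without a formal argument, whereas your difference-system formulation over $\mathbb{Z}/4\mathbb{Z}$ makes explicit the well-definedness check the paper leaves implicit --- namely that the Vertex and Cycle conditions are precisely the integrability conditions (cycle sums $\equiv 0 \bmod 4$ on vertex stars and faces, which generate the cycle space of the planar constraint graph) ensuring the propagation is consistent and unique up to a global shift.
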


Conversely, for a \qc graph $(G,Q)$ there may exist different angular 
labelings~$A$ and~$A'$ such that $Q_A=Q_{A'}=Q$. Suppose we are given a 
windrose-planar drawing~$\Gamma$ of a \qc graph $(G,Q)$.  Assume that the 
first and the last segment of each edge has slope close to~$1$ or to~$-1$; 
see Fig.~\ref{fig:angular-drawing-b}.  Now, all geometric angles at the 
vertices are  close to one of the angles in
 $\{0^\circ, 90^\circ, 180^\circ, 270^\circ, $ $360^\circ\}$ in~$\Gamma$.  This 
determines a unique angular labeling~$A_\Gamma$ of~$G$.  
However, the angular labeling $A_\Gamma$ depends on $\Gamma$, not only on $Q$. 
In fact, while angle categories of~$90^\circ, 180^\circ$, and~$270^\circ$ are 
uniquely defined by the q-constraints, the assignment of $0^\circ$ and $360^\circ$ 
angle categories is not unique. This is the case precisely for those vertices 
that have degree at least~$2$ and whose neighbors all lie in the same 
quadrant. We call such vertices \emph{ambiguous}. Vertices $A,C,E,F,G,H,I,M$ of Fig.~\ref{fig:windrose-ambiguous} are ambiguous, while the remaining vertices are not.
A \emph{large-angle assignment} $L$ assigns to each ambiguous vertex one of its incident angles.
The q-constraints $Q$ together with a large-angle assignment $L$ uniquely determine a labeling~$A_{Q,L}$~of~$G$.

\begin{observation}\label{obs:windrose-to-labeling}
Any windrose-planar drawing of $(G,Q)$ in which the first and the last segment of each edge has slope close to $1$ or $-1$ and the geometric angles close to $360^\circ$ comply with the large-angle assignment $L$ is an angular drawing of $(G,A_{Q,L})$.
\end{observation}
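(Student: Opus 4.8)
The plan is to verify directly that the given windrose-planar drawing $\Gamma$ satisfies the two defining conditions of an angular drawing of $(G,A_{Q,L})$. Condition~(i), that the first and last segment of every edge has slope close to $1$ or $-1$, is exactly part of the hypothesis, so nothing is left to prove there. The whole content lies in condition~(ii): that for every angle $\angle{e,e'}$ the geometric angle $\beta(\angle{e,e'})$ realized in $\Gamma$ differs from $A_{Q,L}(\angle{e,e'})$ by less than $2^\circ$. Since $A_{Q,L}(\angle{e,e'})$ is by construction one of the five categories, this is equivalent to showing that the labeling $A_\Gamma$ obtained by rounding every geometric angle of $\Gamma$ to its nearest category coincides with $A_{Q,L}$, together with the quantitative fact that each geometric angle is indeed within $2^\circ$ of the category it rounds to.

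First I would record the consequence of the slope hypothesis. Because the first segment of each edge $(v,u)$ has polar angle within $1^\circ$ of the diagonal at the center of the quadrant of $v$ that contains $u$ in $\Gamma$ — which, as $\Gamma$ is windrose-planar for $(G,Q)$, is exactly the quadrant prescribed by $Q$ — the direction of every edge at every endpoint lies within $1^\circ$ of one of the four diagonals. Hence for any two edges $e,e'$ consecutive around a vertex $v$, the geometric angle $\beta(\angle{e,e'})$ differs by at most $2^\circ$ from a multiple of $90^\circ$, namely the difference of the two reference directions, which depends only on the quadrants of the neighbors involved (this is the correction-term cancellation already used in the proof of Lemma~\ref{lem:angular-labeling-necessary}). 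This simultaneously shows that the rounding defining $A_\Gamma$ is well defined and that $\beta(\angle{e,e'})$ is within the required tolerance of $A_\Gamma(\angle{e,e'})$; thus once $A_\Gamma=A_{Q,L}$ is established, condition~(ii) follows at once. (The strict bound $<2^\circ$ is obtained by requiring the slope tolerance to be slightly below $1^\circ$, a harmless technicality.)

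It then remains to check $A_\Gamma=A_{Q,L}$ angle by angle, using Observation~\ref{obs:planarity} to fix the clockwise cyclic order $\Nul(v),\Nur(v),\Ndr(v),\Ndl(v)$ of the neighbors of each vertex $v$. For an angle whose rounded value is not $360^\circ$, both labelings assign the category determined solely by how many quadrant boundaries (clockwise) separate the quadrants of the two neighbors bounding the angle, a value in $\{0^\circ,90^\circ,180^\circ,270^\circ\}$ fixed by $Q$; hence they agree there. The remaining angles are those rounded to $360^\circ$, which occur exactly at degree-$1$ vertices — where the single angle $\angle{e,e}$ is unambiguously $360^\circ$ in both labelings — and at ambiguous vertices, where all neighbors lie in one quadrant and precisely one incident angle wraps around. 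I expect the ambiguous vertices to be the main (indeed the only genuine) obstacle: here the choice between $0^\circ$ and $360^\circ$ is not forced by $Q$, and it is exactly the hypothesis that the geometric angle close to $360^\circ$ complies with the large-angle assignment $L$ that makes $A_\Gamma$ select the same incident angle as $A_{Q,L}$. Combining the three cases gives $A_\Gamma=A_{Q,L}$, and therefore $\Gamma$ is an angular drawing of $(G,A_{Q,L})$.
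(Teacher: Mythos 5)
Your proposal is correct and follows essentially the same route as the paper: the paper states this as an observation whose justification is the discussion immediately preceding it (diagonal-slope segments force every geometric angle to be within a small tolerance of a multiple of $90^\circ$ determined by the quadrants of $Q$, so the induced labeling $A_\Gamma$ agrees with $A_{Q,L}$ everywhere except possibly at the $0^\circ$/$360^\circ$ choice at ambiguous vertices, which is exactly what the compliance-with-$L$ hypothesis pins down). Your write-up merely makes that chain of reasoning explicit, including the harmless tolerance technicality, so there is nothing to correct.
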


Once we have performed the large-angle assignment for each vertex, the 
problems of computing windrose-planar drawings and angular drawings are 
equivalent. Hence, in the following, we will refer to these notions of 
drawings interchangeably.

Let $G \subseteq G'$ be two plane graphs and let $A$ and $A'$ be
labelings of $G$ and $G'$, respectively.  We say that ~$A'$
\emph{refines} $A$ if for each angle $\angle{e,e'}$ of $G$ we have
$A(\angle{e,e'}) = \sum_{\angle{l,l'} \in C(e,e')} A(\angle{l,l'})$,
where~$C(e,e')$ denotes the angles of $G'$ clockwise between $e$ and $e'$.

\newcommand{\lemmamodifications}{Let $(G,A)$ be a labeled graph and let $G'$ be the graph obtained by adding to $G$ an edge $e$ inside a face $f$ of $G$. Also, let $f_1$ and $f_2$ be the two faces of $G'$ that are incident to $e$. Let $A'$ be a labeling of $G'$ refining $A$. Then, $A'$ is an angular labeling if and only if $A$ is an angular labeling and $f_1$ satisfies the Cycle Condition.}
\begin{lemma}
\label{lem:modifications}
\lemmamodifications
\end{lemma}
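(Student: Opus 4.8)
The plan is to prove both directions of the biconditional by carefully tracking how the Vertex and Cycle conditions behave when a single edge $e$ is inserted into the face $f$, splitting it into $f_1$ and $f_2$. The key observation is that adding $e$ changes the structure only locally: the face $f$ is replaced by $f_1$ and $f_2$, and at the two endpoints of $e$ (call them $p$ and $q$) the single angle of $f$ is split into two angles, one belonging to $f_1$ and one to $f_2$. Every other angle of $G$ is preserved unchanged in $G'$, and the refinement hypothesis tells us precisely that the labels of the new angles at $p$ and $q$ sum (in each case) to the label of the original angle of $f$ they replace.

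First I would verify that the \textbf{Vertex condition} is automatically equivalent between $(G,A)$ and $(G',A')$. At every vertex other than $p$ and $q$, the incident angles and their labels are identical in $G$ and $G'$, so those vertex sums are untouched. At $p$ (and symmetrically at $q$), the old angle $\angle{e_1,e_2}$ of $f$ is replaced by the two angles $\angle{e_1,e}$ and $\angle{e,e_2}$, whose $A'$-labels sum to $A(\angle{e_1,e_2})$ by the refinement property. Hence the total sum of incident angle categories at $p$ is unchanged, and likewise at $q$. Therefore the Vertex condition holds for $A'$ if and only if it holds for $A$. This step is routine bookkeeping using the definition of refinement.

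Next I would analyze the \textbf{Cycle condition}, which is the crux. For every face of $G'$ other than $f_1$ and $f_2$, the facial cycle and all its angle labels coincide with a face of $G$, so its Cycle condition is inherited directly from $A$ and needs no further argument. The real content is the relationship among the three faces $f$, $f_1$, $f_2$. I would apply Lemma~\ref{lem:angular-labeling-necessary}'s arithmetic in reverse: using the refinement relation at $p$ and $q$ together with the fact that $e$ contributes a $180^\circ$-type correction on each side, I would show the identity that the (signed) Cycle-condition defects of $f_1$ and $f_2$ sum to the Cycle-condition defect of $f$. Concretely, writing $S(f)$ for the sum of angle labels of a face of length $k$ minus the target $k\cdot 180^\circ \mp 360^\circ$, the edge insertion splits one $k$-gon into a $k_1$-gon and a $k_2$-gon with $k_1 + k_2 = k + 2$ (each endpoint is counted in both), and the angle sums satisfy $\bigl(\text{sum over } f_1\bigr) + \bigl(\text{sum over } f_2\bigr) = \bigl(\text{sum over } f\bigr)$ by refinement. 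Matching these against the targets yields $S(f_1) + S(f_2) = S(f)$, after accounting for the $+2\cdot 180^\circ - 2\cdot 360^\circ = -360^\circ$ shift in the target that exactly absorbs the $+2$ in vertex count and the extra $-360^\circ$ from splitting one internal face into two.

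Finally I would combine these facts. If $A$ is angular and $f_1$ satisfies the Cycle condition, then $S(f)=0$ and $S(f_1)=0$, so the sum identity forces $S(f_2)=0$; together with the untouched faces and the Vertex condition (already transferred), this shows $A'$ is angular. Conversely, if $A'$ is angular, then $S(f_1)=S(f_2)=0$, so $S(f)=0$ and every other face of $G$ satisfies its Cycle condition as well, giving that $A$ is angular; and $f_1$ satisfies its Cycle condition by assumption within the angularity of $A'$. The main obstacle I anticipate is getting the $\pm 360^\circ$ arithmetic exactly right when the face being split is the outer face (where the target is $k\cdot 180^\circ + 360^\circ$ rather than $k\cdot 180^\circ - 360^\circ$), since then one of $f_1,f_2$ becomes the new outer face and the sign of the target term flips; I would handle this case separately but expect the same additive defect identity $S(f_1)+S(f_2)=S(f)$ to emerge once the correct targets are substituted.
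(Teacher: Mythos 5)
Your proposal is correct and follows essentially the same route as the paper's proof: the Vertex condition transfers immediately by refinement, the untouched faces are dismissed, and the Cycle condition for $f$, $f_1$, $f_2$ is settled by the additivity of the angle sums under refinement together with the length relation $k = k_1 + k_2 - 2$; your defect identity $S(f_1)+S(f_2)=S(f)$ is just a uniform repackaging of the paper's two substitutions. Your explicit attention to the outer-face case (where the target sign flips for the face inheriting the outer boundary) is a point the paper's write-up glosses over, and the arithmetic does work out there as you expect.
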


\begin{proof}
  Clearly, $A'$ satisfies the Vertex Condition for all vertices if and
  only if $A$ does, due to the fact that $A'$ refines $A$.
  We now consider the Cycle Condition. First observe that all faces of $G'$ not incident
  to $e$ also occur in $G$ with the same angles; hence their Cycle Conditions are equivalent.

  Suppose that $A'$ is an angular labeling, which implies that $f_1$ and $f_2$ satisfy the Cycle Condition. Hence, $s_i=\sum_{\angle{a,b}\in f_i} A'(\langle a,b\rangle) = 180^\circ \cdot k_i-360^\circ$, with $i=1,2$, where $k_i$ is the length of the facial cycle of~$f_i$. Since $A'$ refines $A$, we have that 
  \begin{eqnarray*}
  s &=& \sum_{\angle{a,b}\in f} A'(\langle a,b\rangle) = s_1 + s_2 \\
  &=& 180^\circ \cdot k_1-360^\circ + 180^\circ \cdot k_2-360^\circ \\
  &=& 180^\circ \cdot (k_1+k_2-2)-360^\circ.
  \end{eqnarray*}
  Given that the length of the facial cycle of $f$ is $k=k_1+k_2-2$, we have that $f$ satisfies the Cycle Condition.

  Suppose that $A$ is an angular labeling and that $f_1$ satisfies the Cycle Condition. Hence, $s=180^\circ \cdot k-360^\circ$ and $s_1=180^\circ \cdot k_1-360^\circ$. Since $A'$ refines $A$, we have that $s_2=s-s_1$. Thus, $s_2= 180^\circ \cdot k-360^\circ - (180^\circ \cdot k_1-360^\circ) = 180^\circ \cdot (k-k_1) = 180^\circ \cdot (k-k_1+2) - 360$. Given that $k_2=k-k_1+2$, we have that $f_2$ satisfies the Cycle Condition.
  This concludes the proof.
\end{proof}

\begin{corollary}\label{cor:subgraph-labeling}
Let $(G',A')$ be a labeled graph such that $A'$ is angular and let $G$ be a connected subgraph of $G'$ with the unique labeling $A$ such that $A'$ refines $A$. Then, $A$ is an angular labeling. In particular, for every simple cycle $C$ of $G'$, the face obtained by removing the interior of $C$ satisfies the Cycle Condition.
\end{corollary}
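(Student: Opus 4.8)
The plan is to prove that the coarser labeling $A$ is angular by reducing, via Lemma~\ref{lem:modifications}, from the angular labeling $A'$ of the larger graph $G'$. Since $A'$ refines $A$, the \emph{Vertex condition} for $A$ comes for free: for every vertex $v\in V(G)$ the $A$-angles at $v$ are obtained by merging consecutive $A'$-angles at $v$, so their sum equals the sum of all $A'$-angles at $v$, which is $360^\circ$ because $A'$ is angular. Thus only the \emph{Cycle condition} for the faces of $G$ requires work.

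For this I would set up an induction that grows $G$ into $G'$ one edge at a time. Because $G$ is connected and $G\subseteq G'$ with $G'$ connected, I can order the elements of $G'\setminus G$ to obtain a chain $G=G_0\subsetneq G_1\subsetneq\cdots\subsetneq G_m=G'$ in which each $G_{i+1}$ arises from $G_i$ by inserting a single edge $e_{i+1}$ that has at least one endpoint already in $G_i$; let $A_i$ be the unique labeling of $G_i$ that $A'$ refines (these coarsenings are well defined, and refinement is transitive). Since $A_m=A'$ is angular, it suffices to show that passing from $A_{i+1}$ to $A_i$ preserves angularity. Every inserted edge $e_{i+1}$ is drawn inside a single face of $G_i$ (it cannot cross $G_i\subseteq G'$), and it is of exactly one of two types: either both endpoints already lie in $G_i$, in which case $e_{i+1}$ closes a cycle and hence is a \emph{chord} that splits a face of $G_i$ into two distinct faces of $G_{i+1}$, or it attaches a new degree-$1$ vertex as a \emph{pendant edge}. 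The connected-growth ordering guarantees that no other case occurs: there is no edge joining two new vertices, and no edge joining two already-present vertices can be a bridge.

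For a chord, Lemma~\ref{lem:modifications} applies verbatim with $G_i$ and $G_{i+1}$ in the roles of $G$ and $G'$: its forward implication states that if $A_{i+1}$ is angular then $A_i$ is angular, which is exactly what the induction needs. The pendant case is the one not covered by Lemma~\ref{lem:modifications}, since there the face is not split into two, so here I would argue directly. If $e_{i+1}=(v,w)$ with $w$ new, then in $G_{i+1}$ the vertex $w$ carries a single $360^\circ$ angle and the two sides of $e_{i+1}$ at $v$ lie in the same face $f$; passing to $A_i$ merges these two $A_{i+1}$-angles at $v$ into their sum (by refinement), deletes the $360^\circ$ angle at $w$, and shortens the facial walk of $f$ by~$2$. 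A one-line bookkeeping check shows that the angle sum of $f$ decreases by exactly $360^\circ$ while its length decreases by $2$, so the \emph{Cycle condition} of $f$ is preserved and all other faces are untouched; the outer-face variant is identical with $+360^\circ$ in place of $-360^\circ$.

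Having established that $A$ is angular, the final sentence follows by applying the statement to $G:=C$ for a simple cycle $C$ of $G'$: the subgraph $C$ is connected, so its coarsening $A_C$ is angular, and the \emph{Cycle condition} for the interior face of $C$ under $A_C$ is precisely the asserted condition for the face obtained from $G'$ by deleting everything strictly inside $C$. I expect the pendant case to be the main obstacle, namely the fact that Lemma~\ref{lem:modifications} governs only the splitting of one face into two and must be supplemented by the direct computation above for additions that do not split a face, together with the care needed to guarantee that the growth ordering produces only chords and pendant edges.
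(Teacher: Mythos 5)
Your proof is correct and follows the route the paper intends: the corollary is stated without proof as a consequence of Lemma~\ref{lem:modifications}, obtained by growing $G$ into $G'$ one edge at a time and applying the forward direction of that lemma. Your added observation that pendant-edge insertions are not literally covered by Lemma~\ref{lem:modifications} (which presupposes the new edge splits a face into two), together with the direct bookkeeping that removing a pendant edge drops the face length by $2$ and the angle sum by $360^\circ$, is a genuine and necessary supplement that the paper leaves implicit.
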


\section{Triangulated \qc graphs}\label{se:triangulated}

In this section, we consider windrose-planar drawings of
triangulated \qc graphs. We prove that, for these graphs,
the necessary condition that the labeling determined by the
q-constraints is angular (Lemma~\ref{lem:angular-labeling-necessary})
is also sufficient for windrose planarity. Further, we 
prove that a windrose-planar triangulated \qc graph admits a windrose-planar drawing with one bend per edge and polynomial area.

We start by observing some important properties of triangulated \qc graphs 
and of their angular labelings. The first observation directly follows from 
the Cycle Condition, which requires that the sum of the angle categories of 
the angles incident to each internal triangular face is $180^\circ$. 

\begin{observation}\label{obs:triangulated-angular}
Let $(G,Q)$ be a triangulated \qc graph with large-angle assignment $L$ and let $A_{Q,L}$ be the corresponding labeling of $G$. If $A_{Q,L}$ is angular, then no internal angle of $G$ has category larger than $180^\circ$ and each internal face $f$ of $G$ has at least an internal angle with category $0^\circ$.
\end{observation}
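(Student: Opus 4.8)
The plan is to prove Observation~\ref{obs:triangulated-angular} directly from the Cycle condition and the Vertex condition, treating the two claims in turn. Recall that for an internal triangular face $f$ of length $k=3$, the Cycle condition states that the three incident angle categories sum to $3 \cdot 180^\circ - 360^\circ = 180^\circ$. Since every angle category is a nonnegative value drawn from $\{0^\circ, 90^\circ, 180^\circ, 270^\circ, 360^\circ\}$, and the three categories around an internal triangle must sum to exactly $180^\circ$, I would first observe that no single internal angle of such a face can exceed $180^\circ$: if one of the three angles were $270^\circ$ or $360^\circ$, then even with the other two being the minimum $0^\circ$ the sum would already exceed $180^\circ$, contradicting the Cycle condition. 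A small subtlety is that the statement claims this for \emph{every} internal angle of $G$, not merely for each internal triangular face; since $G$ is triangulated, every internal face is a triangle, and every internal angle (an incidence between a vertex and an internal face) is one of the three angles of some internal triangle, so the bound transfers to all internal angles.

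For the second claim, that each internal face has at least one angle of category $0^\circ$, I would argue by the possible partitions of $180^\circ$ into three parts from the category set. The three categories sum to $180^\circ$, and by the first part each is at most $180^\circ$, so each lies in $\{0^\circ, 90^\circ, 180^\circ\}$. The only ways to write $180^\circ$ as an ordered sum of three elements of $\{0^\circ,90^\circ,180^\circ\}$ are $0^\circ+0^\circ+180^\circ$ and $0^\circ+90^\circ+90^\circ$ (up to permutation); in either case at least one summand is $0^\circ$. Hence every internal triangular face has an incident angle of category $0^\circ$, which is exactly the desired conclusion.

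The main (and essentially only) obstacle here is not any deep argument but rather making sure the case enumeration over the category set is exhaustive and that the bound from the first part is invoked correctly to restrict attention to $\{0^\circ,90^\circ,180^\circ\}$ before enumerating. I would present the two parts in the order above, since the first part's bound is what makes the enumeration in the second part finite and clean. One should also be careful to note explicitly why the angle categories are nonnegative and confined to the five listed values — this is immediate from the definition of a labeling, but it is the fact that powers the entire elementary counting argument. No use of the triangulation's global structure, connectivity, or the large-angle assignment is needed beyond the fact that internal faces are triangles; the hypothesis that $A_{Q,L}$ is angular supplies exactly the Cycle condition we invoke.
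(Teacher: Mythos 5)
Your proposal is correct and matches the paper's intent exactly: the paper justifies this observation by noting that it ``directly follows from the Cycle condition,'' which forces the three angle categories of each internal triangular face to sum to $180^\circ$, and your two-step enumeration (first bounding each category by $180^\circ$, then listing the partitions of $180^\circ$ into three parts from $\{0^\circ,90^\circ,180^\circ\}$) is precisely the elementary argument being invoked. Your explicit remark that every internal angle of a triangulated graph belongs to some internal triangular face is a welcome clarification of a point the paper leaves implicit.
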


An implication of this observation is that no internal angle can be assigned category $360^\circ$. Thus, the only vertices for which a large-angle assignment may be needed are those incident to the outer face; even in this case, however, the assignment is fixed, since the $360^\circ$ category for these vertices must necessarily be assigned to the outer face. Hence, an angular labeling of a triangulated \qc graph, if any, is unique. In the remainder of the section, when considering a triangulated \qc graph, we will
thus omit explicitly referring to its large-angle assignment.

\newcommand{\lemmaangularnosource}{
  Let $(G,Q)$ be a triangulated \qc graph and suppose that the labeling $A_Q$ determined by $Q$ is angular. Then, 
\begin{inparaenum}[(i)]
\item the two graphs $G^\uparrow$ and $G^\rightarrow$ do not have internal sources or sinks, and
\item $(G,Q)$ is bi-acyclic.
\end{inparaenum}
}
\begin{lemma}\label{le:angular-no-source}
\lemmaangularnosource
\end{lemma}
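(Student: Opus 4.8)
The plan is to handle the two parts separately, extracting both from the way an angular labeling constrains the angle categories around each vertex (the Vertex condition) and around each facial cycle (the Cycle condition), sharpened for triangulations by Observation~\ref{obs:triangulated-angular}, which forbids internal angles of category larger than $180^\circ$.

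For part (i) I would argue by contradiction. Suppose $v$ is an internal source of $G^\uparrow$, so every neighbor of $v$ lies above it, i.e.\ in $\Nul(v)\cup\Nur(v)$. By Observation~\ref{obs:planarity} the neighbors occur clockwise around $v$ as a (possibly empty) block in $\Nul(v)$ followed by a block in $\Nur(v)$, while the entire lower range — the quadrants $\Ndr(v)$ and $\Ndl(v)$ — contains no neighbor. Using that each $90^\circ$ of angle category advances the quadrant by one clockwise step, the single face of $G$ incident to $v$ that spans this empty lower range must carry an angle of category $270^\circ$ (when both $\Nul(v)$ and $\Nur(v)$ are nonempty) or $360^\circ$ (when all neighbors lie in a single upper quadrant). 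Since $v$ is internal, every face around $v$ is internal, so this is an internal angle exceeding $180^\circ$, contradicting Observation~\ref{obs:triangulated-angular}. Internal sinks of $G^\uparrow$, and both sources and sinks of $G^\rightarrow$, follow identically after rotating the whole configuration by a multiple of $90^\circ$. This settles (i).

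For part (ii) I would again argue by contradiction, and here the Cycle condition does the real work. If $G^\uparrow$ has a directed cycle it has a simple one $C=v_1\to v_2\to\cdots\to v_k\to v_1$. Every cycle edge points ``up,'' so at each $v_i$ the incoming cycle edge points into the lower half (its other endpoint lies in $\Ndl(v_i)\cup\Ndr(v_i)$) and the outgoing edge points into the upper half. I would record at each $v_i$ the quadrant ($\ur$ or $\ul$) of the outgoing edge and compute, from the quadrant/category correspondence, the clockwise angle-category sum $\gamma_i$ between the two cycle edges; a short case analysis over the four combinations of the upper/lower quadrants of these two edges yields $\gamma_i\in\{90^\circ,180^\circ,270^\circ\}$. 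The accounting observation is that the values $90^\circ$ and $270^\circ$ occur precisely at the two kinds of quadrant switches ($\ur\!\to\!\ul$ and $\ul\!\to\!\ur$), which must balance around any cyclic binary sequence; hence they average to $180^\circ$ and the sum collapses to $\sum_i\gamma_i=180^\circ\cdot k$. Because the two cycle edges split the $360^\circ$ around each vertex (Vertex condition) into $\gamma_i$ and $360^\circ-\gamma_i$, the interior angle sum of $C$ equals $180^\circ\cdot k$ irrespective of the orientation of $C$. But by Corollary~\ref{cor:subgraph-labeling} the face bounded by $C$ after deleting everything strictly inside it is an internal face of length $k$, whose angle sum the Cycle condition forces to be $k\cdot 180^\circ-360^\circ\neq 180^\circ\cdot k$ — a contradiction. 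Thus $G^\uparrow$ is acyclic, and the same argument (after a $90^\circ$ rotation) gives that $G^\rightarrow$ is acyclic, so $(G,Q)$ is bi-acyclic.

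The step I expect to be the main obstacle is the bookkeeping in part (ii): pinning down $\gamma_i$ as the clockwise category sum between the two cycle edges, matching it uniformly along the cycle to either the interior or the exterior angle (these two regimes differ by the global orientation of $C$ and by the $0^\circ$/$360^\circ$ ambiguity), and checking that the transition-count identity genuinely collapses the sum to $180^\circ\cdot k$. The robust feature that makes the contradiction work is precisely that both $\sum_i\gamma_i$ and $\sum_i(360^\circ-\gamma_i)$ equal $180^\circ\cdot k$, so the orientation of $C$ never matters. Part (i), by contrast, should be essentially immediate once Observation~\ref{obs:triangulated-angular} is invoked.
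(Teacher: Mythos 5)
Your part (i) is the paper's argument almost verbatim: an internal source of $G^\uparrow$ has $\Ndr(v)=\Ndl(v)=\emptyset$, which forces an internal angle of category $270^\circ$ or $360^\circ$ and contradicts Observation~\ref{obs:triangulated-angular}. Your part (ii), however, takes a genuinely different route, and it is correct. The paper first shows each internal triangular face is acyclic (via the $0^\circ$ angle guaranteed by Observation~\ref{obs:triangulated-angular}), and then runs a counting argument: inside a hypothetical directed cycle with $n_C$ boundary and $n_I$ interior vertices there are $2n_I+n_C-2$ triangular faces, each with exactly one ``transition vertex,'' yet the absence of internal sources and sinks forces at least $2n_I+n_C$ vertex--face transition incidences. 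You instead sum angle categories directly along the cycle and contradict the Cycle condition via Corollary~\ref{cor:subgraph-labeling}. Your bookkeeping checks out: with clockwise quadrant order $\ul,\ur,\dr,\dl$, the clockwise category sum from the outgoing to the incoming cycle edge at $v_i$ is congruent mod $360^\circ$ to the quadrant distance, and since that distance is strictly positive while the two complementary partial sums are nonnegative and total $360^\circ$, the sum equals the distance exactly --- this disposes of the $0^\circ/360^\circ$ ambiguity you worried about. The transition count $(\ul\to\ur)=(\ur\to\ul)$ in the cyclic quadrant sequence then gives $\sum_i\gamma_i=180^\circ k$, and both possible interior sums $\sum_i\gamma_i$ and $\sum_i(360^\circ-\gamma_i)$ equal $180^\circ k\neq 180^\circ k-360^\circ$. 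What your approach buys: it needs neither part (i) nor the triangulation hypothesis (only an angular labeling and Corollary~\ref{cor:subgraph-labeling}, which hold for general plane graphs), so it proves bi-acyclicity in greater generality; the paper's counting argument, by contrast, is tied to triangulations and leans on the absence of internal sources and sinks established in part (i).
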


\begin{figure}[tb]
  \centering
    \includegraphics[page=1]{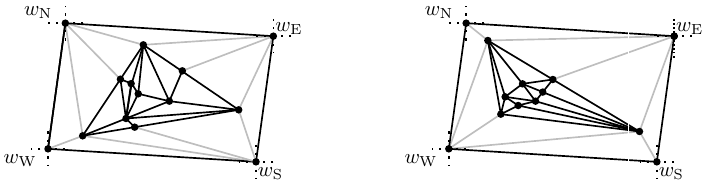}
\caption{Windrose-planar drawings of two quasi-triangulated \qc graph.}
\label{fig:quasi-triangulated}
\end{figure}

\begin{proof}
We first prove that $G^\uparrow$ and $G^\rightarrow$ do not have internal sources or sinks.
Assume that there is an internal source $v$ in $G^\uparrow$.
This implies that $\Ndr(v)=\Ndl(v)=\emptyset$, and hence $v$ has an
internal angle with category $270^\circ$ or $360^\circ$, which
contradicts Observation~\ref{obs:triangulated-angular}.
The cases
that $v$ is a sink or that $v$ is a source/sink in $G^\rightarrow$ are
analogous. 
 Thus, there are no internal sources and sinks.
  
To prove that $(G,Q)$ is bi-acyclic, we first show that the internal faces of $G^\uparrow$ and $G^\rightarrow$ are acyclic. Consider any internal face $f=(u,v,w)$ and the angle $\angle{e,e'}$, where $e=(v,u)$ and $e'=(v,w)$, such that $A(\angle{e,e'})=0^\circ$, which exists by Observation~\ref{obs:triangulated-angular}. By construction of $A_Q$, we have that $u,w \in \Npar(v,\circ)$, for some $\circ \in \sets$, and hence $e$ and $e'$ are either both outgoing or both incoming edges of $v$ in $G^\uparrow$ and $G^\rightarrow$. Hence, $f$ is acyclic in both $G^\uparrow$ and $G^\rightarrow$.

In order to prove that
 both $G^\uparrow$ and $G^\rightarrow$ are acyclic,
 we show by a counting argument that the acyclicity of each internal face of a triangulated directed planar graph $H$ without internal sources and sinks implies the acyclicity of the whole graph.
Namely, since each internal triangular face of $H$ is acyclic, it has a source and a sink vertex, plus a vertex that we call \emph{transition vertex}.
Assume, for a contradiction, that $H$ has a simple directed cycle $C$ with $n_C$ vertices.  
Consider the internally triangulated subgraph~$H'$ of~$H$ consisting of $C$ 
together with all vertices and edges in the interior of $C$. If $H'$ has $n_I$
interior vertices, then it has $2n_I+n_C-2$ internal triangular faces.
Since $H$ does not have internal sources or sinks, neither does~$H'$. 
Therefore,
every internal vertex~$v$ of~$H'$ is a transition vertex for at least
two faces, namely the faces whose boundary contains an incoming
and an outgoing edge of $v$. Also, every vertex $v$
on the cycle $C$ 
 is a transition vertex for at least one face
since, by assumption, $C$ is a directed cycle and hence $v$ has at
least one incoming and at least one outgoing edge in $H'$. Thus, there
are at least $2n_I+n_C$ pairs $v,f$ such that $v$ is a transition vertex for face $f$. However, there are only $2n_I+n_C-2$ faces, each with one transition vertex, a contradiction. Hence, $H$ is acyclic.
This concludes the proof of the lemma. 
\end{proof}

We now show
that a triangulated \qc graph $(G,Q)$ whose corresponding labeling $A_Q$ is angular is windrose-planar. By Observation~\ref{obs:triangulated-angular}, each internal angle of $(G,Q)$ has category $0^\circ$, $90^\circ$, or $180^\circ$ in $A_Q$. In Lemma~\ref{le:4-constrained-drawing}, we prove that, if no internal
angle has category $180^\circ$, then $(G,Q)$ admits a straight-line
windrose-planar drawing on the $n \times n$ grid. 
We prove this lemma under the assumption that $(G,Q)$ is internally triangulated and its outer face has four incident
vertices $w_{\WN}$, $w_{\WW}$, $w_{\WS}$, $w_{\WE}$, with $w_{\WN} \in
\Nur(w_{\WW})$, $w_{\WW} \in \Nul(w_{\WS})$, $w_{\WS} \in \Ndl(w_{\WE})$, and
$w_{\WE} \in \Ndr(w_{\WN})$; we say that $(G,Q)$ is \emph{quasi-triangulated} (see Fig.~\ref{fig:quasi-triangulated}).
 We will then exploit this lemma to prove the main result of the section in Theorem~\ref{th:internally-triangulated-characterization}.
 
\newcommand{\lemmafourconstraineddrawing}{Let $(G,Q)$ be a
  quasi-triangulated \qc graph with $n$ vertices whose corresponding labeling $A_Q$ is angular. If each internal angle of $(G,A_Q)$ has category $0^\circ$ or $90^\circ$, then $(G,Q)$ has a straight-line windrose-planar drawing on the $n \times n$ grid, which can be constructed in $O(n)$ time.}

\begin{lemma}\label{le:4-constrained-drawing}
\lemmafourconstraineddrawing
\end{lemma}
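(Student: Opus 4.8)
The plan is to read the hypotheses as a combinatorial structure on the two orientations $G^\uparrow$ and $G^\rightarrow$ and then to place the vertices by two topological numberings. By Observation~\ref{obs:triangulated-angular} every internal face has a $0^\circ$ angle, and since the internal angles of a triangle sum to $180^\circ$, each internal face carries the categories $(0^\circ,90^\circ,90^\circ)$. Dually, the Vertex condition forces every internal vertex to have exactly four $90^\circ$ angles, so each of its four quadrant sets $\Nul(v),\Nur(v),\Ndr(v),\Ndl(v)$ is nonempty and, by Observation~\ref{obs:planarity}, the neighbors appear clockwise as four contiguous blocks separated by the four $90^\circ$ ``corners''. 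This is precisely a transversal structure (regular edge labeling) of the quasi-triangulation, with the two color classes being $G^\rightarrow$ and $G^\uparrow$. By Lemma~\ref{le:angular-no-source} both digraphs are acyclic and have no internal source or sink; together with the prescribed outer quadrilateral this makes each a planar $st$-graph: $G^\uparrow$ has unique source $w_{\WS}$ and sink $w_{\WN}$, and $G^\rightarrow$ has unique source $w_{\WW}$ and sink $w_{\WE}$.

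Next I would assign coordinates. Let $y(\cdot)$ be a topological numbering of $G^\uparrow$ and $x(\cdot)$ a topological numbering of $G^\rightarrow$, each using the distinct values $1,\dots,n$. Because every edge is oriented consistently in both digraphs -- e.g.\ an edge to a neighbor in $\Ndr(v)$ points away from $v$ in $G^\rightarrow$ and into $v$ in $G^\uparrow$ -- the straight segment of each edge is strictly monotone in both coordinates and places its endpoint in the prescribed quadrant; hence all q-constraints and $xy$-monotonicity hold. As each coordinate is a permutation of $\{1,\dots,n\}$, the drawing lies on the $n\times n$ grid, the four corners $w_{\WW},w_{\WE},w_{\WS},w_{\WN}$ occupy the extreme column/row positions, and both numberings are computed in $O(n)$ time. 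What remains is to guarantee that the drawing is crossing-free.

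For planarity I would reduce to a face-orientation argument: after adding a diagonal to triangulate the (convex) outer quadrilateral, it suffices to show that every internal triangular face is non-degenerate and is oriented oppositely to the outer face, since a straight-line drawing of a triangulation whose faces all carry consistent orientations is necessarily an embedding. Each internal face has type $(0^\circ,90^\circ,90^\circ)$, so its combinatorial shape is fixed: the $0^\circ$ vertex $v$ has both other vertices in one common quadrant, and the triangle's orientation is determined by which of these two same-quadrant neighbors precedes the other in the clockwise rotation at $v$. Thus planarity follows once the chosen numberings make, at each such $v$, the slope order of its two neighbors agree with the embedding.

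I expect this last point to be the main obstacle: arbitrary topological numberings of $G^\uparrow$ and $G^\rightarrow$ do produce the correct quadrants but may give some internal triangle the wrong orientation, creating a crossing. The fix is to exploit the transversal structure: because the incoming and outgoing blocks of each color are contiguous around every vertex, the two $st$-graphs admit mutually consistent numberings (obtainable from a canonical ordering of the quasi-triangulation), and I would prove that with such a choice the slope order at every $0^\circ$ vertex matches the rotation, so all internal triangles are consistently oriented and non-degenerate (the permutation property already rules out coincident vertices and axis-parallel edges). Granting the orientation claim, the face-orientation criterion closes the argument, and the grid size and running time established above complete the proof.
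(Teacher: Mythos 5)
Your overall strategy (coordinates from topological numberings of $G^\uparrow$ and $G^\rightarrow$, then planarity via consistent face orientations) is exactly the paper's, and your identification of the transversal-structure flavour of the hypotheses is a nice observation. But there is a genuine gap at the step you yourself flag as ``the main obstacle'': you assert that arbitrary topological numberings may orient some internal triangle wrongly, and you propose to repair this with ``mutually consistent numberings obtainable from a canonical ordering'', for which you prove nothing. This deferral is precisely the content of the lemma, so the proof is incomplete as written. Moreover, the worry itself is unfounded, and seeing why is the key idea you are missing: for a face $f=(u,v,w)$ with angle pattern $(0^\circ,90^\circ,90^\circ)$, the $0^\circ$ vertex $u$ is a source or sink of $f$ in \emph{both} $G^\uparrow$ and $G^\rightarrow$, while each $90^\circ$ vertex is extremal in exactly one of the two digraphs. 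Concretely, if $v,w\in\Nur(u)$ and $u,w,v$ is the clockwise order on $f$, the q-constraints force $v\in\Ndr(w)$, hence $x(u)<x(w)<x(v)$ and $y(u)<y(v)<y(w)$ under \emph{every} pair of topological numberings, and these strict orderings already pin down the orientation of the triangle to agree with the embedding. The only angle pattern for which the coordinate orders would leave the orientation ambiguous is $(0^\circ,0^\circ,180^\circ)$, which is excluded by hypothesis. So no canonical ordering is needed; any topological numberings work, and the lemma reduces to verifying this per-face implication, which your write-up skips.

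A secondary, smaller issue: you invoke as a black box that a straight-line drawing of a triangulation with consistently oriented faces is an embedding. The paper proves this via a winding-number count (the function $\varphi$ counting how many closed triangles cover a point is locally constant across interior edges, equals $0$ at infinity, hence equals $1$ inside the outer quadrilateral), and also handles the outer face directly by noting that $w_\WW,w_\WE,w_\WS,w_\WN$ are the extremes of the two numberings rather than by adding a diagonal. These are repairable presentation choices, but in a self-contained proof you should either cite a precise statement of the folding criterion or reproduce such an argument.
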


\begin{proof}
We construct a straight-line drawing $\Gamma$ of $(G,Q)$ by assigning the 
$x$-coordinates of the vertices according to a topological ordering
of $G^\rightarrow$ and the $y$-coordinates according to a topological ordering
of $G^\uparrow$. Since, by Lemma~\ref{le:angular-no-source}, $G$ is bi-acyclic, 
such topological orderings can always be found, and the vertices of~$G$ lie 
in $\Gamma$ on the $n\times n$ grid. Clearly, drawing $\Gamma$ can be 
constructed in $O(n)$ time.

We now show that $\Gamma$ is windrose-planar. Let us consider an internal face $f=(u,v,w)$. Since no internal angle has category $180^\circ$ and since $A_Q$ satisfies the Cycle Condition, we have that one of the internal angles of $f$ has category $0^\circ$, say $A(\angle{u,f})=0^\circ$, while the others have $90^\circ$, that is, $A(\angle{v,f})=A(\angle{w,f})=90^\circ$.

We will show that $f$ is drawn in $\Gamma$ with the same orientation as in the planar embedding of $G$, that is, the order of the edges around each of the vertices of $f$ in $\Gamma$ coincides with the one in the given planar embedding of $G$. This is equivalent to saying that the order in which $u$, $v$, and $w$ are encountered when  traversing $f$ clockwise in $\Gamma$ is the same as they appear along the boundary of $f$ in the planar embedding of $G$.
In fact, the orientation of $f$ is uniquely determined by the order of the $x$- and $y$-coordinates imposed by $G^\rightarrow$ and $G^\uparrow$, respectively; see Fig.~\ref{fig:round-triangle}. 
This is due to the fact that, since $A(\angle{u,f})=0^\circ$, vertex $u$ is either a source or a sink of $f$ in both $G^\rightarrow$ and $G^\uparrow$, while, since $A(\angle{v,f})=A(\angle{w,f})=90^\circ$, each of vertices $v$ and $w$ is either a sink or a source in exactly one of $G^\rightarrow$ and $G^\uparrow$, and 
neither a sink nor a source in the other one. 
Hence, the $x$-coordinate of $w$ lies between those of $u$ and $v$, and the $y$-coordinate of~$v$ lies between those of $u$ and $w$, as in Fig.~\ref{fig:round-triangle-c}, or vice versa.
Note that this property would not hold if two interior angles of $f$ had category $0^\circ$ and the third one $180^\circ$, as in Fig.~\ref{fig:round-triangle-a}--\ref{fig:round-triangle-b}, but this case is excluded by hypothesis.

We claim that the orientation of $f$ in $\Gamma$ agrees with the orientation of $f$ in the planar embedding of~$G$. Suppose that $u,w,v$ appear in this clockwise order along $f$ in this embedding. Since $A(\angle{u,f})=0^\circ$, vertices $v$ and $w$ belong to the same quadrant of $u$, say $v,w \in \Nur(u)$. Since $u,w,v$ appear in this clockwise order along $f$, vertex $v$ immediately precedes $u$ in the clockwise order around $w$ in the planar embedding of~$G$. This, together with the fact that $A(\angle{w,f})=90^\circ$, implies that $v \in \Ndr(w)$ (note that $u \in \Ndl(w)$ since $w \in \Nur(u)$). These q-constraints uniquely determine the orientation of the edges of $f$ in $G^\uparrow$ and $G^\rightarrow$, and hence the ordering of the $x$- and $y$-coordinates of $u$, $v$, and $w$ in $\Gamma$. Namely, $x(u) < x(w) < x(v)$ and $y(u) < y(v) < y(w)$. Note that any drawing of $f$ respecting these two orders is such that $u$, $v$, and $w$ appear in this clockwise order (see Fig.~\ref{fig:round-triangle-c}), which concludes the proof of the claim.

\begin{figure}[tb]
   \centering
   \subfigure[]{
   \includegraphics[page=3]{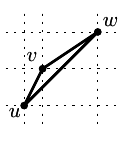}
   \label{fig:round-triangle-c}
   }
   \hfil
   \subfigure[]{
   \includegraphics[page=1]{img/round-triangle}
   \label{fig:round-triangle-a}
   }
   \hfil
   \subfigure[]{
   \includegraphics[page=2]{img/round-triangle}
   \label{fig:round-triangle-b}
   }
   \caption{Orientations of a face $f=(u,v,w)$, when~(a) one angle has category $0^\circ$ and (b--c) two angles have category~$0^\circ$.}
   \label{fig:round-triangle}
\end{figure}

As for the outer face, the fact that its orientation in $\Gamma$ agrees with the one in the planar embedding of $G$ directly follows from the fact that, by the definition of quasi-triangulated \qc graphs, the outer face has a unique orientation, namely $w_{\WW},w_{\WN},w_\WE,w_\WS$ always appear in this clockwise order around the outer face. This agrees with the order in which they appear in $\Gamma$, since $w_\WW$ and $w_\WE$ ($w_\WS$ and~$w_\WN$) are the first and the last vertices in any topological ordering of $G^\rightarrow$ (of $G^\uparrow$).

To complete the proof, we have to argue that having all the faces of $G$ drawn in $\Gamma$ with the same orientation as in the planar embedding of $G$ is sufficient for $\Gamma$ to be planar.
 We define a function $\varphi\colon \mathbb{R}^2\to \mathbb{N}$, where $\varphi(x)$ counts the number of triangles of $G$ bounding an interior face in which a point $x$ is contained; since~$G$ is triangulated, every interior face is bounded by a triangle. 
Since the triangles are oriented consistently, $\varphi(x)$ does not change when $x$ crosses an interior edge: it leaves one triangle and enters another. The function
$\varphi(x)$ changes only (by $\pm 1$) when crossing an external edge,
and we have $\varphi(x)=0$ at infinity. Thus, $\varphi(x)=1$ inside the external quadrilateral (except on edges and vertices) and  $\varphi(x)=0$ outside. We conclude that the triangular faces form a tiling of the external quadrilateral, and thus they form a straight-line planar drawing.
\end{proof}

We now prove that every triangulated \qc graph whose corresponding labeling is angular
can be transformed into a \qc graph that satisfies the conditions of Lemma~\ref{le:4-constrained-drawing}.

\newcommand{\lemmafourconstrainedaugmentation}{Let $(G,Q)$ be a triangulated \qc graph whose corresponding labeling $A_Q$ is angular. Then, there exists a quasi-triangulated \qc graph $(G^*,Q^*)$ such that 
\begin{inparaenum}[(i)]
\item the labeling~$A_{Q^*}$ of $(G^*,Q^*)$ is angular and no internal angle has category $180^\circ$ in $A_{Q^*}$;
\item $G^*$ has at most $3n-1$ vertices; and
\item $G^*$ contains a $1$-subdivision of $G$ as a spanning subgraph. Also, $(G^*,Q^*)$ can be constructed in linear time.
\end{inparaenum} }

\begin{lemma}\label{le:4-constrained-augmentation}
\lemmafourconstrainedaugmentation
\end{lemma}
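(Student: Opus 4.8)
The plan is to build $G^*$ in two augmentation phases—first reshaping the outer face, then eliminating the internal $180^\circ$ angles—while using Lemma~\ref{lem:modifications} and Corollary~\ref{cor:subgraph-labeling} as the bookkeeping tools that certify angularity after every single edge insertion. Since a refinement only forces us to check the Cycle condition of the one newly created face (the Vertex condition being automatic from refinement), I never recompute the global labeling. Throughout I maintain the invariant that every internal angle created by the augmentation has category $0^\circ$ or $90^\circ$, so that the final graph satisfies the hypothesis of Lemma~\ref{le:4-constrained-drawing}. By Observation~\ref{obs:triangulated-angular} the internal angles of $(G,A_Q)$ already lie in $\{0^\circ,90^\circ,180^\circ\}$, so the entire difficulty is the $180^\circ$ labels.

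\textbf{Phase 1 (outer face).} I first turn the outer boundary into the quadrilateral required by the definition of quasi-triangulated. I add the four corner vertices $w_{\WN},w_{\WE},w_{\WS},w_{\WW}$ with the prescribed cyclic quadrant relations, position them so that the old outer cycle lies in their interior, and triangulate the annular region between the old outer cycle and the new quadrilateral, reading off the quadrant of each new edge from its intended geometric position. Every insertion is validated by Lemma~\ref{lem:modifications}, and the new outer quadrilateral meets the Cycle condition because $4\cdot 180^\circ+360^\circ$ is exactly the target sum for an external $4$-gon. Here I may use the large-angle assignment freely, since reflex ($270^\circ,360^\circ$) angles are permitted on the boundary.

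\textbf{Phase 2 (eliminating $180^\circ$).} This is the heart of the construction. Consider a flat triangle $f=(u,v,w)$ with $A_Q(\angle{v,f})=180^\circ$ and $A_Q(\angle{u,f})=A_Q(\angle{w,f})=0^\circ$; this is the only shape a $180^\circ$ angle can take, since the three angles of an internal triangle sum to $180^\circ$. The $180^\circ$ at $v$ can be split into $90^\circ+90^\circ$ only by inserting a new edge that enters $f$ between the two legs $(v,u)$ and $(v,w)$; the natural target is a subdivision vertex $m$ placed on the opposite edge $(u,w)$ and bent into $f$ so that both halves become $90^\circ$. This repairs $f$, but $m$ now has degree three and, on the far side of $(u,w)$, inherits a fresh $180^\circ$ angle whose flat direction is the same diagonal as before. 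Thus eliminating a $180^\circ$ angle is never purely local: it launches a \emph{propagation path} travelling across the triangulation perpendicular to the flat diagonal, subdividing one edge at each step, until it reaches the outer boundary, where the residual straight-through is absorbed into a reflex boundary angle made available by Phase~1. Because the path crosses each edge it meets exactly once, the subdivided edges of $G$ together form a $1$-subdivision of $G$ as a subgraph of $G^*$, as required.

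\textbf{Main obstacle.} The delicate point is to realize all these propagation paths \emph{simultaneously and consistently}. I must argue that (i)~each original edge is subdivided at most once—in particular, when two paths originating from flat angles of the two different diagonal types meet on the same edge, they can be merged into a single degree-four crossing vertex carrying four $90^\circ$ angles, rather than forcing a second subdivision; (ii)~every newly created internal angle is indeed $0^\circ$ or $90^\circ$, so that no $180^\circ$ is ever left behind; (iii)~the process terminates, which I expect to follow from the fact that a path advances monotonically toward the boundary and that, by Lemma~\ref{le:angular-no-source}, $G^\uparrow$ and $G^\rightarrow$ have no internal sources or sinks and are bi-acyclic, so a flat channel cannot close up into an internal cycle; and (iv)~planarity and the triangulation structure are preserved at each step. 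Once these invariants are in place, angularity of $A_{Q^*}$ with no internal $180^\circ$ is immediate from the incremental checks via Lemma~\ref{lem:modifications}, the result is quasi-triangulated, and since each edge is touched a bounded number of times the number of inserted vertices and edges is linear in $n$, giving the linear-time construction and establishing both~(i) and~(ii) of the statement.
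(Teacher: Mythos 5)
There is a genuine gap. Your Phase~2 rests on the premise that after splitting the $180^\circ$ angle at $v$ by an edge to a subdivision vertex $m$ on $(u,w)$, the vertex $m$ necessarily ``inherits a fresh $180^\circ$ angle'' on the far side, so that the repair must propagate as a path across the triangulation to the boundary. That propagation is both unnecessary and, as you present it, unproven: every one of the invariants you need --- each edge subdivided at most once, no leftover $180^\circ$, termination, consistency when two paths collide --- is listed under ``Main obstacle'' rather than established, and these are precisely the hard parts of the lemma. The paper avoids propagation entirely by a purely local fix: if $(u,w)$ is internal, let $r$ be the apex of the triangle on its other side; one first proves the structural claim that $r \notin \Nul(u)$ and $r \notin \Ndr(w)$, and then subdivides $(u,w)$ with $z$ and adds \emph{both} edges $(v,z)$ and $(z,r)$, so that all four angles at $z$ are $90^\circ$ and all four new faces satisfy the Cycle condition regardless of the original category at $r$. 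Thus each flat angle costs one subdivision vertex and two edges, the count of $180^\circ$ angles strictly decreases, and nothing travels to the boundary.

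The second missing ingredient is the processing order, which is what actually secures your point (i) and the $1$-subdivision claim. The paper does not pick an arbitrary flat angle: among the internal vertices $x$ with a $180^\circ$ angle and $\Nur(x)=\emptyset$ it chooses one, $v$, whose rightmost neighbor $u$ in $\Nul(v)$ and leftmost neighbor $w$ in $\Ndr(v)$ are \emph{not} themselves in that set (such a $v$ exists because the corresponding monotone subgraphs $G^\ur(u)$ and $G^\ur(w)$ are nested inside $G^\ur(v)\setminus v$, using Lemma~\ref{le:angular-no-source}). This choice is what guarantees that subdivision edges and dummy edges are never subdivided again in later steps, giving both the $1$-subdivision containment and the linear running time. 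Your outer-face phase is essentially the paper's final step (done first rather than last, which is a harmless reordering), but without the local $r$-trick and the ordering argument the core of the proof is not there.
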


\begin{proof}
We first need to introduce some additional definitions; refer to Fig.~\ref{fig:lemma2-a}. Let $v$ be any vertex of~$G$. Consider the unique path $P_\uparrow(v) = u_1, \dots, u_m$ such that 
(i) $u_1 = v$, 
(ii) $u_m$ is the unique vertex of $P_\uparrow(v)$ incident to the outer face of $G$, 
and (iii) for each $i = 1, \dots, m-1$, vertex $u_{i+1}$ is the
 rightmost neighbor of $u_i$ in $\Nul(u_i)$ if $\Nul(u_i) \neq \emptyset$;
 otherwise, it is the leftmost neighbor of $u_i$ in $\Nur(u_i)$. 
Since $A_Q$ is angular, no internal angle has category larger than $180^\circ$, and hence  
 $\Nul(u_i) \cup \Nur(u_i) \neq \emptyset$ as long as $u_i$ is not an
external vertex, and since, by Lemma~\ref{le:angular-no-source}, $G^\uparrow$ is acyclic, the path $P_\uparrow(v)$ can be constructed.
Analogously, consider the unique path $P_\rightarrow(v) = w_1,
\dots, w_h$ such that 
(i) $w_1 = v$, (ii) $w_h$ is the unique vertex of $P_\rightarrow(v)$ incident to the outer face of $G$, 
and (iii) for each $i = 1,\dots, h-1$, vertex $w_{i+1}$ is the leftmost neighbor of $w_i$ in $\Ndr(w_i)$ if $\Ndr(w_i) \neq \emptyset$; otherwise, it is the rightmost neighbor of $w_i$ in $\Nur(w_i)$.
Since $\Nur(w_i) \cup\Ndr(w_i) \neq \emptyset$ as long as $w_i$ is not an external vertex and since, by Lemma~\ref{le:angular-no-source}, $G^\rightarrow$ is acyclic, the path $P_\rightarrow(v)$ can be constructed.
Finally, consider the path $\hat P(v)$ from $u_m$ to~$w_h$ obtained by
following the outer face of $G$ clockwise. We denote by $G^\ur(v)$ the subgraph of $G$ induced by the vertices on the boundary and inside the cycle composed of paths $P_\uparrow(v)$, $\hat P(v)$, and $P_\rightarrow(v)$; see Fig.~\ref{fig:lemma2-a}. Graphs~$G^\ul(v)$, $G^\dl(v)$, and $G^\dr(v)$ are defined analogously.

\begin{figure}[tb]
    \centering
    \subfigure[]{
    \includegraphics{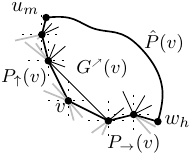}\label{fig:lemma2-a}
    }
    \hfil
    \subfigure[]{
    \includegraphics{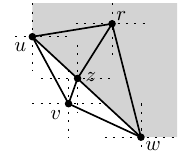}\label{fig:lemma2-b}
    }
    \caption{(a) Graph $G^\capur(v)$ for a vertex $v$ with $\capNur(v) = \emptyset$. Black edges belong to $G^\capur(v)$, while gray edges  do not. Bold edges compose the paths $P_\uparrow(v)$, $\hat P(v)$, and $P_\rightarrow(v)$ delimiting  $G^\capur(v)$. (b) Illustration for the case in which edge $(u,w)$ is not incident to the outer face of $G$. The gray-shaded region is the one where $r$ can lie in  $G$, since $r \notin \capNul(u)$ and $r \notin \capNdr(w)$.}\label{fig:lemma2}  
\end{figure}

We now proceed with the proof of the statement.
Assume that there exists at least one internal angle~$\angle{v,f}$ with category $180^\circ$ in $G$.

We describe the case in which $\Nur(v)=\emptyset$; the other cases are analogous.
Consider the set $X^\ur$ of internal vertices such that for each $x \in X^\ur$ there exists a face $f_x$ with $A(\angle{x,f_x})=180^\circ$ and $\Nur(x)=\emptyset$. Let $v \in X^\ur$ be a vertex such that the rightmost neighbor $u$ of $v$ in $\Nul(v)$ and the leftmost neighbor $w$ of~$v$ in~$\Ndr(v)$ are not in~$X^\ur$. 
Since $G^\ur(u)$ and $G^\ur(w)$ are subgraphs of $G^\ur(v) \setminus v$, it follows that such a vertex exists.

Observe that, since $\Nur(v) = \emptyset$ and since $G$ is triangulated, the triangle $uvw$ is a face of $G$. Also, $u \in \Nul(w)$, by the bi-acyclicity of $(G,Q)$.
  
If edge $(u,w)$ is not incident to the outer face, then there exists a vertex~$r$ creating a face $(u,w,r)$ in~$G$; refer to Fig.~\ref{fig:lemma2-b}. 
We claim that $r \notin \Nul(u)$ and $r \notin \Ndr(w)$.
Namely, $r \in \Nul(u)$ would imply that $\Nur(u) = \emptyset$, since~$w \in \Ndr(u)$ and edges $(u,r)$ and $(u,w)$ are clockwise consecutive around $u$. 
Analogous considerations lead to conclude that $r \notin \Ndr(w)$.

Now, we subdivide $(u,w)$ with a vertex $z$ and augment the obtained $1$-subdivision of $G$ with two dummy edges $(v,z)$ and $(z,r)$ such that $u \in \Nul(z)$, $r \in \Nur(z)$, $w \in \Ndr(z)$, and $v \in \Ndl(z)$.

The obtained \qc graph $(G',Q')$ is internally-triangulated. We claim that the labeling $A_{Q'}$ determined by $Q'$ is angular. By construction, $A_{Q'}$ refines $A_{Q}$, and hence the Vertex Condition is satisfied by all the vertices different from $z$; the Vertex Condition for $z$ holds since all its four incident angles have category $90^\circ$.  Also, all the faces of $G'$ that are not incident to $z$ are also faces of $G$ and hence satisfy the Cycle Condition. Faces $(u,z,v)$ and $(v,z,w)$ have an angle with category $0^\circ$ at $u$ and $w$, respectively, while all the other angles have category $90^\circ$, and hence they satisfy the Cycle Condition; finally, faces $(u,r,z)$ and $(z,r,w)$ always have an angle with category $0^\circ$ and two angles with category $90^\circ$, regardless of the category of the angle at $r$ in $A_Q$.
By Lemma~\ref{le:angular-no-source}, we have that $(G',Q')$ is bi-acyclic and has no internal sources or sinks. 

We remark that the addition of the path between $v$ and $r$ via $z$ adds a constraint on the relative position of $v$ and $r$ in $G'$ that was not in $G$, namely that $r$ has to lie above and to the right of $v$. However, we proved (Lemma~\ref{le:angular-no-source}) that this does not alter the bi-acyclicity of $(G',Q')$. This could also be seen by observing that the presence of edge $(u,w)$ and of the quadrant constraints on $r$ imposed by vertices~$u$ and~$w$ implies that there exists no path in $G$ from $r$ to $v$ whose edges are either all directed upwards or all directed rightwards.

Finally, note that the two angles incident to $v$ that are delimited by edge $(v,z)$ get category $90^\circ$, since $z \in \Nur(v)$, and hence $A_{Q'}$ has fewer internal angles with category $180^\circ$ than $A_Q$.

The case in which the edge $(u,w)$ is incident to the outer face is simpler.
We subdivide $(u,w)$ with a vertex $z$, augment the
obtained $1$-subdivision of $G$ with a dummy edge $(v,z)$, and set $z \in \Nur(v)$, $z \in \Nul(w)$, and $z \in \Ndr(u)$; hence
obtaining a new internally-triangulated \qc graph $(G',Q')$. The proof that~$A_{Q'}$ is angular and has fewer internal angles with category $180^\circ$ than $A_Q$ proceeds as above.

Since in both cases we reduced the number of angles with category $180^\circ$ by at least one, iterating the above construction yields a \qc graph $(G^+,Q^+)$ whose corresponding labeling $A_{Q^+}$ is angular and has no angle with category $180^\circ$. Note that the number of angles with category $180^\circ$ is bounded by the number of internal faces of $G$, which is at most $2n-5$; in fact, each internal face can contain at most one of such angles, due to the Cycle condition. This implies that the algorithm performs at most $2n-5$ iterations, in order to obtain $G^+$ starting from $G$.

We claim that $G^+$ contains a $1$-subdivision of $G$ as a spanning subgraph. To prove this, observe that new vertices are added to $G$ only as subdivision vertices; also, once a subdivision edge or a dummy edge has been added at some step, it is not subdivided in any of the following steps. For the subdivision edge $(w,z)$, this depends on the fact that the (at most) two vertices~$v$ and $r$ creating a face with $w$ and $z$ are such that $z \in \Nur(v)$, $w \in \Ndr(v)$, $z \in \Ndl(r)$, and either $w \in \Ndl(r)$ or $w \in \Ndr(r)$; the proof for the other edges is analogous. Since the algorithm performs at most $2n-5$ iterations, and since each iteration introduces at most one subdivision vertex, we have that the number of such vertices is at most $2n-5$. Thus, $G^+$ contains at most $3n-5$ vertices. 

Further, $G^+$ is internally-triangulated and its outer face is a $1$-subdivision of the outer face of $G$ such that, if $d$ is the subdivision vertex of edge $(a,b)$ and $b \in \Npar(a,\circ)$, for some $\circ \in \sets$, then $d \in \Npar(a,\circ)$ and~$b \in\Npar(d,\circ)$. 

We augment $(G^+,Q^+)$ to a quasi-triangulated \qc
graph $(G^*,Q^*)$ by adding four vertices $w_\WN$, $w_\WW$, $w_\WS$, and
$w_\WE$ in such a way that they respect the required conditions of the definition of a quasi-triangulated \qc graph. 
Also, we add edges between the vertices on the outer face
 of~$(G^+,Q^+)$ and $w_\WN$, $w_\WW$, $w_\WS$, $w_\WE$ in such a way that each vertex of the outer face of~$(G^+,Q^+)$ has at least a neighbor in each of its quadrants, as follows (refer to Fig.~\ref{fig:quasi-triangulated}). For each vertex $v$ on the outer face of~$(G^+,Q^+)$ such that $\Nur(v)=\emptyset$ ($\Nul(v)=\emptyset$, $\Ndl(v)=\emptyset$, or $\Ndr(v)=\emptyset$), we connect $v$ to $w_{E}$ (to $w_N$, to $w_W$, or to $w_S$, respectively), and add $w_E$ to $\Nur(v)$ ($w_N$ to $\Nul(v)$, $w_W$ to $\Ndl(v)$, or $w_S$ to $\Ndr(v)$, respectively). 

Note that $G^*$ has at most $3n-1$ vertices, since it has been obtained from $G^+$ by adding four vertices. Also, by construction, all the internal angles at the vertices on the outer face of~$(G^+,Q^+)$ get labels at most $90^\circ$ in the labeling $A_{Q^*}$ 
determined by $Q^*$.  
Since the four external vertices satisfy the Vertex Condition, and since $A_{Q^*}$ refines $A_{Q}$, we have that the Vertex Condition is satisfied by all the internal vertices.
Further, all the faces of $G^*$ that are not incident to the four external vertices are also faces of $G^+$ and hence satisfy the Cycle Condition. 
Faces of $G^*$ that are incident to exactly one of $w_\WN$, $w_\WW$, $w_\WS$, $w_\WE$, say $w_\WN$, have the angle at $w_\WN$ with category $0^\circ$ and the two other angles with category $90^\circ$. Finally, the faces that are incident to two clockwise-consecutive external vertices, say $w_\WN$ and $w_\WE$, have the angle at $w_\WN$ with category $0^\circ$ and the two other angles with category $90^\circ$. Hence, all the faces satisfy the Cycle Condition.
By Lemma~\ref{le:angular-no-source}, we have that $(G^*,Q^*)$ is bi-acyclic and has no internal sources or sinks. 

We now show that the augmentation of $(G,Q)$ to $(G^*,Q^*$) can be performed in linear time. Clearly, we can construct $P_\uparrow(v)$, $P_\leftarrow(v)$, $P_\downarrow(v)$, and $P_\rightarrow(v)$ for all vertices in total linear time via a depth-first search. In fact, observe that a path $P_{\scriptstyle\triangle}(v)$ of a vertex $v$, with ${\scriptstyle\triangle} \in \{\uparrow, \leftarrow, \downarrow, \rightarrow\}$, passing through a vertex~$u$ entirely contains path $P_{\scriptstyle\triangle}(u)$. 
For each $\circ \in \sets$, consider the set $X^\circ$ of internal vertices such that for each $x \in X^\circ$ there exists a face $f_x$ with $A(\angle{x,f_x})=180^\circ$ and $\Npar(x,\circ)=\emptyset$. Let $\circ=\ur$; the other cases are analogous. Until $X^\ur = \emptyset$, we select a vertex $v \in X^\ur$ and test whether the rightmost neighbor $u$ of $v$ in $\Nul(v)$ and the leftmost neighbor $w$ of~$v$ in~$\Ndr(v)$ are not in $X^\ur$. If this is the case, we perform the augmentation described above, which can be easily performed in constant time, and remove~$v$ from~$X^\ur$. Otherwise, at least one of $u$ and $w$ belongs to $X^\ur$, say $u$, and we recursively repeat the test starting from $u$. It follows from the fact that $G^\ur(u)$ and $G^\ur(w)$ are subgraphs of $G^\ur(v) \setminus v$, that each vertex in $X^\ur$ appears exactly once in the recursion tree.
This concludes the proof of the lemma.
\end{proof}

We now present the main results of the section.

\newcommand{\thinternallytriangulatedcharacterization}{A triangulated
  \qc graph $(G,Q)$ is windrose-planar if and only if the
  labeling~$A_Q$ determined by its q-constraints is angular. Also, if
  $(G,Q)$ is windrose-planar and has $n$ vertices, then it admits a $1$-bend windrose-planar drawing on the $3n \times 3n$ grid, which can be constructed in $O(n)$ time and has at most $2n-5$ bends.}

\begin{theorem}\label{th:internally-triangulated-characterization}
  \thinternallytriangulatedcharacterization
\end{theorem}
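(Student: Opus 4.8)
The plan is to prove the stated equivalence in two directions and to obtain the drawing as a byproduct of the sufficiency argument, relying almost entirely on the machinery already developed. For the necessity direction I would assume that $(G,Q)$ is windrose-planar and fix a windrose-planar drawing $\Gamma$. After perturbing the segments incident to each vertex so that the first and last segment of every edge has slope close to $\pm 1$, Observation~\ref{obs:windrose-to-labeling} guarantees that $\Gamma$ is an angular drawing of $(G,A_{Q,L})$ for a suitable large-angle assignment $L$. Since $G$ is triangulated, the discussion following Observation~\ref{obs:triangulated-angular} shows that its angular labeling is unique and independent of $L$, so $A_{Q,L}=A_Q$. Lemma~\ref{lem:angular-labeling-necessary} then immediately yields that $A_Q$ satisfies the Vertex and Cycle conditions, i.e., that $A_Q$ is angular.

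For the converse, assume $A_Q$ is angular; the strategy is to reduce to the easy-to-draw situation of Lemma~\ref{le:4-constrained-drawing} by way of the augmentation of Lemma~\ref{le:4-constrained-augmentation}. Concretely, I would first apply Lemma~\ref{le:4-constrained-augmentation} to obtain, in linear time, a quasi-triangulated \qc graph $(G^*,Q^*)$ whose labeling $A_{Q^*}$ is angular with no internal $180^\circ$ angle and which contains a $1$-subdivision $\widetilde G$ of $G$ as a subgraph. Because the augmentation adds only $O(n)$ vertices and edges, $G^*$ has $O(n)$ vertices. As every internal angle of $(G^*,A_{Q^*})$ now has category $0^\circ$ or $90^\circ$, Lemma~\ref{le:4-constrained-drawing} produces, in $O(n)$ time, a straight-line windrose-planar drawing $\Gamma^*$ of $(G^*,Q^*)$ on an $O(n)\times O(n)$ grid.

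The final step is to read off a drawing of $G$ from $\Gamma^*$. I would restrict $\Gamma^*$ to the subgraph $\widetilde G$; since deleting the dummy vertices and edges of $G^*$ preserves planarity and the $xy$-monotonicity of the surviving curves, this remains a straight-line windrose-planar drawing. I then reinterpret $\widetilde G$ as a drawing of $G$ by treating each subdivision vertex as a bend. Because $\widetilde G$ is a $1$-subdivision, each edge of $G$ carries at most one subdivision vertex and hence at most one bend, and all vertices and bends already lie on the $O(n)\times O(n)$ grid. To confirm this is a windrose-planar drawing of $(G,Q)$, I would check that for each edge $(u,v)$ of $G$ subdivided by a vertex $d$ with $v\in\Npar(u,\circ)$, the augmentation was carried out so that $d\in\Npar(u,\circ)$ and $v\in\Npar(d,\circ)$; this forces the two-segment polyline $u$--$d$--$v$ to be $xy$-monotone and directed from $u$ toward $v$ exactly as $Q$ requires. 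Together with the unsubdivided edges, this yields the desired $1$-bend windrose-planar drawing in $O(n)$ total time, establishing both the ``if'' direction and the drawing claim.

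I expect the crux to be this last verification, namely ensuring that collapsing the subdivision vertices of $\widetilde G$ into bends returns a drawing that respects the \emph{original} q-constraints $Q$ rather than some refined constraints introduced during the augmentation. This hinges precisely on the quadrant placement of subdivision vertices guaranteed by Lemma~\ref{le:4-constrained-augmentation} (each subdivision vertex lies, relative to its two path endpoints, in the same quadrant those endpoints occupy relative to each other), together with the fact---also certified by that lemma---that no edge of $G$ is subdivided more than once, which is what secures the one-bend bound.
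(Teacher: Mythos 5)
Your proposal is correct and follows essentially the same route as the paper's own proof: necessity via Observation~\ref{obs:windrose-to-labeling} and Lemma~\ref{lem:angular-labeling-necessary}, and sufficiency by chaining Lemma~\ref{le:4-constrained-augmentation} with Lemma~\ref{le:4-constrained-drawing} and then collapsing subdivision vertices into bends. You also correctly pinpoint the same key verification the paper relies on, namely that each subdivision vertex $z$ of an edge $(u,w)$ with $w\in\Npar(u,\circ)$ satisfies $z\in\Npar(u,\circ)$ and $w\in\Npar(z,\circ)$, which guarantees the two-segment polyline respects the original q-constraints.
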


\begin{proof}
The necessity comes from Observation~\ref{obs:windrose-to-labeling} and Lemma~\ref{lem:angular-labeling-necessary}.
We prove the sufficiency. First, apply Lemma~\ref{le:4-constrained-augmentation} to construct in linear time a quasi-triangulated \qc graph $(G^*,Q^*)$ such that
\begin{inparaenum}[(i)]
  \item the labeling~$A_{Q^*}$ of $(G^*,Q^*)$ is angular and no internal angle has category $180^\circ$ in $A_{Q^*}$;
  \item $G^*$ has at most $3n-1$ vertices; and
  \item $G^*$ contains a $1$-subdivision of $G$ as a spanning subgraph.
\end{inparaenum}

The first two conditions imply that we can apply Lemma~\ref{le:4-constrained-drawing} to construct in linear time a straight-line windrose-planar drawing $\Gamma^*$ of $(G^*,Q^*)$ on the $3n \times 3n$ grid. 
We construct a $1$-bend windrose-planar drawing $\Gamma$ of
$(G,Q)$ starting from $\Gamma^*$, as follows. Initialize $\Gamma$ as $\Gamma^*$ restricted to the vertices and edges of $G$. Then, for each edge $(u,w)$ of $G$ that is not in $G^*$, consider the subdivision vertex $z$ of $(u,w)$ in $G^*$. Draw edge $(u,v)$ in $\Gamma$ with a $1$-bend poly-line whose two straight-line segments coincide with the drawing of the edges $(u,z)$ and $(z,w)$ in $\Gamma^*$. Edge $(u,w)$ is crossing-free in~$\Gamma$ since~$(u,z)$ and $(z,w)$ are crossing-free in~$\Gamma^*$. Also, it is drawn in~$\Gamma$ as an $xy$-monotone curve and~$u$ lies in the correct quadrant of~$w$ due to the fact that~$\Gamma^*$ is a straight-line drawing and that, by construction, if $u \in \Npar(w,\circ)$ in~$G$, for some $\circ \in \sets$, then $z \in \Npar(w,\circ)$ and $u \in \Npar(z,\circ)$ in~$G^*$. Since each bend of $\Gamma$ corresponds to a subdivision vertex of $G^*$, we have that $\Gamma$ contains at most $2n-5$ bends. Also, all the vertices and the bend-points of all the edges of $G$ lie on the $3n \times 3n$ grid in $\Gamma$, since they correspond to vertices of $G^*$, which lie on such a grid in $\Gamma^*$.
\end{proof}

As observed in Section~\ref{se:upward}, {\sc Windrose Planarity} is a generalization of {\sc Upward Planarity}. Hence, our results also extend to {\sc Upward Planarity}. In particular, Theorem~\ref{th:internally-triangulated-characterization} gives a new proof of the result by Di Battista, Tamassia, and Tollis~\cite{dtt-arsdud-92}, who presented an $O(n)$-time algorithm for constructing 1-bend upward planar drawings with at most $2n - 5$ bends on a grid with $O(n^2)$ area. By exploiting bitonic st-orderings, Gronemann~\cite{grone-bstoupg-16} has recently improved on this result; in fact, he has shown that every upward planar $n$-vertex graph admits a 1-bend upward planar drawing within quadratic area having at most $n - 3$ bends in total. Investigating the power of bitonic st-orderings to construct 1-bend windrose-planar drawings with small number of bends in quadratic area represents an intriguing research direction.

The characterization provided in Theorem~\ref{th:internally-triangulated-characterization} directly yields a linear-time testing algorithm for the class of triangulated \qc graphs, as the conditions of the theorem can be tested efficiently.

\begin{theorem}\label{th:internally-triangulated-test}
It is possible to test in $O(n)$ time whether a triangulated \qc graph
$(G,Q)$ with $n$ vertices is windrose-planar and, if so, to build a 1-bend windrose-planar drawing of it with at most $2n-5$ bends on the $3n \times 3n$ grid.
\end{theorem}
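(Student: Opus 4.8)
The plan is to read this theorem as the algorithmic form of the characterization in Theorem~\ref{th:internally-triangulated-characterization}: since $(G,Q)$ is windrose-planar exactly when the labeling $A_Q$ determined by $Q$ is angular, the whole task reduces to computing $A_Q$, deciding the two defining conditions of an angular labeling, and—if they hold—invoking the already-proved constructive results. I would therefore split the argument into a computation step, a verification step, and a drawing step, and argue that each runs in $O(n)$ time on a triangulation, which has $O(n)$ vertices, edges, faces, and angles.

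First I would compute $A_Q$. For an angle $\angle{e,e'}$ with $e=(v,u)$ and $e'=(v,w)$ consecutive in clockwise order around $v$, its category is $90^\circ$ times the number of quadrant boundaries crossed when moving clockwise from the quadrant containing $u$ to the one containing $w$, using the clockwise quadrant order of Observation~\ref{obs:planarity}. Each category is obtained in $O(1)$ time from $Q$, so all internal angle categories are fixed in $O(n)$ time. The only delicate point is the $0^\circ$ versus $360^\circ$ ambiguity at ambiguous vertices; by Observation~\ref{obs:triangulated-angular} no internal angle may take $360^\circ$, so this ambiguity is confined to the outer face, where the category of each outer angle is in turn forced by the Vertex condition (the outer angle at an external vertex equals $360^\circ$ minus the sum of its internal angles). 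Hence a candidate labeling is fully determined and computed in $O(n)$ time, consistent with the fact that an angular labeling of a triangulation, if one exists, is unique.

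Next I would test whether this candidate is angular by checking its two conditions directly. The Vertex condition is checked at each vertex by summing the categories of its incident angles and comparing with $360^\circ$; the Cycle condition is checked at each internal (triangular) face by testing whether the three categories sum to $180^\circ$ and at the outer face of length $k$ by testing the sum $k\cdot 180^\circ + 360^\circ$. One should also reject any candidate in which a forced category leaves the admissible set $\{0^\circ,90^\circ,180^\circ,270^\circ,360^\circ\}$ or in which an internal angle exceeds $180^\circ$. Each angle contributes to exactly one vertex sum and one face sum, so all checks together cost $O(n)$ time, and by Theorem~\ref{th:internally-triangulated-characterization} they succeed precisely when $(G,Q)$ is windrose-planar.

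Finally, when the tests succeed I would apply the constructive half of Theorem~\ref{th:internally-triangulated-characterization}: Lemma~\ref{le:4-constrained-augmentation} builds in linear time a quasi-triangulated graph $(G^*,Q^*)$ whose angular labeling has no $180^\circ$ internal angle and which contains a $1$-subdivision of $G$; Lemma~\ref{le:4-constrained-drawing} then produces in linear time a straight-line drawing of $(G^*,Q^*)$ on an $O(n)\times O(n)$ grid; and drawing each original edge $(u,w)$ through its subdivision vertex $z$ as a $1$-bend polyline yields the required $1$-bend windrose-planar drawing of $(G,Q)$ on an $O(n)\times O(n)$ grid. I expect no genuine obstacle in this step, as all three invoked results are already established and explicitly linear-time; the only part needing care is the bookkeeping in the computation step—resolving the outer-face large-angle categories and the $0^\circ/360^\circ$ ambiguity—so that the labeling is produced in linear time and everything else follows immediately from the characterization.
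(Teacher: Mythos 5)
Your proposal is correct and follows essentially the same route as the paper, which derives this theorem directly from the characterization in Theorem~\ref{th:internally-triangulated-characterization}: compute the (unique) labeling $A_Q$, verify the Vertex and Cycle conditions in linear time, and invoke the linear-time constructions of Lemmas~\ref{le:4-constrained-augmentation} and~\ref{le:4-constrained-drawing} for the drawing. Your explicit handling of the $0^\circ$/$360^\circ$ ambiguity (confined to the outer face via Observation~\ref{obs:triangulated-angular}) is a correct elaboration of a detail the paper leaves implicit.
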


\section{Testing windrose planarity with fixed planar embedding}\label{sec:necessary}

We now extend the results of Section~\ref{se:triangulated} to general \qc plane graphs. Namely, we show that a \qc plane graph is windrose-planar if and only if its q-constraints determine an angular labeling. However, while a triangulated \qc graph admits a unique labeling, a general \qc plane graph may determine several labelings, one for each large-angle assignment, as discussed in Section~\ref{sec:angle-categories}.

We first show that a large-angle assignment for all vertices whose corresponding labeling is angular, if one exists, can be found via a simple flow network that is inspired by the one devised 
by Bertolazzi,
Di Battista, Liotta, and Mannino~\cite{bdlm-udtg-94} 
for testing upward-planarity. The proof of the next lemma is based on \mbox{such a flow network}.

\begin{lemma}
  \label{lem:large-angle-assignment}
Given a \qc plane graph $(G,Q)$, it can be determined in $O(n \log^3 n)$ time whether there exists a large-angle assignment $L$ such that the corresponding labeling $A_{Q,L}$ is angular.
\end{lemma}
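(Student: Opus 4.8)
The plan is to model the search for a large-angle assignment whose induced labeling is angular as an integer flow problem in a planar network, mirroring the approach of Bertolazzi et al.~\cite{bdlm-udtg-94} for upward planarity. Recall from the preliminaries that, given $Q$, the angle categories of $90^\circ$, $180^\circ$, and $270^\circ$ are forced at every non-ambiguous angle, and the only freedom lies at the ambiguous vertices, where one incident angle must be promoted from $0^\circ$ to $360^\circ$ by the large-angle assignment $L$. Thus the combinatorial object we must produce is a choice, for each ambiguous vertex, of which of its incident faces receives the ``large'' angle, subject to the global Vertex and Cycle conditions of an angular labeling. Since the Vertex condition is automatically satisfied by the construction of $A_{Q,L}$ from any $L$ (the forced categories already sum correctly and promoting $0^\circ$ to $360^\circ$ adds $360^\circ$ at exactly the vertex being made ambiguous-resolved, but this must be checked to keep the vertex sum invariant), the real constraints are the Cycle conditions on the faces.

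First I would set up the flow network $N$ on the planar graph $G$. I introduce a node for each face of $G$ (including the outer face) and, for each ambiguous vertex, a node representing the unit of ``large angle'' that it must ship to exactly one of its incident faces; equivalently, one can route a unit of flow from each ambiguous vertex into the face that receives its $360^\circ$ angle. The capacities are chosen so that a feasible flow corresponds precisely to a large-angle assignment, and the supplies/demands at the face nodes are set so that the conservation constraint at each face node encodes exactly the Cycle condition for that face: an internal face of length $k$ must receive a total angle of $k\cdot 180^\circ - 360^\circ$, and since the forced categories contribute a fixed amount, the residual that must be supplied by the large angles is a fixed integer determined by $k$ and by how many $0^\circ$ (versus promoted $360^\circ$) angles the face contains. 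The outer face has the modified demand $k\cdot 180^\circ + 360^\circ$. I would verify that a feasible integral flow in $N$ exists if and only if a large-angle assignment $L$ with $A_{Q,L}$ angular exists, by translating a flow into an assignment and vice versa.

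Next I would establish the running time. The network $N$ is planar (its structure is derived from the planar incidence of vertices and faces of $G$), has $O(n)$ nodes and $O(n)$ arcs, and the demands are polynomially bounded integers. A feasible flow in a planar network of this size, equivalently a minimum-cost or maximum flow satisfying the prescribed demands, can be computed using the multiple-source multiple-sink planar flow machinery; the $O(n\log^3 n)$ bound follows from invoking the planar flow algorithm of Borradaile--Klein or the subsequent improvements, which is where the $\log^3 n$ factor comes from. I would state the precise algorithm being cited and confirm that its hypotheses (planarity, integral capacities, integral demands) are met, so that the returned flow is integral and can be read off as the desired assignment.

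The main obstacle, I expect, will be the bookkeeping in the first step: setting up the supplies, demands, and capacities so that flow conservation at the face nodes is \emph{exactly} equivalent to the Cycle condition, and simultaneously ensuring that the Vertex condition is preserved and that each ambiguous vertex promotes exactly one angle (a unit-flow constraint), all while keeping the network planar so that the fast planar flow algorithm applies. In particular, one must be careful that the angular sums at ambiguous versus non-ambiguous vertices are handled uniformly, and that the outer face's $+360^\circ$ correction is correctly absorbed into its demand. Once the correspondence between integral flows and angular large-angle assignments is proved to be a bijection (or at least that feasibility is equivalent), the complexity claim is a direct consequence of the cited planar flow result.
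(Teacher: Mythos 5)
Your proposal matches the paper's proof essentially exactly: the paper also builds a planar flow network on the faces and ambiguous vertices, with unit-capacity arcs from each ambiguous vertex to its incident faces, sources of out-flow~$1$ at ambiguous vertices, sinks with demand $d(f)$ at faces computed from the residual of the Cycle condition, and then invokes the planar max-flow algorithm of Borradaile et al.\ (with an added super-source and super-sink) to obtain the $O(n\log^3 n)$ bound. The only slight wobble is your hedging about the Vertex condition --- for an ambiguous vertex it simply forces exactly one incident angle to be $360^\circ$, which is precisely what the unit out-flow encodes --- but this does not affect correctness.
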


\begin{proof}
Recall that the angle assignment around a vertex is unique, except for the ambiguous vertices of $(G,Q)$.
For such vertices, by the Vertex Condition, exactly one of their incident angles needs to be assigned an angle category of $360^\circ$.
Consider an internal face $f$ whose facial cycle has length $k$.  By
the Cycle Condition the angles incident to $f$ must sum up to $k \cdot
180^\circ - 360^\circ$.  Since we know the angles at all non-ambiguous
vertices, we can compute a demand $d(f)$ of how many ambiguous vertices
must assign an angle of $360^\circ$ to $f$ such that the Cycle
condition is satisfied for $f$.  
Similarly, a demand can be computed for the outer face.   
Clearly, it is a necessary condition that $d(f)$ is a non-negative integer.

Altogether, we thus need to find an assignment of large angles
($360^\circ$) of ambiguous vertices to faces such that each ambiguous
vertex assigns a large angle to exactly one incident face and such that each
face~$f$ receives $d(f)$ large angles.  We model this as a flow
network.  Let $F$ be the set of faces of $G$ and let~$B$ denote the
set of ambiguous vertices.  The flow network $N$ has vertex set $F
\cup B$ and it contains arcs with capacity 1 connecting each vertex $b
\in B$ to its incident faces.  The vertices in $B$ are sources with
maximum out-flow~1, the vertices in $F$ are sinks with maximum in-flow
$d(f)$.
By construction, the maximum flows of $N$ where every vertex in $B$
has out-flow~1 and each face $f$ has in-flow $d(f)$ correspond
bijectively to the large-angle assignments of $(G,Q)$ that result in an
angular labeling.

Observe that $N$ is planar, and hence we can use the algorithm by
Borradaile,
Klein, Mozes, Nussbaum, and Wulff-Nilsen~\cite{bkmn-mmmdp-11} to compute such a flow 
in $O(n\log^3 n)$ time.  To control the maximum in-flow and maximum out-flow
of vertices, we simply connect a new source vertex to each source and we
connect each sink to a new sink vertex, so that  
the capacities of their incident arcs can be used to limit the maximum in- and out-flows.
\end{proof}

In the following, let $G$ be a plane graph with an angular labeling
$A$.  We show how to augment $G$ to a triangulated plane graph $G'$
with an angular labeling $A'$ that refines $A$.  Then, an angular
drawing of~$(G,A)$ can be obtained from an angular drawing of $(G',A')$, which
exists by Theorem~\ref{th:internally-triangulated-characterization}.

\newcommand{\lemmatriangulateangularlabeling}{Let $G$ be a plane graph with an angular labeling $A$.  Then, $G$ can
  be augmented in linear time to a triangulated plane graph $G'$ with angular labeling $A'$ that refines $A$.}

\begin{lemma}
  \label{lem:triangulate-angular-labeling}
  \lemmatriangulateangularlabeling
\end{lemma}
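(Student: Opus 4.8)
The plan is to augment $G$ to a triangulation $G'$ one angle at a time, always adding edges inside faces in a way that is compatible with the angular labeling $A$, using Lemma~\ref{lem:modifications} to guarantee that the refined labeling stays angular. The key structural observation is that the Cycle condition forces a useful invariant: in any face $f$ of length $k$, the average angle category is $(k\cdot 180^\circ - 360^\circ)/k < 180^\circ$ for internal faces, so every internal face with $k \ge 4$ must contain at least one angle of category at most $90^\circ$ (in fact one can argue there is an angle that is strictly ``convex'' in the sense of being $0^\circ$ or $90^\circ$). The idea is to use such a convex corner as the apex from which to split off a triangle.

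Concretely, I would process each internal face $f$ of length $k>3$ as follows. First I would locate an angle $\angle{e,e'}$ at a vertex $v$ on $f$ whose category is $0^\circ$ or $90^\circ$; such an angle exists by the averaging argument above. Adding a chord $e''$ that cuts across this corner splits $f$ into a triangle $f_1$ (bounded by the two edges of the small angle and the new chord) and a residual face $f_2$ of length $k-1$. I would choose the labeling $A'$ of the two newly created angles at the endpoints of $e''$ so that $f_1$ satisfies the Cycle condition (it must sum to $180^\circ$, which is achievable since the apex angle is at most $90^\circ$) and so that the refinement equations hold at $v$ and at the other endpoint of $e''$. By Lemma~\ref{lem:modifications}, since $A$ is angular and $f_1$ satisfies the Cycle condition, the refined labeling $A'$ is angular as well; in particular $f_2$ automatically satisfies the Cycle condition. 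Iterating reduces the length of the face by one each time, so after finitely many chord insertions every internal face is a triangle. The outer face is handled separately using its version of the Cycle condition ($k\cdot180^\circ + 360^\circ$), but the same split-off-a-convex-corner strategy applies once one identifies the appropriate reflex/convex structure there.

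I expect the main obstacle to be the choice of \emph{which} chord to add and \emph{how} to distribute the angle categories at the two new corners so that (a) the resulting triangle $f_1$ is legitimately realizable, i.e.\ its three angle categories are each in $\{0^\circ,90^\circ,180^\circ\}$ and sum to $180^\circ$, and (b) the residual angles at $v$ and at the far endpoint remain nonnegative and respect the refinement constraint $A(\angle{e,e'}) = \sum_{\angle{l,l'}} A(\angle{l,l'})$. The delicate point is that after cutting off the convex corner, the leftover angle at $v$ must still be a valid category; for instance, if the apex had category $90^\circ$, the chord should be placed so that $f_1$ receives the full $90^\circ$ at $v$ (leaving $0^\circ$, i.e.\ $v$ no longer touches $f_2$), or split appropriately. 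Ensuring a consistent choice at the far endpoint — where we must introduce a new angle category summing correctly with the existing structure — is where a careful case analysis on the categories is required.

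To keep the running time linear, I would maintain the faces in a suitable data structure (e.g.\ a doubly connected edge list) and, rather than rescanning each shrunk face from scratch, advance a pointer along the facial cycle to find the next convex corner, amortizing the total work against the number of inserted edges, which is $O(n)$ since a triangulation of an $n$-vertex plane graph has $O(n)$ edges. Once $G'$ is fully triangulated with angular labeling $A'$ refining $A$, Theorem~\ref{th:internally-triangulated-characterization} supplies an angular (windrose-planar) drawing of $(G',A')$; restricting that drawing to the edges of $G$ and discarding the dummy chords yields an angular drawing of $(G,A)$, because $A'$ refines $A$ guarantees the retained angles aggregate back to the prescribed categories of $A$.
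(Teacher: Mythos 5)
Your overall strategy -- iteratively cutting triangular ears off non-triangular faces, choosing the two new angle categories so that the ear satisfies the Cycle condition, and invoking Lemma~\ref{lem:modifications} to conclude that the refined labeling stays angular -- is exactly the paper's strategy, and your linear-time bookkeeping matches the paper's stack-based scan. However, three concrete gaps remain, and you have only flagged (not closed) the first. First, ``cut at any angle of category $0^\circ$ or $90^\circ$'' does not work: for the internal face with cyclic angle sequence $(0^\circ, 90^\circ, 0^\circ, 270^\circ)$, cutting the ear whose apex is the $90^\circ$ corner is impossible, since the two flanking angles are $0^\circ$ and cannot donate the missing $90^\circ$ to the new triangle without going negative. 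The paper instead proves a sharper combinatorial claim -- every such sequence of length at least $4$ is either $(90^\circ,90^\circ,90^\circ,90^\circ)$ or contains an adjacent pair with one angle in $\{180^\circ,270^\circ,360^\circ\}$ and the other in $\{0^\circ,90^\circ\}$ -- and cuts the ear so that the \emph{large} angle absorbs the deficit; the explicit splits are the content of Tables~\ref{tab:ear} and~\ref{tab:ear2}. Second, you never rule out that the chord you want to insert is already an edge of $G$ running outside the face (or that its two endpoints coincide), which would destroy simplicity. The paper handles this with two separate arguments: for the patterns of Table~\ref{tab:ear}, a pre-existing parallel edge would bound a $2$-gon whose interior face must, by Corollary~\ref{cor:subgraph-labeling}, have all angles $0^\circ$, contradicting the positivity of both new angles at $v_2$; for the patterns of Table~\ref{tab:ear2}, there are two candidate diagonals and planarity forbids both from being blocked.

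Third, your treatment of the outer face fails outright. An ear cut off the outer face becomes an \emph{internal} triangle whose apex receives the full outer angle of the middle vertex, so that angle must be at most $180^\circ$; but the outer Cycle condition forces an average angle exceeding $180^\circ$, and for instance the outer face of a quadrilateral with angles $(270^\circ,270^\circ,270^\circ,270^\circ)$ admits no valid ear at all. This is why the paper does not apply the ear-cutting strategy to the outer face: it first encloses $G$ in a new triangle $(a,b,c)$, attaches it by a single edge $(v,a)$ with explicitly chosen angle categories, and thereby converts the old outer face into an ordinary internal face before the iteration begins. Without this step (or an equivalent one), your induction never terminates with a triangulated outer boundary.
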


\begin{proof}
  If the outer face of $G$ is not a triangle, let $e$ and $e'$ be two consecutive edges on the outer face of $G$, sharing a vertex $v$.
  We add a new triangle $(a,b,c)$ that contains $G$ in its interior and an edge $(v,a)$, and we set 
  $A(\angle{(a,b),(a,v)})=A(\angle{(a,v),(a,c)})=0$, 
  $A(\angle{(b,c),(b,a)})=A(\angle{(c,a),(c,b)})=90^\circ$, 
  $A(\angle{e,(v,a)})=0$, and $A(\angle{(v,a),e'})=A(\angle{e,e'})$. Clearly, labeling $A$ is still angular.

The original outer face has been turned into an additional interior face of this new graph, again denoted by $G$. Note that this graph is connected and its outer face is a triangle. 

In the following, we will iteratively add edges until $G$ becomes triangulated. We show that, as long as $G$ is not triangulated, we can add an edge $e$ to $G$ such that the resulting graph $G+e$ is plane and admits an angular labeling $A'$ that refines $A$.  The lemma follows by induction.

Let $f$ be an internal face of $G$ whose facial cycle has length $k\ge 4$. We have the following claim.

\begin{claimx}
  Let $\alpha_1,\alpha_2,\ldots,\alpha_k$ be a cyclic sequence of
$k\ge4$
  angles from the range
$\{0^\circ, 90^\circ, 180^\circ, 270^\circ,$ $360^\circ\}$
with sum $k \cdot 180^\circ-360^\circ$. Such a sequence is either the sequence
$(90^\circ, 90^\circ,90^\circ, 90^\circ)$ of length $4$, or it
 must contain an adjacent
 pair $(\gamma,\delta)$ with 
    $\gamma\in\{180^\circ,270^\circ,360^\circ\}$ and 
    $\delta\in\{0^\circ,90^\circ\}$ \mbox{(or vice versa).}
\end{claimx}
\begin{proof}
  Observe that the average angle is less than $180^\circ$.  Hence, if
  one of the angles above average, i.e., $180^\circ, 270^\circ, 360^\circ$,
  occurs at all, some of the other two angles $0^\circ, 90^\circ$ must
  also occur, and somewhere the two classes of angles must appear in
  adjacent positions.  We are left
  with the case that only angles $0^\circ$ or $90^\circ$ appear; this
  allows only the cyclic sequences
  $(90^\circ+90^\circ+90^\circ+90^\circ)$, $(90^\circ+90^\circ+0^\circ)$, and
  $(0^\circ+0^\circ)$, but the last two are excluded because they have
  \mbox{less than four elements.}
\end{proof}
\begin{figure}[tb]
  \centering
  \subfigure[]{
  \includegraphics[page=1,height=0.15\textwidth]{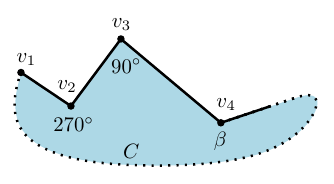}
  \label{fig:ear-cut-1}
  }\hfil
  \subfigure[]{
  \includegraphics[page=2,height=0.15\textwidth]{img/ear-cut}
  \label{fig:ear-cut-2}
  }
  \caption{
(a) A face $f$ with four consecutive vertices $v_1v_2v_3v_4$.
(b) Cutting a triangular ear $v_2v_3v_4$ from $f$. The dashed edges are the new edge $v_2v_4$ and parallel edges that might already exist in the graph.}
  \label{fig:ear-cut}
\end{figure}

Then, one of the patterns of the previous claim must occur, and, in each case, we will show that we can cut off a triangular ``ear'' from $f$.

We will illustrate the operation for the pattern $(270^\circ, 90^\circ)$. Together with the adjacent angles, we have a subsequence $(\alpha, 270^\circ, 90^\circ,\beta)$ at four vertices $v_1,v_2,v_3,v_4$. We will add a \emph{diagonal} edge between $v_2$ and $v_4$, transforming the angle sequence into $(\alpha, 180^\circ+90^\circ, 90^\circ,0^\circ+\beta)$. This sequence indicates how the angles at $v_2$ and $v_4$ are split by the new edge; see Fig.~\ref{fig:ear-cut}.
 The angles in the triangle $v_2v_3v_4$ are $(90^\circ, 90^\circ,0^\circ)$ (the subsequence between the two $+$ signs). According to Lemma~\ref{lem:modifications}, it suffices to check that the angles were correctly split:  $270^\circ=180^\circ+90^\circ$ and $\beta=0^\circ+\beta$; and that the triangular face has the correct angle sum: $90^\circ+ 90^\circ+0^\circ =180^\circ$. This means that we have split the $k$-cycle into a triangle and a $(k-1)$-cycle such that
the Vertex and Cycle Conditions are satisfied (Lemma~\ref{lem:modifications}).
It remains to show that $(v_2,v_4)$ did not create a multiple edge.
 Fig.~\ref{fig:ear-cut-2} shows the two possible ways how $(v_2,v_4)$
 might already be connected by an edge running outside the face $f$.
In both cases, consider the internal face $g$ of the resulting $2$-gon $(v_2,v_4)$ after removing its interior. By Corollary~\ref{cor:subgraph-labeling}, it satisfies the Cycle Condition, that is, all its internal angles have category $0^\circ$. On the other hand, it has a nonzero angle at~$v_2$, namely, at least $90^\circ$ if $v_3$ is inside $g$, and at least $180^\circ$, otherwise. This is a contradiction. 
 It is also obvious that $v_2$ and $v_4$ are distinct vertices, because otherwise the angle at $v_3$ would have to be $360^\circ$.

We represent the above example and three other cases in tabular form
in Table~\ref{tab:ear}. The treatment is similar in all cases. The
crucial property for excluding parallel edges is that both new angles
at $v_2$ are always positive.

Table~\ref{tab:ear2} shows the three remaining cases of the previous
claim. Here, we use a different argument to avoid
multiple edges: we always have two choices for inserting a diagonal,
$(v_1,v_3)$ or $(v_2,v_4)$. By planarity, these pairs cannot \emph{both} be
connected outside $C$.

\begin{table*}[tb]
  \centering
  \caption{Cutting off an ear from a face.}
\subtable[The four cases involving angles $270^\circ$ and $360^\circ$. \label{tab:ear}]{
\hspace{5mm}
  \begin{tabular}[t]{|cccc|}
\hline
    &$v_2$& $v_3$ & $v_4$\\
\hline
\hline
    &$270^\circ$,& $90^\circ$, & $\beta$\\
\hline 
$\Rightarrow$&
$180^\circ+90^\circ$,&
$90^\circ$,
&$0^\circ+\beta$\\

\hline
\hline
    &$360^\circ$,& $90^\circ$, & $\beta$\\
    
\hline
$\Rightarrow$&
$270^\circ+90^\circ$,&
$90^\circ$,
&$0^\circ+\beta$\\
\hline
\hline
&$270^\circ$,& $0^\circ$, & $\beta$\\
\hline
$\Rightarrow$&
$90^\circ+180^\circ$,&
$0^\circ$,
&$0^\circ+\beta$\\
\hline
\hline
&$360^\circ$,& $0^\circ$, & $\beta$\\
\hline
$\Rightarrow$&
$180^\circ+180^\circ$,&
$0^\circ$,
&$0^\circ+\beta$\\
\hline
  \end{tabular}
  \hspace{5mm}
}
\hfil
\subtable[The three cases not involving
angles $270^\circ$ and $360^\circ$. \label{tab:ear2}]{
\begin{tabular}[t]{|ccccc|}
\hline
    &$v_1$    &$v_2$& $v_3$ & $v_4$\\
\hline
\hline
    &$\alpha$,& $180^\circ$,& $0^\circ$, & $\beta$\\
\hline
$\Rightarrow$
&$\alpha+0^\circ$,& $180^\circ$,& $0^\circ+0^\circ$, & $\beta$\\
$\Rightarrow$
    &$\alpha$,& $0^\circ+180^\circ$,& $0^\circ$, & $0^\circ+\beta$\\
\hline
\hline
    &$\alpha$,& $180^\circ$,& $90^\circ$, & $\beta$\\
\hline
$\Rightarrow$
&$\alpha+0^\circ$,& $180^\circ$,& $0^\circ+90^\circ$, & $\beta$\\
$\Rightarrow$
    &$\alpha$,& $90^\circ+90^\circ$,& $90^\circ$, & $0^\circ+\beta$\\
\hline
\hline
    &$90^\circ$,& $90^\circ$,& $90^\circ$, & $90^\circ$\\
\hline
$\Rightarrow$
    &$90^\circ+0^\circ$,& $90^\circ$,& $90^\circ+0^\circ$, & $90^\circ$\\
$\Rightarrow$
    &$90^\circ$,& $0^\circ+90^\circ$,& $90^\circ$, & $0^\circ+90^\circ$\\
\hline
  \end{tabular}
} 
\end{table*}

To show that the augmentation of $G$ can be performed in linear time, we describe how to augment a face $f$ in time linear in the size of $f$. We consider the angles around $f$ in clockwise order and push them onto a stack. After each push operation, we check whether the three top-most angles or the four top-most angles on the stack match one of the patterns in Table~\ref{tab:ear} or~\ref{tab:ear2}, respectively; if this is the case, then we apply the augmentation and the corresponding transformation of the stack. This transformation updates the three top-most or the four top-most positions of the stack and reduces the stack size by one. We exhaustively perform such reductions before we push the next angle onto the stack. Clearly, the number of reductions is bounded by the number of push operations and each reduction takes constant time.
\end{proof}

By using Lemma~\ref{lem:triangulate-angular-labeling}, we can prove the following.

\newcommand{\lemmaplanecharacterization}{A \qc plane graph $(G,Q)$ is windrose-planar if and only if it admits a large-angle assignment $L$ whose corresponding labeling $A_{Q,L}$ is angular.
Also, if $(G,Q)$ is windrose-planar and $A_{Q,L}$ is given, then it admits a $1$-bend windrose-planar drawing with at most $2n-5$ bends on the $3n \times 3n$ grid, which can be computed in $O(n)$ time, where $n$ is the number of vertices of $G$.}

\begin{theorem}\label{th:plane-characterization}
\lemmaplanecharacterization
\end{theorem}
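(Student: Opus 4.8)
The plan is to prove both directions of the characterization and then establish the drawing claim, leveraging the machinery already developed for triangulated graphs and for constructing angular labelings.

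For the \emph{necessity} direction, I would argue as in the triangulated case: if $(G,Q)$ is windrose-planar, take any windrose-planar drawing $\Gamma$ and, by the discussion preceding Observation~\ref{obs:windrose-to-labeling}, transform it so that the first and last segment of each edge has slope close to $\pm1$. This reading of $\Gamma$ determines a large-angle assignment $L$ (the choice of which incident angle is the $360^\circ$ angle at each ambiguous vertex) together with the induced labeling $A_{Q,L}$. By Observation~\ref{obs:windrose-to-labeling}, $\Gamma$ is then an angular drawing of $(G,A_{Q,L})$, so Lemma~\ref{lem:angular-labeling-necessary} yields that $A_{Q,L}$ satisfies the Vertex and Cycle conditions, i.e., it is angular. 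This furnishes the required large-angle assignment.

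For the \emph{sufficiency} direction, suppose we are given a large-angle assignment $L$ such that $A_{Q,L}$ is angular. I would first invoke Lemma~\ref{lem:triangulate-angular-labeling} to augment $G$ in linear time to a triangulated plane graph $G'$ carrying an angular labeling $A'$ that refines $A_{Q,L}$. By Observation~\ref{obs:labeling-to-constraints}, $A'$ defines q-constraints $Q'$ making $(G',Q')$ a triangulated \qc graph whose determined labeling is angular; hence Theorem~\ref{th:internally-triangulated-characterization} applies and gives a $1$-bend windrose-planar drawing $\Gamma'$ of $(G',Q')$ on an $O(n)\times O(n)$ grid, computable in $O(n)$ time. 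Restricting $\Gamma'$ to the vertices and edges of $G$ and discarding the augmentation edges yields a drawing $\Gamma$ of $G$. Because $A'$ refines $A_{Q,L}$, the edge directions and angle categories of $G$ are preserved by the restriction, so each retained edge remains $xy$-monotone and each vertex $u$ still lies in the quadrant of $v$ prescribed by $Q$; planarity is inherited since deleting edges from a planar drawing keeps it planar. Thus $\Gamma$ is a $1$-bend windrose-planar drawing of $(G,Q)$ on an $O(n)\times O(n)$ grid, and the whole pipeline runs in $O(n)$ time once $A_{Q,L}$ is supplied, which establishes the second sentence of the statement as well.

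The main obstacle I anticipate is the verification that restricting the drawing $\Gamma'$ back to $G$ faithfully realizes the \emph{original} q-constraints $Q$ rather than some coarser or shifted version: one must check that the refinement relation between $A'$ and $A_{Q,L}$ guarantees that a deleted chain of augmentation angles around a vertex collapses back exactly to the category dictated by $Q$ (for instance, that split angles recombine correctly and that no quadrant of a neighbor is inadvertently changed by the removal). This hinges on the definition of refinement, $A(\angle{e,e'}) = \sum_{\angle{l,l'}\in C(e,e')} A'(\angle{l,l'})$, and on Observation~\ref{obs:labeling-to-constraints} relating labelings to q-constraints; I would make this precise by noting that for each original angle the sum of the refined angle categories between the two original edges equals the original category, so the relative quadrant of each original neighbor is unchanged. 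Everything else is a routine assembly of the cited lemmas.
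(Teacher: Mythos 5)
Your proof is correct and follows essentially the same route as the paper's: necessity via Observation~\ref{obs:windrose-to-labeling} and Lemma~\ref{lem:angular-labeling-necessary}, sufficiency by triangulating with Lemma~\ref{lem:triangulate-angular-labeling}, drawing via Theorem~\ref{th:internally-triangulated-characterization}, and restricting back to $G$. The subtle point you flag about the restriction preserving the original q-constraints is exactly what the paper addresses with the observation that every q-constraint of $Q$ also belongs to $Q'$, so your resolution via the refinement relation is the intended one.
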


\begin{proof}
For the sufficiency, assume that there exists a windrose-planar drawing $\Gamma$ of $(G,Q)$. Then, $\Gamma$ defines a large-angle assignment $L$. By Observation~\ref{obs:windrose-to-labeling}, $Q$ and $L$ uniquely determine a labeling $A_{Q,L}$. Since $\Gamma$ is an angular drawing of $(G,A_{Q,L})$, by Lemma~\ref{lem:angular-labeling-necessary}, $A_{Q,L}$ is angular.

We now prove the necessity. Let $(G,Q)$ be a \qc plane graph with a large-angle assignment $L$ such that $A_{Q,L}$ is angular. Then, by Lemma~\ref{lem:triangulate-angular-labeling}, $G$ can be augmented to a triangulated plane graph $G'$ with angular labeling $A'$ that refines $A_{Q,L}$ in $O(n)$ time. Note that this augmentation only adds edges, so $G'$ has $n$ vertices. 
Let $Q'=Q_{A'}$ be the set of q-constraints determined by $A'$ (Observation~\ref{obs:labeling-to-constraints}). 
By Theorem~\ref{th:internally-triangulated-characterization}, $(G',Q')$ admits a $1$-bend windrose-planar drawing $\Gamma'$ on the $3n \times 3n$ grid, which can be constructed in $O(n)$ time and has at most $2n-5$ bends.
Since all the q-constraints in $Q$ also belong to $Q'$, the drawing $\Gamma$ of $(G,Q)$ obtained by removing edges and vertices in $G' \setminus G$ from $\Gamma'$ is a $1$-bend windrose-planar drawing with the same properties as $\Gamma'$. 
\end{proof}

We are now ready to prove the main result of the section.
By Lemma~\ref{lem:large-angle-assignment}, it is possible to compute in $O(n \log^3 n)$ time a large-angle assignment $L$ for $(G,Q)$, if one exists, such that the corresponding labeling $A_{Q,L}$ is angular. Then, if $L$ exists, we can construct in linear time a drawing of $(G,Q)$ with the properties described in Theorem~\ref{th:plane-characterization}. We summarize this result in the following theorem.

\newcommand{\maintheorem}{
In $O(n \log^3 n)$ time, it is possible to test whether a \qc plane graph is windrose-planar and, if so, to construct a 1-bend windrose-planar drawing of it with at most $2n-5$ bends on the $3n \times 3n$  grid.}

\begin{theorem}\label{th:main-theorem}
\maintheorem
\end{theorem}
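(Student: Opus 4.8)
The plan is to assemble the final result directly from the machinery already established in the preceding sections, treating Theorem~\ref{th:main-theorem} as an orchestration of earlier lemmas rather than a standalone argument. The central observation is that Theorem~\ref{th:plane-characterization} already provides the full characterization (a \qc plane graph is windrose-planar iff it admits a large-angle assignment $L$ with $A_{Q,L}$ angular) together with the construction of a $1$-bend drawing on an $O(n)\times O(n)$ grid once such an $L$ is known. Therefore the only remaining task is to produce the witness $L$, or to certify that none exists, within the claimed time bound.

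First I would invoke Lemma~\ref{lem:large-angle-assignment}, which decides in $O(n\log^3 n)$ time whether a large-angle assignment $L$ exists such that $A_{Q,L}$ is angular, and produces such an $L$ when it does. This is the bottleneck step and accounts for the $O(n\log^3 n)$ term in the statement, since it relies on computing a flow in a planar network via the algorithm of Borradaile et al.~\cite{bkmn-mmmdp-11}. Next, in the case where no valid $L$ exists, the characterization direction of Theorem~\ref{th:plane-characterization} immediately yields that $(G,Q)$ is not windrose-planar, so I would report failure. In the complementary case, I would feed the computed $L$ (and hence the angular labeling $A_{Q,L}$) into the sufficiency direction of Theorem~\ref{th:plane-characterization}, which constructs a $1$-bend windrose-planar drawing on an $O(n)\times O(n)$ grid in $O(n)$ additional time.

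The time analysis is then a simple summation: the flow computation costs $O(n\log^3 n)$ and dominates, while the drawing construction costs only $O(n)$, so the overall running time is $O(n\log^3 n)$, matching the claim. The correctness of the whole test reduces to the biconditional in Theorem~\ref{th:plane-characterization}: windrose-planarity is equivalent to the existence of an angular labeling arising from some large-angle assignment, and Lemma~\ref{lem:large-angle-assignment} faithfully detects exactly this existence.

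I do not anticipate any genuine obstacle, since all the substantive work has been discharged by the earlier results; the main ``difficulty'' is merely bookkeeping to ensure that the output of Lemma~\ref{lem:large-angle-assignment} is in precisely the form consumed by Theorem~\ref{th:plane-characterization} (namely an explicit $L$, not just a yes/no answer), which the proof of that lemma guarantees via the bijection between maximum flows and valid large-angle assignments. If I wanted to be cautious, I would remark that the $O(n)$ drawing step in Theorem~\ref{th:plane-characterization} presupposes $A_{Q,L}$ is given, which is exactly what the flow step supplies, so the pipeline composes cleanly and no hidden recomputation is needed.
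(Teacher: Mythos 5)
Your proposal matches the paper's own argument exactly: the paper proves Theorem~\ref{th:main-theorem} in the short paragraph preceding it by invoking Lemma~\ref{lem:large-angle-assignment} to find the large-angle assignment $L$ (or determine none exists) in $O(n\log^3 n)$ time, and then Theorem~\ref{th:plane-characterization} to obtain the $1$-bend drawing in $O(n)$ additional time. Your composition, time analysis, and handling of the negative case are all correct and coincide with the paper's reasoning.
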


\section{Straight-line realizability of windrose-planar graphs}
\label{se:straight-line}

In Theorem~\ref{th:plane-characterization}, we proved that any windrose-planar \qc graph can be realized with one bend per edge. In this section, we ask whether this is possible even with straight-line edges. We remark that every upward-planar directed graph admits a straight-line upward-planar drawing~\cite{efln-sldahgcg-06}.

In Theorem~\ref{th:3-trees} we answer the above question in the positive for a particular class of graphs, in which every block is either an edge or a planar $3$-tree. Note that this class also includes trees as a subclass. 
On the other hand, in Theorem~\ref{th:exparea}, we give a family of
\qc graphs that require exponential area if drawn straight-line, and
in Theorem~\ref{th:bimonotone}, we provide a negative result for the
straight-line realizability in a setting that is strongly related to
the one we study. This answers an open question posed by Fulek, Pelsmajer, Schaefer and
 \v{S}tefankovi\v{c}~\cite{fpss-htmd-11}.

\subsection{Straight-line windrose-planar drawings}

We first present the two results concerning the existence and the area-requirements of straight-line windrose-planar drawings of \qc graphs.

\newcommand{\lemmathreeetrees}{Every windrose-planar \qc graph $(G,Q)$ whose blocks are either edges or planar $3$-trees admits a straight-line windrose-planar drawing.}

\begin{theorem}\label{th:3-trees}
\lemmathreeetrees
\end{theorem}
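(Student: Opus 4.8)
The plan is to prove the result by induction on the block-cut structure of $G$, so that the heart of the argument is a straight-line realization for a single block that is a planar $3$-tree; edges are handled trivially and trees follow as a special case.

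\paragraph{Reduction to a single block.}
First I would argue that it suffices to realize each block separately. Since $(G,Q)$ is windrose-planar, each of its blocks inherits an angular labeling (Corollary~\ref{cor:subgraph-labeling}), and hence is itself windrose-planar. If every block admits a straight-line windrose-planar drawing, I want to assemble these into a single drawing. The idea is to root the block-cut tree at an arbitrary block and process blocks in a top-down order. When attaching a child block $B$ at a cutvertex $c$, the $q$-constraints of $B$ prescribe, for each quadrant of $c$, which neighbors of $c$ inside $B$ must lie there. Because a cutvertex sees $B$ inside a single face of the already-drawn part, I can place a scaled-down copy of the straight-line drawing of $B$ into a small empty wedge around $c$ that lies entirely within the appropriate quadrants, choosing the scale small enough to avoid all existing vertices and to respect $xy$-monotonicity of the edges crossing between blocks. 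The main bookkeeping here is to ensure the wedge is consistent with the quadrant partition of $c$ in each block; this is feasible because the four quadrants around $c$ are disjoint angular regions and the drawing of $B$ can be affinely squeezed into any nonempty open wedge contained in the union of the required quadrants.

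\paragraph{The planar $3$-tree case.}
The core of the proof is the case where the block $B$ is a planar $3$-tree. Here I would use the recursive construction of planar $3$-trees: $B$ is built from an outer triangle by repeatedly inserting a degree-$3$ vertex inside a face. I plan to mirror this recursion geometrically. First, I place the three vertices of the outer triangle at positions dictated by the angular labeling $A_Q$ of $B$ (which is unique up to global $90^\circ$ rotation by Observation~\ref{obs:triangulated-angular}, since internal large-angle assignments are forced). The key invariant to maintain during the recursion is that whenever a vertex $x$ is inserted into a face $(a,b,c)$, the open triangular region bounded by $a$, $b$, $c$ contains a nonempty subregion in which $x$ can be placed so that, simultaneously, $x$ lies in the correct quadrant relative to each of $a$, $b$, $c$, and $a,b,c$ each lie in the correct quadrant relative to $x$. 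Because the angular labeling is angular, the three interior angles of the face sum to $180^\circ$ and one of them is $0^\circ$ (Observation~\ref{obs:triangulated-angular}); this severely constrains the relative $x$- and $y$-orders of $a,b,c$, exactly as exploited in Lemma~\ref{le:4-constrained-drawing}. I would show that the quadrant constraints of the new vertex $x$ with respect to $a,b,c$ are mutually consistent with the shape of the triangle, so the admissible placement region for $x$ is a nonempty open set, and any point in it yields a straight-line, $xy$-monotone, planar insertion.

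\paragraph{Main obstacle.}
The hardest part will be verifying that the admissible placement region for each inserted vertex is genuinely nonempty in every case, and that placing $x$ there keeps all three new subtriangles in the correct orientation so that the global drawing stays planar. Concretely, I expect to enumerate, for a face $(a,b,c)$ with interior angle categories summing to $180^\circ$, the possible quadrant patterns of the newly inserted vertex and check in each pattern that the region defined by the intersection of the three quadrant half-plane pairs (one pair per triangle vertex) has nonempty intersection with the interior of the triangle. This is analogous to the orientation analysis in Lemma~\ref{le:4-constrained-drawing}, where the $0^\circ$/$90^\circ$/$90^\circ$ structure forces a unique cyclic order of coordinates; the new difficulty is that the inserted vertex's own four quadrants must each be compatible with the triangle that contains it. I anticipate that the $0^\circ$ angle in the face provides exactly the slack needed, but making this rigorous for all quadrant patterns is the technical crux, and I would organize it as a short case analysis rather than a single uniform computation.
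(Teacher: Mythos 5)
Your reduction to a single block is where the argument breaks. The gluing step assumes that a straight-line windrose-planar drawing of a child block $B$ can be ``affinely squeezed'' into whatever empty wedge is available at the cutvertex $c$ in the already-drawn parent. Two things go wrong. First, the only affine maps that preserve \emph{all} quadrant constraints inside $B$ (not just those at $c$) are positive diagonal scalings, and these cannot map a drawing into an arbitrary prescribed wedge: they can only push directions toward the coordinate axes, never into, say, a thin wedge around the $45^\circ$ direction. Second, and more fundamentally, the vertices of $B$ that are \emph{not} adjacent to $c$ carry no q-constraint relative to $c$, so a perfectly valid straight-line windrose-planar drawing of $B$ alone may place such vertices in quadrants of $c$ that are simply not contained in the free wedge of the parent (e.g.\ a non-neighbor of $c$ sitting SW of $c$ while the available wedge lies inside the \ur-quadrant). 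No rescaling repairs this; you would have to \emph{redraw} $B$ subject to an extra confinement condition at $c$, which means strengthening the inductive hypothesis to something like ``$B$ admits a drawing contained in any prescribed wedge at $c$ meeting the required quadrants,'' and that claim is neither stated nor obviously true. The paper avoids the issue entirely by never drawing blocks in isolation: it inducts on the whole graph, peeling off one vertex (or two, for a triangle block) from a leaf block at a time, so that only single points ever need to be placed into a wedge at $c$ or into the interior of an existing triangle.

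Your treatment of the planar-$3$-tree core is essentially the paper's argument run in the opposite direction (you insert degree-$3$ vertices following the $3$-tree construction order; the paper removes an internal degree-$3$ vertex and inducts), and your identification of the crux --- the nonemptiness of the region inside a face $(a,b,c)$ compatible with all quadrant constraints of the inserted vertex, organized by the two possible angle-category patterns $(0^\circ,90^\circ,90^\circ)$ and $(0^\circ,0^\circ,180^\circ)$ --- matches the case analysis the paper carries out using Lemma~\ref{le:4-constrained-drawing} and Observation~\ref{obs:planarity}. So the $3$-tree half of your plan is sound in outline; it is the block-assembly half that needs to be replaced, most naturally by folding the block decomposition into a single global vertex-peeling induction as in the paper (with the separate cases for attaching a pendant edge and a pendant triangle at a cutvertex handled explicitly, since the latter requires placing two vertices while keeping the edge between them straight and correctly oriented).
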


\begin{proof}
Let $\mathcal{E}$ be the planar embedding of $(G,Q)$ in a windrose-planar drawing of $(G,Q)$.
We show how to compute a straight-line windrose-planar drawing $\Gamma$ of $(G,Q)$ with the same embedding as $\mathcal{E}$. The construction is performed inductively on the number $n$ of vertices of $(G,Q)$. 

We have two base cases, namely when $n=2$ and when $n=3$. In the first case, $(G,Q)$ is an edge, and hence it always admits a straight-line windrose-planar drawing. In the second case, $(G,Q)$ is either a path, and hence can be realized using straight-line edges, or a $3$-cycle $\triangle$, and, since it is windrose-planar, admits an angular labeling. Hence, the fact that $\triangle$ satisfies the Cycle Condition implies that it admits a straight-line windrose-planar drawing. In particular, if two angle categories are $90^\circ$ and the third angle category is $0^\circ$, then we can make $\triangle$ quasi-triangulated and use Lemma~\ref{le:4-constrained-drawing}. Otherwise, there is a $180^\circ$ angle category and two $0^\circ$ angle categories; then, we can draw one edge with slope $\pm 1$ and place the vertex incident to the $180^\circ$ angle category close enough to the edge to make the drawing angular.

In the inductive case, $n>3$. Consider a block $\beta$ of $(G,Q)$ such that, if $(G,Q)$ is biconnected, then~$\beta=(G,Q)$; otherwise, $\beta$ has a single cutvertex $c$ incident to it and not all the edges of $(G,Q)$ that are incident to the outer face of $(G,Q)$ in $\mathcal{E}$ belong to $\beta$. Note that such a block exists. In fact, if $(G,Q)$ is not biconnected, then there exist at least two blocks with a single cutvertex incident to them, and thus at least one of these blocks does not contain all the edges of $(G,Q)$ that are incident to the outer face in $\mathcal{E}$. Finally note that, for a block $\beta$ satisfying the above properties, we have that the cut-vertex $c$ is incident to the outer face of $\beta$ in $\mathcal{E}$ and all the internal faces of $\beta$ are also faces of~$(G,Q)$.

We distinguish three cases, based on whether $|\beta|=2$, $|\beta|=3$, or $|\beta| > 3$.

If $|\beta|=2$, then let $v$ be the vertex of $\beta$ different from
$c$. Note that $v$ has degree $1$ in $(G,Q)$. Clearly, graph
$G'=G \setminus v$ is still such that every block is either an edge or
a planar $3$-tree. We inductively compute a straight-line
windrose-planar drawing $\Gamma'$ of $(G',Q')$ with the same embedding
as $\mathcal{E}$ restricted to the vertices and edges of $(G',Q')$. Let $z$ 
and $w$ (possibly $z=w$) be the neighbors of $c$ that precede and
follow $v$, respectively, in the clockwise order of the
neighbors around $c$ in $\mathcal{E}$; see Fig.~\ref{fig:simplecase-3tree}.

Let $D$ be a disk centered at $c$ of radius sufficiently small not to contain 
any vertex of $(G',Q')$ different from $c$ in its interior. Let $\alpha$ be 
the subregion of $D$ delimited by the two straight-line edges~$(c,w)$ and~$(c,z)$ in 
$\Gamma'$ and by the part of the boundary of $D$ from the intersection with 
edge $(c,z)$ to the one with edge $(c,w)$ in clockwise direction. Note that 
placing $v$ in any interior point of $\alpha$ yields a straight-line 
crossing-free drawing of $G$ whose embedding coincides with $\mathcal{E}$; refer to 
Fig.~\ref{fig:simplecase-3tree}. 
We claim that there exists a point inside $\alpha$ such that placing $v$ on 
this point yields a straight-line windrose-planar drawing of $(G,Q)$.
Let~$\circ$,~$\times$, and $\scriptstyle\triangle$ be the three indexes such 
that $v \in \Npar(c,\circ)$, $w \in \Npar(c,\times)$, and 
$z \in \Npar(c,\scriptstyle\triangle)$. Since $\mathcal{E}$ has been obtained
from a windrose-planar drawing of $G$, it holds, by 
Observation~\ref{obs:planarity}, that either (i) $\circ=\times$, 
(ii) $\circ=\scriptstyle\triangle$, or (iii) $\circ$ is encountered before 
$\scriptstyle\triangle$ in the circular sequence $\sets$ starting 
from $\times$. In all of these cases, the $\circ$-quadrant of $c$ has a non-empty
intersection with $\alpha$, and the claim follows.

If $|\beta|=3$, then let $u$ and $v$ be the vertices of $\beta$ different from $c$, with $v$ preceding $u$ in the clockwise order of the neighbors around $c$ in $\mathcal{E}$. Inductively compute a straight-line windrose-planar drawing $\Gamma'$ of $(G'=G \setminus \{v,u\},Q')$ with the same embedding as $\mathcal{E}$ restricted to the vertices and edges of $G'$. Let $w$ and $z$ (possibly $w=z$) be the neighbors of $c$ that precede $v$ and follow $u$, respectively, in the clockwise order of the neighbors around $c$ in $\mathcal{E}$.
As in the previous case, the fact that $\mathcal{E}$ has been obtained from a windrose-planar drawing of $(G,Q)$ ensures that the region $\alpha$, defined as above, intersects the $\circ$-quadrant and the $\times$-quadrant of $c$, where $u \in \Npar(c,\circ)$ and $v \in \Npar(c,\times)$. For the same reason, either $\circ = \times$ or there exists at most one quadrant between $\circ$ and $\times$ in the circular sequence $\sets$ starting from $\circ$. This implies that there exist pairs of points inside $\alpha$, one in the $\circ$-quadrant and one in the $\times$-quadrant of $c$, such that, when placing $u$ and $v$ on these points, the angle spanned by rotating around $c$ in clockwise direction from edge~$(c,v)$ to edge $(c,u)$ is smaller than $180^\circ$, which implies that edge $(u,v)$ can be drawn as a straight-line crossing-free segment while respecting embedding $\mathcal{E}$. By choosing such an appropriate pair of  points for $u$ and~$v$, one can also place $u$ in the correct quadrant of $v$, and vice versa. In fact, if there exists a quadrant between $\circ$ and $\times$ in the circular sequence $\sets$ starting from $\circ$, then $u$ lies in the $\circ$-quadrant of $v$ for any placement of $u$ and $v$ in the correct quadrants of $c$; since this is true even if the edges are not required to be straight-line, it is also true in the windrose-planar drawing of $(G,Q)$ we used to compute $\mathcal{E}$; hence, we have that $u \in \Npar(v,\circ)$. If either $\circ$ and $\times$ are consecutive in the circular sequence $\sets$ or~$\circ = \times$, then the correct relative position of $u$ and $v$ can be obtained by adjusting the length of the edges~$(c,u)$ and~$(c,v)$, and by the angle they form.

\begin{figure}[tb]
  \centering
  \begin{minipage}[t]{.47\textwidth}
  \centering
  \subfigure[]{
    \includegraphics[page=1]{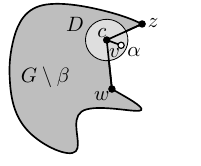}\label{fig:simplecase-3tree}
  }\hfil
  \subfigure[]{
    \includegraphics[page=2]{img/simplecase-3tree}\label{fig:edgeconfiguration-23trees}
  } 
  \caption{Adding a vertex $v$ of a leaf block $\beta$ in a drawing $\Gamma'$ of $(G',Q')$. (a) $\beta=(c,v)$ with $c \in \capNul(v)$ and (b) $\beta$ is a $3$-tree. }
  \end{minipage}
  \hfill
\begin{minipage}[t]{.47\textwidth}
\centering
   \subfigure[]{\label{fig:triangle-3tree-a}
   \includegraphics[page=2]{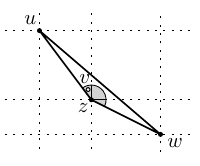}
   }\hfil
   \subfigure[]{\label{fig:triangle-3tree-b}
   \includegraphics[page=1]{img/triangle-3tree}
   }
  \label{fig:triangle-3tree}
  \caption{The two cases for the placement of vertex $v$ in the interior of triangle $uwz$ in the proof of Theorem~\ref{th:3-trees}.}
\end{minipage}
\end{figure}

Otherwise, $|\beta| > 3$ and hence $\beta$ is a planar $3$-tree. Thus, there exists an internal vertex $v$ of $\beta$ of degree $3$. Note that $v \neq c$, since $c$ is incident to the outer face of $\beta$. Hence, $v$ has degree $3$ in $G$ as well, and graph $G'=G \setminus v$ is still a graph of which each block is either an edge or a planar 3-tree. Inductively compute a straight-line windrose-planar drawing $\Gamma'$ of $(G',Q')$ with the same embedding as $\mathcal{E}$ restricted to the vertices of $G'$; refer to Fig.~\ref{fig:edgeconfiguration-23trees}. 
Let $u$, $w$, and $z$ be the neighbors of $v$ in $G$. Note that $u$, $w$, and~$z$ bound a face $f$ of $G'$. Since $G$ is bi-acyclic and since the subgraph of $G$ induced by $v$, $u$, $w$, and $z$ is a complete graph, there exist a total order $O_\uparrow$ of these vertices in the upward direction, determined by $G^\uparrow$, and another total order~$O_\rightarrow$ in the rightward direction, determined by $G^\rightarrow$. Also, $v$ is neither the first nor the last vertex in these two orders, since it is an internal vertex of this subgraph. Analogously, there exist two total orders $O_\uparrow'$ and~$O_\rightarrow'$ of $u$, $w$, and $z$ in the upward and in the rightward direction, respectively, determined by $(G',Q')$; clearly, $O_\uparrow'$ and $O_\rightarrow'$ coincide with $O_\uparrow$ and~$O_\rightarrow$, respectively, when restricted to~$u$,~$w$, and $z$. Since $v$ lies in the interior of the $3$-cycle $(u,w,z)$ in the windrose-planar drawing of~$(G,Q)$ we used to construct $\mathcal{E}$, 
there exists a point in the interior of the triangle $uwz$ representing~$(u,w,z)$ in $\Gamma'$ such that placing $v$ on this point yields a straight-line windrose-planar drawing $\Gamma$ of $(G,Q)$. In fact, as already observed in the proof of Lemma~\ref{le:4-constrained-drawing}, there exist exactly two possible shapes of triangle $uwz$ in $\Gamma'$, which are determined by the two possible combinations of categories for the internal angles incident to $u$, $w$, and $z$. Namely, either one angle has category $0^\circ$, say $\angle{w,f}=0^\circ$, and $\angle{u,f}=\angle{z,f}=90^\circ$, or two angles have category $0^\circ$, say $\angle{w,f}=\angle{u,f}=0^\circ$, and $\angle{z,f}=180^\circ$. 
In both cases, placing $v$ in any point in the interior of triangle $uwz$ satisfies the q-constraints of any vertex with angle category equal to $0^\circ$, say angle $\angle{w,f}=0^\circ$, since in this case~$v$ belongs to the same quadrant of $w$ as $u$ and $z$, by Observation~\ref{obs:planarity}. 
Analogously, for the vertices with angle category equal to $90^\circ$, placing $v$ at any point in the interior of triangle $uwz$ trivially satisfies the constraints on the relative positions of $v$ with respect to $u$, $w$, and $z$ imposed by \emph{one} of $G^\uparrow$ and $G^\rightarrow$.

We show that it is always possible to find a point that satisfies the constraints imposed by \emph{both} $G^\uparrow$ and~$G^\rightarrow$.
Consider the case that $\angle{u,f}=\angle{z,f}=90^\circ$; refer to the setting depicted in Fig.~\ref{fig:triangle-3tree-a}, the other settings being symmetric. In this case, the vertical line passing through vertex $u$ and the horizontal line passing through vertex $z$ intersect at a point $p$ in the interior of triangle $uwz$. Thus, triangle $uwz$ intersects the four quadrants of point $p$ and we can hence place $v$ in one of these quadrants so to satisfy the q-constraints of $v$ with respect to $u$ and $z$.
Consider the case that $\angle{w,f}=\angle{u,f}=0^\circ$; refer to the setting depicted in Fig.~\ref{fig:triangle-3tree-b}, the other settings being symmetric. 
In this case, only three out of the four quadrants of $z$ intersect the interior of triangle $uwz$; assume, as in Fig.~\ref{fig:triangle-3tree-b}, that the \dl-quadrant does not intersect $uwz$. We claim that
$v \notin \Ndl(z)$. This is due to the fact that $w \in \Ndr(z)$, $u \in \Nul(z)$, and $v$ lies between $u$ and $w$ in the clockwise order of the neighbors of $z$. Then, the claim follows from Observation~\ref{obs:planarity}. Hence, we can place $v$ in one of these three quadrants so to satisfy the q-constraints of $v$ with respect to $z$.

The planarity of $\Gamma$ follows from the fact that placing a vertex in any interior point of a triangle and connecting it to its three vertices does not introduce any crossing.
\end{proof}

We now study the area requirements of straight-line windrose-planar drawings. Recall that, as already stated in Theorem~\ref{th:area-lowerbound}, the analogous result on upward planarity~\cite{dtt-arsdud-92} already implies an exponential lower bound on the area. In the following theorem we slightly improve this lower bound.

\begin{theorem}\label{th:exparea}
  There is a \qc graph  with $3k$ vertices such that any windrose-planar drawing requires area at least $4^{k-1}$ if the distance between the vertices is at least $1$.
\end{theorem}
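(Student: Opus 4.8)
The plan is to build, for each $k$, a \qc graph $(G_k,Q_k)$ on $3k$ vertices out of $k$ nested gadgets, one per ``level'', and to show that passing from one level to the next one inside forces both the horizontal and the vertical extent of any windrose-planar drawing to shrink by a factor of at least two. Since each level thus contributes a factor of $2$ in width and a factor of $2$ in height, the area grows by a factor of $4$ per level, which over $k$ levels yields the claimed $4^{k-1}$. This is the natural two-dimensional analogue of the construction behind Theorem~\ref{th:area-lowerbound}: there, only the two directions \NorthEast{} and \SouthWest{} are used, so only the height is forced to double and one obtains $2^{\Theta(n)}$; here I would use neighbors in all four quadrants of a designated vertex at each level, so that width and height are forced to double independently.

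\emph{Construction.} I would take $G_k$ to be a sequence of $k$ nested triangles $T_1 \supset T_2 \supset \dots \supset T_k$, where $T_i=(a_i,b_i,c_i)$, the triangle $T_{i+1}$ lies in the interior of $T_i$ in the embedding, and inter-level edges triangulate the annulus between $T_i$ and $T_{i+1}$. The q-constraints $Q_k$ would orient every triangle consistently (one $0^\circ$ corner and two $90^\circ$ corners, as for the internal faces in the proof of Lemma~\ref{le:4-constrained-drawing}) and, crucially, would attach the vertices of $T_{i+1}$ to those of $T_i$ so that the whole inner triangle is pinned into one quadrant of a designated vertex of $T_i$, while a second vertex of $T_i$ is constrained into the opposite quadrant of the inner structure. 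Before analysing the area I would exhibit one windrose-planar drawing of $(G_k,Q_k)$ (placing each $T_i$ as a suitably scaled copy of the innermost triangle), so that the lower bound is not vacuous; by Theorem~\ref{th:internally-triangulated-characterization} this reduces to checking that the labeling determined by $Q_k$ is angular.

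\emph{Doubling lemma (the crux).} Let $W_i$ and $H_i$ be the width and the height of the bounding box of $T_i,T_{i+1},\dots,T_k$. I would prove that $W_i \ge 2W_{i+1}$ and $H_i \ge 2H_{i+1}$ in every windrose-planar drawing. The mechanism is that some inter-level edge of $T_i$ must be drawn as an $xy$-monotone curve separating $T_{i+1}$ from the designated outer vertex; by planarity this curve cannot cross $T_{i+1}$, and by $x$-monotonicity (resp.\ $y$-monotonicity) together with the prescribed quadrant of the central vertex, the two endpoints of that edge must straddle the entire horizontal (resp.\ vertical) span of the inner structure. Hence the outer vertex lies at horizontal distance at least $W_{i+1}$ beyond $T_{i+1}$, giving $W_i \ge W_{i+1}+W_{i+1}=2W_{i+1}$, and symmetrically for $H_i$. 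I expect this to be the main obstacle: it requires a careful case analysis over the directions assigned to the inter-level edges and over which region the monotone curves may occupy, using Observation~\ref{obs:planarity} to control the cyclic order of neighbors and the monotonicity to turn ``being in a quadrant'' into a quantitative separation. Obtaining exactly a factor of two (rather than $1+\varepsilon$) is precisely what dictates the choice of quadrant assignments in the construction.

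\emph{Conclusion.} Unrolling the recursion gives $W_1 \ge 2^{k-1}W_k$ and $H_1 \ge 2^{k-1}H_k$. The innermost triangle $T_k$ has three pairwise-distinct vertices, and since any two vertices are at distance at least $1$ and $T_k$ is nondegenerate in both coordinates (by its $0^\circ$/$90^\circ$ orientation) we get $W_k \ge 1$ and $H_k \ge 1$. Therefore the area of any windrose-planar drawing is at least $W_1 H_1 \ge 2^{k-1}\cdot 2^{k-1} = 4^{k-1}$, and the bound holds under any resolution rule, since it used only the unit lower bound on pairwise vertex distances.
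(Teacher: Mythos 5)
Your overall plan (nested triangles, a growth factor of $4$ per level, a base case from the unit-distance rule) matches the paper's construction, but the engine you propose for the growth factor --- that the width and the height of the bounding box each double at every level --- is exactly where the argument breaks, and you have correctly identified it as the crux without supplying it. The step ``the two endpoints of that edge must straddle the entire horizontal span of the inner structure, hence the outer vertex lies at horizontal distance at least $W_{i+1}$ beyond $T_{i+1}$'' is a non sequitur: straddling the span only places the outer endpoint at or beyond the boundary of the span, not $W_{i+1}$ beyond it. More fundamentally, q-constraints are purely ordinal in each coordinate (they assert $x(u)<x(v)$ or $y(u)<y(v)$, never by how much), and windrose-planarity is preserved by independent strictly increasing reparametrizations of the two axes; such a map can compress the horizontal slack contributed by an outer vertex to almost nothing whenever the minimum-distance rule can be met through the $y$-coordinates. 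So no choice of quadrant assignments can force $W_i\ge 2W_{i+1}$ \emph{and} $H_i\ge 2H_{i+1}$ simultaneously in every drawing; only the product (the area) can be forced to grow. Your base case has the same one-coordinate flaw: three pairwise unit-separated points with the prescribed coordinate orderings can have $W_k$ arbitrarily small (with $H_k$ correspondingly large), so ``$W_k\ge 1$ and $H_k\ge 1$'' does not follow from the distance rule.

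The paper sidesteps coordinates entirely and runs the recursion on areas, which is the quantity that actually is forced to grow: the bounding box $R$ of a triangle whose edges have the prescribed diagonal directions has area at least twice the triangle's area (complete the triangle to a parallelogram inside $R$), and the next triangle out, which must enclose $R$ with edges again in prescribed diagonal directions, has area at least twice that of $R$ (a one-parameter minimization over triangles circumscribing a fixed rectangle). Chaining these two facts yields a factor of $4$ in area per level without ever asserting anything about width or height separately. To repair your write-up, replace the width/height doubling lemma with this triangle-to-box-to-triangle area chain.
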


\begin{proof}
  The graph consists of $k$ nested triangles whose edges are
  alternately directed in the NE-SW and in the NW-SE direction.
  Fig.~\ref{fig:big-area-1} shows two consecutive triangles $abc$ and
  $a'b'c'$ in this nested sequence.  The graph has additional edges
  connecting the triangles to ensure nesting, which are not shown.
  The area of the circumscribed box $R$ around $abc$ is at least twice
  the area of $abc$, since $abc$ can be extended to a parallelogram
  $abcd$ of double area which is still contained in $R$.
  \begin{figure}[tb]
  \centering
  \subfigure[Two successive triangles in the nested sequence.]{
  \includegraphics[page=1]{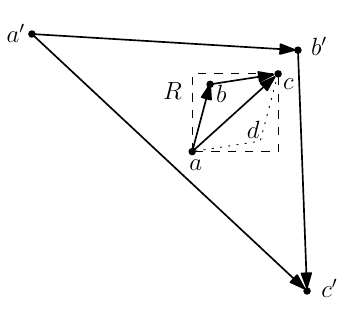}
  \label{fig:big-area-1}
  }
  \hfil
  \subfigure[The area of the triangle $a'b'c'$ is at least twice the area of 
    any enclosed rectangle~$R$.]{
  \includegraphics[page=2]{img/big-area}
  \label{fig:big-area-2}
  }
  \caption{A graph with exponential area requirement}
  \label{fig:big-area}
\end{figure}
  Moreover, as we show below, the area of the triangle $a'b'c'$ is at
  least twice the area of the rectangle~$R$. Finally, the area of the
  innermost triangle is at least $1/2$, and the theorem follows.
  
  To see that the area of $a'b'c'$ is at least twice the
  area of the enclosed rectangle~$R$, consider the smallest possible
  triangle for a fixed $R$; see Fig.~\ref{fig:big-area-2}. In this triangle, the edges~$(a',b')$ and~$(b',c')$ must be aligned with the edges of~$R$,
  and $(a',c')$ must touch $R$. A straightforward one-parameter
  minimization shows that the minimum area of $a'b'c'$ is achieved
  when $R$ touches the midpoint of~$(a',c')$, and then the area is twice
  the area of~$R$.
\end{proof}

\subsection{Straight-line bi-monotone drawings} We conclude the section by considering a problem, called {\sc Bi-monotonicity}~\cite{fpss-htmd-11}, that is 
 related to {\sc Windrose Planarity}. This problem takes as input a \emph{bi-ordered graph}, that is, a triple $\langle G(V,E), \gamma, \lambda \rangle$ where $G$ is a planar graph, while $\gamma: V \leftrightarrow \{1,\dots,n\}$ and $\lambda: V \leftrightarrow \{1,\dots,n\}$ are two bijective functions, each specifying a total order of $V$, and asks whether a \emph{bi-monotone drawing} of $\langle G(V,E), \gamma, \lambda \rangle$ exists, that is, a planar drawing of $G$ such that $x(u) < x(v)$ if and only if $\gamma(u) < \gamma(v)$, $y(u) < y(v)$ if and only if $\lambda(u) < \lambda(v)$, and edges are represented by $xy$-monotone curves. We say that a bi-ordered graph is \emph{bi-monotone} if it admits a bi-monotone drawing. 
In other words, while problem {\sc Windrose Planarity} asks to realize a \emph{partial} order among the vertices in one direction and another \emph{partial} order in the other direction, this problem asks to realize two \emph{total} orders. 
We prove that not all the bi-monotone graphs admit straight-line bi-monotone drawings; see Fig.~\ref{fig:unstretchable}.

\newcommand{\theorembimonotone}{There exists a bi-monotone bi-ordered graph $\langle G(V,E), \gamma, \lambda \rangle$ not admitting any straight-line bi-monotone drawing.}
\begin{theorem}\label{th:bimonotone}
\theorembimonotone
\end{theorem}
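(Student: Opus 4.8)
The plan is to exhibit one explicit bi-ordered graph $\langle G(V,E),\gamma,\lambda\rangle$ together with a bi-monotone drawing (so that $G$ is bi-monotone), and then to prove that no placement of the vertices respecting $\gamma$ and $\lambda$ yields a crossing-free straight-line drawing. The first simplification I would make is to observe that straightness trivializes the monotonicity requirement: since $\gamma$ and $\lambda$ are bijections, any two vertices have distinct $x$-coordinates and distinct $y$-coordinates, so every straight segment is automatically both $x$- and $y$-monotone. Hence a straight-line bi-monotone drawing is exactly a straight-line drawing in which the $x$-order of the vertices is $\gamma$, the $y$-order is $\lambda$, and no two edges cross. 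The whole difficulty is therefore concentrated in planarity, and the theorem asks for a graph that is topologically realizable under the order constraints but geometrically non-realizable.

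The conceptual reason such a gap can exist is the same one that underlies the unstretchability of pseudoline arrangements: an $xy$-monotone curve is a ``pseudo-segment'' that may be routed with a bend to avoid an obstacle, whereas a straight segment cannot. Concretely, whether two straight edges $(a,b)$ and $(c,d)$ cross is governed by the orientations of the triples $a,b,c$ and $a,b,d$, and these orientations are only partially constrained by the two total orders: $\gamma$ and $\lambda$ fix, for every pair of vertices, in which quadrant one lies relative to the other, but they leave the precise slopes free. My construction would exploit exactly this slack: I would build a subconfiguration in which the quadrant relations forced by $\gamma$ and $\lambda$, combined with the planarity of a few rigidly placed edges, pin down the mutual orientations so tightly that two further edges are forced to cross, while in the bent drawing one of them can detour around the crossing. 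For the existence half of the statement I would simply draw the example by hand, bending each edge wherever a straight one would be obstructed; this is routine once the example is fixed, and it is what Fig.~\ref{fig:unstretchable} depicts.

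The substantive part is the non-existence half. Here I would fix an arbitrary order-respecting point placement and derive a contradiction. The argument proceeds by first locating a small set of \emph{anchor} vertices whose relative positions are completely determined as quadrant relations by $\gamma$ and $\lambda$, so that the straight edges among them partition a region of the plane in a combinatorially forced way. I would then track one critical vertex $v$: the orders force $v$ into the interior of a triangle $T$ formed by anchors, but because straightness forces the edges of $T$ to have slopes within the ranges dictated by their endpoints' quadrant relations, the interior of $T$ lies on the wrong side of an edge incident to $v$, so that this edge must leave $T$ through one of its sides and thereby cross an anchor edge. Combining these side conditions around the gadget yields the contradiction independently of the chosen coordinates.

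The main obstacle I anticipate is precisely this last step: establishing that the forced crossing occurs for \emph{every} admissible placement, not merely for the ``natural'' one. Unlike a fixed-coordinate argument, one must quantify over all increasing $x$- and $y$-sequences, so the reasoning has to be purely order- and orientation-theoretic rather than metric. The right tool is to phrase the obstruction as an invariant, a sign condition on a cyclic sequence of orientations that is determined by $\gamma$, $\lambda$, and the edge set alone, and to check that the bent drawing realizes the opposite parity of this invariant, which no straight-line drawing can attain. Verifying that this invariant is genuinely forced, and not an artifact of a single embedding, is where the care lies; it is the analogue, in our monotone setting, of proving that a specific pseudoline arrangement is non-stretchable.
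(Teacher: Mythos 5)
Your opening reduction is correct and worth keeping: for straight-line edges with bijective $\gamma$ and $\lambda$, $xy$-monotonicity is automatic, so the question is purely whether a crossing-free straight-line drawing exists under the two coordinate orders. But beyond that, what you have written is a research plan, not a proof. The theorem is an existence statement, so the proof \emph{is} the explicit example together with its verification, and you supply neither: no vertex set, no edge set, no concrete $\gamma$ and $\lambda$, no bent drawing, and no actual derivation of a contradiction. Every load-bearing step is stated in the conditional (``I would build a subconfiguration\dots'', ``I would fix an arbitrary placement and derive a contradiction''), and you yourself flag the decisive step --- showing the obstruction holds for \emph{every} order-respecting placement rather than one natural one --- as the anticipated obstacle and leave it open. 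The ``sign invariant on a cyclic sequence of orientations'' is named but never defined, computed, or shown to separate bent from straight drawings, so there is nothing to check.

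For comparison, the paper's proof is short and entirely concrete. It takes two $5$-cycles $(a,b,c,d,e)$ and $(a',b',c',d',e')$ with explicitly given orders $\gamma$ and $\lambda$, exhibits a bent bi-monotone drawing, and then argues as follows: in the first cycle, the apex $e$ forces the lines through $(a,b)$ and $(c,d)$ to converge toward $b$ and $c$, i.e.\ $\Delta_y(a,b)/\Delta_x(a,b) < \Delta_y(c,d)/\Delta_x(c,d)$; since $\lambda$ forces $\Delta_y(a,b) > \Delta_y(c,d)$, this yields $\Delta_x(a,b) > \Delta_x(c,d)$, and symmetrically $\Delta_x(a',b') > \Delta_x(c',d')$ for the second cycle. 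The order $\gamma$ interleaves the two cycles so that $\Delta_x(a,b) < \Delta_x(c',d')$ and $\Delta_x(a',b') < \Delta_x(c,d)$, and chaining the four inequalities gives $\Delta_x(a,b) < \Delta_x(a,b)$. Note that this argument is purely about horizontal extents of four edges and uses no triangle-containment or orientation-parity machinery; if you want to complete your proposal, you would need to either reconstruct a gadget of this kind and carry out the inequality chase, or genuinely define and verify your claimed invariant --- as it stands, the core of the proof is missing.
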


\begin{proof}
Graph $G(V,E)$ consists of cycles $(a,b,c,d,e)$ and $(a',b',c',d',e')$; 
function $\gamma$ induces a total order $e',c',a,b,d',d,b',a',c,e$, while function $\lambda$ induces a total order $e',b',c',d',a',a,d,c,b,e$.
A bi-monotone drawing of $\langle G(V,E), \gamma, \lambda \rangle$ is provided in Fig.~\ref{fig:unstretchable}. 

We only have to prove that $\langle G(V,E), \gamma, \lambda \rangle$ admits no straight-line bi-monotone drawing.
Namely, note that, in any straight-line bi-monotone drawing of $\langle G(V,E), \gamma, \lambda \rangle$, the lines through edges $(a,b)$ and $(c,d)$ must converge on the side of $b$ and $c$, that is, the half-line starting at $a$ and passing through $b$ must intersect the half-line starting at $d$ and passing through~$c$, as otherwise edges $(a,e)$ and $(d,e)$ could not be drawn as straight-line segments. Hence, 
$\Delta_y(a,b)/\Delta_x(a,b) < \Delta_y(c,d)/\Delta_x(c,d)$ must hold, where $\Delta_\circ(\alpha,\beta)=|\circ(\beta)-\circ(\alpha)|$ with $\circ \in \{x,y\}$. However, since 
$\Delta_y(a,b) > \Delta_y(c,d)$ holds due to function $\lambda$, in order to satisfy the above inequality it must be $\Delta_x(a,b) > \Delta_x(c,d)$. With a symmetrical argument on cycle $(a',b',c',d',e')$, one can prove that it must be $\Delta_x(a',b')> \Delta_x(c',d')$. However, function $\gamma$ enforces $\Delta_x(a,b) < \Delta_x(c',d')$ and $\Delta_x(a',b') < \Delta_x(c,d)$. Hence, $\Delta_x(a,b) < \Delta_x(c',d') < \Delta_x(a',b') <  \Delta_x(c,d) < \Delta_x(a,b)$, and the statement follows.
\end{proof}
\begin{figure}[tb]
  \centering
  \includegraphics[width=0.3\textwidth]{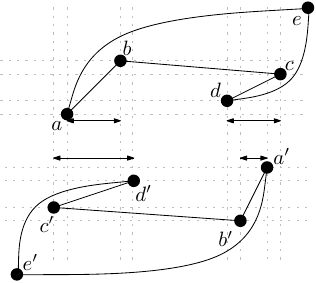}
  \caption{A bi-monotone bi-ordered graph that does not admit any straight-line bi-monotone drawing.}\label{fig:unstretchable}
\end{figure}

\section{Conclusions and open problems}\label{se:conclusions}

 We have studied the concept of windrose planarity of a graph, that is planarity where each neighbor of a vertex $v$ is constrained to lie in a specific quadrant of $v$. Besides its theoretical appeal and its practical applications, we studied this new notion of planarity because of its methodological relevance. Namely, graphs have been studied for centuries focusing both on their abstract topological nature and on their geometric representations. In this dichotomy, windrose planarity essentially has an intermediate position, since a windrose-planar graph, while still being an abstract topological structure, has already absorbed -- because of the relative positions among its adjacent vertices -- a fair amount of geometry. 

We have shown that if the combinatorial embedding of a graph is fixed, then windrose planarity can be tested in polynomial time. We also gave several contributions on the question whether a windrose-planar graph admits a straight-line (windrose-planar) drawing, which is probably the most studied geometric representation of graphs.

Several interesting problems arise. 
\begin{inparaenum}[(i)]
\item Does a windrose-planar graph always admit a straight-line windrose-planar drawing? The usual methods for constructing planar straight-line drawings~\cite{fpp-hdpgg-90,s-epgg-90} do not seem to be easily extended to cope with this.
\item In Section~\ref{se:straight-line} we have stated that every windrose-planar \qc graph whose blocks are either edges or planar $3$-trees admits a straight-line windrose-planar drawing, however our techniques might produce drawings whose vertices are placed arbitrarily close to each other. Are there algorithms for this family of graphs that, assuming a finite resolution rule, produce drawings with polynomial area?
\item  The constraints on the relative positions of the adjacent
  vertices can be relaxed to two adjacent quadrants. 
For example, for a vertex $u$ with neighbors $v$, $w$, and $z$ in counterclockwise order, one can specify that $v$ is either NE or NW of $u$, $w$ is SW, and $z$ is either NE or SW.
Is this problem still polynomial? 
We remark that this version of the problem allows to simultaneously visualize two partial orders defined by means of different edge sets, provided that
their union is planar. One would then color the edges to indicate       whether they belong to one or the other poset or both.
\end{inparaenum}

\bibliographystyle{abbrvurl}
\bibliography{bibliography}

\end{document}